\providecommand{\algorithmname}{Algorithm}
\newcommand{\bbR}{\mathbb R}
\newtheorem{theorem}{Theorem}[section]
\newtheorem{lem}{Lemma}[section]
\newtheorem{rem}{Remark}[section]
\newtheorem{prop}{Proposition}[section]
\newtheorem{cor}{Corollary}[section]
\newcounter{hypA}
\newenvironment{hypA}{\refstepcounter{hypA}\begin{itemize}
  \item[({\bf A\arabic{hypA}})]}{\end{itemize}}
\newcommand{\Exp}{\mathbb{E}}
\newcommand{\Con}{X_{l_d(t_{k-1}(d)),j}^{\prime,i}}
\date{}
\begin{document}

\begin{center}

{\Large \textbf{On the Stability of Sequential Monte Carlo Methods in High Dimensions}}

\vspace{0.5cm}

BY ALEXANDROS BESKOS$^{1}$, DAN CRISAN$^{2}$ \& AJAY JASRA$^{3}$

{\footnotesize $^{1}$Department of Statistical Science, University College London, London WC1E 6BT, UK.}\\
{\footnotesize E-Mail:\,}\texttt{\emph{\footnotesize alex@stats.ucl.ac.uk}}\\
{\footnotesize $^{2}$Department of Mathematics,
Imperial College London, London, SW7 2AZ, UK.}\\
{\footnotesize E-Mail:\,}\texttt{\emph{\footnotesize d.crisan@ic.ac.uk}}\\
{\footnotesize $^{3}$Department of Statistics \& Applied Probability,
National University of Singapore, Singapore, 117546, SG.}\\
{\footnotesize E-Mail:\,}\texttt{\emph{\footnotesize staja@nus.edu.sg}}
\end{center}

\begin{abstract}
We investigate the stability of a Sequential Monte Carlo (SMC) method applied to the problem of sampling from a target distribution on $\mathbb{R}^d$ for large $d$. It is well known \cite{bengtsson,bickel,snyder} that using a \emph{single}
importance sampling step one produces an approximation for the target that deteriorates as the dimension $d$ increases, unless the number of Monte Carlo samples $N$ increases at an exponential rate in $d$.  We show that this degeneracy can be avoided by introducing a sequence of artificial targets, starting from a `simple' density and moving to the one of interest, using an  SMC method to sample from the sequence (see e.g.~\cite{chopin1,delm:06,jarzynski,neal:01}).
Using this class of SMC methods with a fixed number of samples, one can produce an approximation for which  the effective sample size (ESS)  converges to a random variable $\varepsilon_N$
as  $d\rightarrow\infty$ with $1<\varepsilon_{N}<N$.
The convergence is achieved with a computational cost proportional to $Nd^2$. If  $\varepsilon_N\ll N$, we can  raise  its value by introducing a number of resampling steps, say $m$ (where $m$ is independent of $d$). In this case,  ESS  converges to a random variable $\varepsilon_{N,m}$
as  $d\rightarrow\infty$ and $\lim_{m\to\infty}\varepsilon_{N,m}=N$.
Also, we show that the  Monte Carlo error for estimating a fixed dimensional marginal expectation  is of order $\frac{1}{\sqrt{N}}$
\emph{uniformly} in $d$. The results imply that, in high dimensions,  SMC algorithms can efficiently control the variability of the importance sampling weights and estimate fixed dimensional marginals at a cost which is less than exponential in $d$
and indicate that, in high dimensions, resampling leads to a reduction in the Monte Carlo error and increase in the ESS. \\
\emph{Key words}: Sequential Monte Carlo, High Dimensions, Resampling, Functional CLT.\\
\textbf{AMS 2000 Subject Classification}: Primary 82C80; Secondary
60F99, 62F15
\end{abstract}
%

\section{Introduction}

Sequential Monte Carlo (SMC) methods can be described as a collection of techniques that approximate a sequence of distributions, known up-to a normalizing constant,  of increasing dimension. Typically, the complexity of these distributions is such that one cannot rely upon standard simulation approaches. SMC methods are applied in a wide variety of applications, including engineering, economics and biology, see \cite{doucet} and Chapter VIII in \cite{crisroz} for an overview.
They combine  importance sampling and resampling to approximate distributions. The idea is to introduce a sequence of proposal densities and sequentially simulate a collection of $N>1$ samples, termed particles, in parallel from these proposals. In most scenarios
it is not possible to use the distribution of interest as a proposal. Therefore, one must correct for the discrepancy between proposal
and target  via importance weights. In almost all cases of practical interest, the variance of these importance weights increases with algorithmic time (e.g.~\cite{kong}); this can, to some extent, be dealt with via resampling. This consists of sampling with replacement from the current samples using the weights and resetting them to $1/N$.
The variability of the weights is often measured by the effective sample size (\cite{liu})
and one often resamples when this drops below a threshold (dynamic-resampling).

There are a wide variety of convergence results for SMC methods, most of them concerned with the accuracy of the particle approximation of the distribution of interest as a function of $N$.
A less familiar context, related with this paper, arises in the case when the difference in the dimension of the consecutive densities becomes large. Whilst in filtering there are several studies on the stability of SMC as the time step grows (see e.g.\@ \cite{chopin,delmoral,delmoral_coal,douc,heine,kunsch}) they do not consider this latter scenario. In addition, there is a vast literature on the performance of high-dimensional Markov chain Monte Carlo (MCMC) algorithms e.g.~\cite{beskos,roberts1,roberts2}; our aim is to obtain a similar analytical understanding about the effect of dimension on SMC methods. The articles
 \cite{baer,bengtsson,bickel,snyder} have considered some  problems in this direction.
In \cite{bengtsson,bickel,snyder} the authors show that, for an i.i.d.~target, as the dimension of the state grows to infinity then one requires, for some stability properties, a number of particles which grows exponentially in dimension (or `effective dimension' in \cite{snyder}); the algorithm considered is standard importance sampling.
We discuss these results below.

\subsection{Contribution of the Article}

We investigate the stability of an  SMC algorithm in high dimensions used to produce a sample from a sequence of probabilities on a common state-space. This problem arises in a wide variety of applications including many encountered in Bayesian statistics.
For some Bayesian problems  the posterior density can be very `complex', that is, multi-modal and/or with high correlations between certain variables in the target (`static' inference, see e.g.~\cite{jasra_static}). A commonly used idea is to introduce a simple distribution, which is more straightforward to sample from, and to interpolate between this distribution and the actual posterior by introducing intermediate distributions from which one samples sequentially. Whilst this problem departs from the standard ones in the SMC literature, it is possible to construct SMC methods to approximate this sequence; see \cite{chopin1,delm:06,jarzynski,neal:01}.
The methodology investigated here is applied in many practical contexts: financial modelling \cite{jasra_adaptive}, regression \cite{shaefer} and approximate Bayesian inference \cite{delmoralabc}.
In addition, high-dimensional problems are of practical importance and normally more challenging than their low dimensional counterparts.
The question we look at is whether such algorithms, as the dimension $d$ of the distributions increases, are stable in any sense. That is, whilst $d$ is \emph{fixed} in practice, we would like
identify the computational cost of the algorithm for large $d$, to ensure that the algorithm is stable. Within the SMC context described here, we quote the following statement made in \cite{bickel}:
\begin{quote}
`Unfortunately, for truly high dimensional systems, we conjecture that the number of intermediate steps would be prohibitively large and render it practically infeasible.'
\end{quote}
One of the objectives of this article is to investigate the above statement from a theoretical perspective. In the sequel we show that for a certain class of target densities:
\begin{itemize}
\item{The SMC algorithm analyzed, with computational cost
$\mathcal{O}(Nd^2)$ is stable. Analytically, we prove that ESS converges weakly to a non-trivial  random variable $\varepsilon_N$
as $d$ grows and the number of particles
\emph{is kept fixed}.
In addition, we show that the Monte Carlo error of the estimation of fixed dimensional marginals, for a fixed number of particles $N$ is of order $1/ \sqrt N$ \emph{uniformly} in $d$.
The algorithm can \emph{include} dynamic resampling at some particular deterministic times. In this case,  the algorithm will resample  $\mathcal{O}(1)$ times.
Our results indicate that estimates will improve when one resamples.}
\item{The dynamically resampling SMC algorithm (with stochastic times and some minor modifications) will, with probability greater than or equal to $1-M/\sqrt{N}$, where $M$ is a positive constant independent of $N$,
also exhibit these properties.
}
\item{Our results are proved for $\mathcal{O}(d)$ steps in the algorithm. If one takes $\mathcal{O}( d^{1+\delta})$
steps with any $\delta >0$, then ESS converges in probability to $N$ and the Monte Carlo error is the same as with i.i.d.~sampling. If $-1<\delta<0$ then ESS will go-to zero (Corollary \ref{cor:ess=N}). That is, $\mathcal{O}(d)$ steps are a critical order for the stability of the algorithm
in our scenario.}
\end{itemize}

Our results show that in high-dimensional problems, one is able to control the variability of the weights; this is a minimum requirement for applying the algorithm.
They also establish that one can estimate \emph{fixed} dimensional marginals even as the dimension $d$ increases.
The results help to answer the point of \cite{bickel} quoted above. In the presence of a quadratic cost and increasingly sophisticated hardware (e.g.~\cite{lee_gpu}) SMC methods are in fact applicable, in the static context, in high-dimensions.
To support this, \cite{jasra_adaptive} presents further empirical evidence of the results presented here. In particular, it is shown there that SMC techniques are algorithmically stable  for models of dimension over 1000 with computer simulations that run in just over 1 hour. Hence the SMC techniques analyzed here can certainly be used for high-dimensional \emph{static} problems.
The analysis of such methods for time-dependent applications (e.g.\@ filtering) is subject to further research.

When there is no resampling, the proofs of our results rely on martingale array techniques. To show that the algorithm is stable we establish a functional central limit  theorem (fCLT) under easily verifiable conditions, for a triangular array of non-homogeneous Markov chains. This allows one to establish  the convergence in distribution of ESS (as~$d$ increases). The result also demonstrates the dependence of the algorithm on a mixture of asymptotic variances (in the Markov chain CLT) of the non-homogeneous kernels.


\subsection{Structure of the Article}

In Section \ref{sec:SMC} we discuss the SMC algorithm of interest and the class of target distributions we consider. In
Section~\ref{sec:main_result} we
show that ESS converges in distribution to a non-trivial random variable as $d\rightarrow\infty$
when the algorithm does not resample.
We also show that the Monte Carlo error of the estimation of fixed dimensional marginals, for a fixed number of particles $N$,
has an upper bound of the form  $M/\sqrt{N}$, where $M$ is independent of $d$.
We address the issue of resampling in Section \ref{sec:resampling_stability}, where it is shown that
as $d\rightarrow\infty$ any dynamically resampling SMC algorithm, using the deterministic ESS
(the expected ESS with one particle)
will resample $\mathcal{O}(1)$ times
and also exhibit convergence of the ESS and Monte Carlo error. In addition,
any dynamically resampling SMC algorithm, using the empirical ESS
(with some modification) will, with high probability, display the
same convergence of the ESS and Monte Carlo error.
In Section \ref{sec:examples} we verify the involved assumptions for a particular example. Finally, we conclude in Section \ref{sec:summary}
with some remarks on  $\mathcal{O}(d)$ steps being a critical order and ideas for future work.
Proofs are collected in the Appendix.

\subsection{Notation}\label{sec:notations}

Let $(E,\mathscr{E})$ be a measurable space and $\mathscr{P}(E)$
be the set of probability measures on $(E,\mathscr{E})$.
For a given function $V:E\mapsto[1,\infty)$ we denote by $\mathscr{L}_V$
the class of functions $f:E\mapsto\bbR$ for which
$$
|f|_V := \sup_{x\in E}\frac{|f(x)|}{V(x)} < +\infty\ .
$$
For two Markov kernels, $P$ and $Q$ on  $(E,\mathscr{E})$, we define the $V$-norm:
$$
|||P-Q|||_{V} := \sup_{x\in E}\frac{\sup_{|f|\leq V}|P(f)(x)-Q(f)(x)|}{V(x)}\ ,
$$
with $P(f)(x):=\int_E P(x,dy)f(y)$.
The notation
$$
\|P(x,\cdot)-Q(x,\cdot)\|_{V} := \sup_{|f|\leq V}|P(f)(x)-Q(f)(x)|
$$
is also used. For $\mu\in\mathscr{P}(E)$ and $P$ a Markov kernel on
$(E,\mathscr{E})$, we adopt the notation $\mu P(f) :=$ $\ \int_{E} \mu(dx) P(f)(x)$. In addition, $P^n(f)(x) := \int_{E^{n-1}}P(x,dx_1)P(x_1,dx_2)\times\cdots\times P(f)(x_{n-1})$.
$\mathscr{B}(\mathbb{R})$ is used to denote the class of Borel sets and $\mathcal{C}_b(\mathbb{R})$ the class of bounded continuous
$\mathscr{B}(\mathbb{R})$-measurable functions. Denote $\|f\|_{\infty}=\sup_{x\in \mathbb{R}}|f(x)|$. We will also define the $\mathbb{L}_{\varrho}$-norm, $\|X\|_{\varrho} =
\Exp^{1/\varrho}\,|X|^{\varrho}$, for
$\varrho\ge 1$ and denote by $\mathbb{L}_{\varrho}$ the space of random variables such that $\|X\|_{\varrho}<\infty$. For $d\geq 1$,
$\mathcal{N}_{d}(\mu,\Sigma)$ denotes the $d$-dimensional normal distribution with mean $\mu$
and covariance $\Sigma$; when $d=1$ the subscript is dropped. For any vector $(x_1,\dots,x_p)$, we denote by $x_{q:s}$
the vector $(x_q,\dots,x_s)$ for any $1\leq q\leq s\leq p$.
Throughout $M$ is used to denote a constant whose meaning may change, depending upon the context; any (important) dependencies are written as $M(\cdot)$.

\section{Sequential Monte Carlo}
\label{sec:SMC}

We wish to sample from a target distribution with density $\Pi$ on $\bbR^{d}$
with respect to Lebesgue measure, known up to a normalizing constant.
We introduce a sequence of `bridging' densities which start from an easy to sample target and  evolve toward $\Pi$. In particular, we will consider (e.g.\@ \cite{delm:06}):
\begin{equation}
\label{eq:aux}
\Pi_n(x) \propto \Pi(x)^{\phi_n}\ , \quad x\in \mathbb{R}^d\ ,
\end{equation}
for
$
0<\phi_0<\cdots < \phi_{n-1}< \phi_n < \cdots<\phi_p=1.
$
The effect of exponentiating with the small constant ${\phi_0}$ is that $\Pi(x)^{\phi_0}$ is much `flatter'
than~$\Pi$.
Other
choices of bridging  densities are possible and are discussed in the sequel.

One can sample from the sequence of densities using an SMC sampler,
which is, essentially, a Sequential Importance Resampling (SIR) algorithm or particle filter that targets the sequence of densities:
\begin{equation*}
\widetilde{\Pi}_n(x_{1:n}) = \Pi_n(x_n) \prod_{j=1}^{n-1} L_j(x_{j+1},x_j)
\end{equation*}
with domain $(\mathbb{R}^d)^n$ of dimension that increases with $n=1,\ldots,p$;
here, $\{L_n\}$ is a sequence of artificial backward Markov kernels that can, in principle, be arbitrarily selected.
The work in \cite{delm:06} motivates the selection of $\{L_n\}$ and characterizes the optimal kernel, in terms of minimizing the variance of the importance weights for SMC.
Let $\{K_n\}$ be a sequence of Markov kernels of invariant density $\{\Pi_n\}$ and $\Upsilon$ a distribution; assuming the weights appearing below are well-defined Radon Nikodym derivatives, the SMC algorithm
we will ultimately explore is the one defined in Figure \ref{tab:SMC}. It arises
when the backward Markov kernels $L_n$ are chosen as follows:
\begin{equation*}
L_n(x,x') = \frac{\Pi_{n+1}(x') K_{n+1}(x',x)}{\Pi_{n+1}(x)}\ .
\end{equation*}
With no resampling, the algorithm coincides with the annealed importance sampling in \cite{neal:01}.
%
\begin{figure}[!h]
\begin{flushleft}
\medskip
\hrule
\medskip
{\itshape
\begin{enumerate}
\item[\textit{0.}] Sample $X_0^1,\dots X_0^N$ i.i.d.\@ from $\Upsilon$ and compute the weights for each particle $i\in\{1,\dots,N\}$:
\begin{equation*}
w_0(x_0^i) = \frac{\Pi_0(x_0^i)}{\Upsilon(x_0^i)}\ .
\end{equation*}
Set $n=1$ and $l=0$.
\vspace{0.18cm}
\item[\textit{1.}]  If $n\leq p$, for each $i$
sample $X_n^i\mid x_{n-1}^i$ from $K_n$ and calculate the weights
\begin{equation*}
w_n(x_{l:n-1}^i)  =
\frac{\Pi_n(x_{n-1}^i)}{\Pi_{n-1}(x_{n-1}^i)}\,w_{n-1}(x_{l:n-2}^i)\
\end{equation*}
with the convention $x_{0:-1}^i\equiv x_{0}^{i}$.
Calculate the Effective Sample Size (ESS):
\begin{equation}
\label{eq:ess_def}
\textrm{ESS}_{(l,n)}(N) := \frac{\left(\sum_{i=1}^N w_n(x_{l:n-1}^i)\right)^2}{\sum_{i=1}^N w_n(x_{l:n-1}^i)^2} \ .
\end{equation}
If $ESS_{(l,n)}(N)<a$: \\
\hspace{0.3cm} resample $x_{n}^{1},\ldots x_n^{N}$ according to their normalised
weights
\begin{equation}
\label{eq:normed}
w_n(x_{l:n-1}^i)/\sum_{j=1}^{N}w_n(x_{l:n-1}^j)\ ;
\end{equation}
\hspace{0.3cm} set $l=n$; \\
\hspace{0.3cm} re-initialise the weights by setting $w_{n}(x_{l:n-1}^i)\equiv 1$, $1\le i \le N$;\\
\hspace{0.3cm} let $x_{n}^{1},\ldots x_n^{N}$ now denote the resampled particles.\\
Set $n=n+1$. \\
Return to the start of Step 1.

%
\end{enumerate} }
\medskip
\hrule
\medskip
\end{flushleft}
\vspace{-0.5cm}
\caption{The SMC algorithm analyzed in this article.}
\label{tab:SMC}
\end{figure}
For simplicity, we will henceforth  assume  that $\Upsilon\equiv \Pi_0$. 
It is remarked that, due to the results of \cite{bengtsson,bickel,snyder}, it appears that the cost of the population
Monte Carlo method of \cite{cappe} would increase exponentially with the dimension;
instead we will show that the `bridging' SMC sampler framework above will be of smaller cost.

ESS defined in (\ref{eq:ess_def}) is typically used to quantify the quality of SMC approximations associated to systems of weighted particles. It is a number between $1$ and $N$, and in general the larger the value, the better the approximation. Resampling is often performed when ESS falls below some pre-specified threshold  such as $a=N/2$. The  operation of resampling consists of sampling with replacement from the current set of particles via the normalized weights in (\ref{eq:normed}) and resetting the (unnormalized) weights to $1$.
There is a wide variety of resampling techniques and we refer the reader to \cite{doucet} for details; in this article we only consider the multinomial method just described above.

\subsection{Framework}
\label{sec:frame}
We will investigate the stability of the SMC algorithm in Figure \ref{tab:SMC} as $d\rightarrow\infty$.
To obtain analytical results we will need to simplify the structure of the algorithm (similarly
to MCMC results in high dimensions in e.g.\@ \cite{bedard, beskos, roberts1,roberts2}). In particular, we will consider
an i.i.d.~target:
\begin{equation}
 \label{eq:target}
\Pi(x) = \prod_{j=1}^{d}\pi(x_j)\ ; \quad \pi(x_j) =  \exp\{g(x_j)\}\ , \quad x_j\in \bbR\ ,
\end{equation}
for some $g:\bbR\mapsto \bbR$. In such a case all bridging densities are also i.i.d.:
\begin{equation*}
\Pi_{n}(x) \propto \prod_{j=1}^{d}\pi_{n}(x_j)\ ;\quad \pi_n(x_j) \propto \exp\{\phi_n\,g(x_j)\} \ .
\end{equation*}
It is remarked that this assumption is made for mathematical convenience (clearly, in an i.i.d.\@ context one could use standard  sampling schemes). Still, such a context allows for a rigorous mathematical treatment; at the same time
(and similarly to corresponding extensions of results for MCMC algorithms in high dimensions) one would expect that
the analysis we develop in this paper for i.i.d.\@ targets will also be relevant in practice for more general scenarios; see Section \ref{sec:summary} for some discussion.
A further assumption that will facilitate the mathematical analysis is to apply independent kernels along the different co-ordinates. That is, we will assume:
\begin{equation}
\label{eq:kernel}
K_{n}(x,dx') =  \prod_{j=1}^d k_{n}(x_j,dx_j') \ ,
\end{equation}
where each transition kernel $k_n(\cdot, \cdot)$ preserves
$\pi_n(x)$;
that is, $\pi_{n}k_{n}=\pi_{n}$.
Clearly, this also implies that $\Pi_n K_n = \Pi_n$.

 The stability of ESS will be investigated as $d\rightarrow \infty$: first
without resampling and then with resampling.
We study the case when one selects cooling constants $\phi_n$ and $p$ as below:
\begin{equation}
\label{eq:tune}
p = d\  ; \quad \phi_{n} (=\phi_{n,d})= \phi_0 + \frac{n(1-\phi_0)}{d}\ ,\quad  0\le n\le  d \ ,
\end{equation}
with $0<\phi_0<1$ given and fixed with respect to $d$. It will be shown that such a selection will indeed
provide a `stable' SMC algorithm as $d\rightarrow \infty$.
Note that $\phi_0>0$ as we will be concerned with probability densities on non-compact spaces.
\begin{rem}
Since $\{\phi_n\}$ will change with $d$, all elements of our SMC algorithm will also
depend on~$d$. We  use the double-subscripted notation $k_{n,d}$, $\pi_{n,d}$ when needed to emphasize the dependence of $k_n$ and $\pi_n$ on $d$, which ultimately, depend on $n,d$ through $\phi_{n,d}$.
Similarly, we will sometimes write $X_n(d)$, or $x_n(d)$, for the Markov chain involved in the specification of the SMC algorithm.
\end{rem}

\begin{rem}
\label{rem:switch}
Although the algorithm runs in discrete time, it will be convenient for the presentation of our results that we consider the successive steps
of the algorithm as placed on the continuous time interval $[\phi_0,1]$, incremented by
the annealing discrepancy $(1-\phi_0)/d$. We will use the mapping
\begin{equation}
\label{eq:switch}
l_d(t)=\bigl\lfloor d\bigl(\tfrac{t-\phi_0}{1-\phi_0}\bigr)\bigr\rfloor
\end{equation}
to switch between continuous time and discrete time.
Related to the above, it will be convenient to
consider the continuum of invariant densities and kernels on the whole of the time interval
$[\phi_0,1]$. So, we will set:
\begin{equation*}
\pi_s(x) \propto \pi(x)^{s} = \exp\{s\, g(x)\}\ , \quad s \in [\phi_0,1] \ .
\end{equation*}
That is, we will use the convention $\pi_n\equiv \pi_{\phi_n}$ with the subscript on the left running
on the set $\{1,2,\ldots,d\}$.
%
Accordingly, $k_s(\cdot,\cdot)$, with $s\in (\phi_0,1]$, will denote the transition kernel preserving $\pi_s$.
\end{rem}

\subsection{Conditions}
We state the conditions under which we derive our results.
 Throughout, we set $k_{\phi_0}\equiv\pi_{\phi_0}$ and $(E,\mathscr{E})=(\mathbb{R},\mathscr{B}(\mathbb{R}))$.
We assume that   $g(\cdot)$ is an upper bounded function.
In addition, we make the following assumptions for the continuum of kernels/densities:

\begin{hypA}
\label{hyp:A}
{\em Stability of $\{k_{s}\}$.}
\renewcommand{\labelitemii}{}
\begin{enumerate}[(i)]
\item {\em (One-Step Minorization).}\label{hyp:constraint_lambda_b_d}  We assume that there exists a
set $C\in \mathscr{E}$, a constant $\theta\in(0,1)$ and some $\nu\in \mathscr{P}(E)$ such that for each
$s\in(\phi_0,1]$ the set  $C$ is $(1,\theta,\nu)-$small with respect to $k_s$.
\item \label{hyp:K_driftcondition} (\emph{One-step Drift
Condition}). There exists $V:E\mapsto[1,\infty)$ with $\lim_{|x|\rightarrow \infty}V(x) = \infty$,
constants $\lambda<1$, $b<\infty$, and $C \in \mathscr{E}$ as specified in (i) such that for any $x \in E$
and $s\in(\phi_0,1]$:
\begin{equation*}
k_s\,V(x) \leq \lambda\,V(x) + b\,\mathbb{I}_{\,C}(x) \, .
\end{equation*}
In addition $\pi_{\phi_0}(V)<\infty$.
\item \label{hyp:level} (\emph{Level Sets}). Define $C_c=\{x:V(x)\leq c\}$ with $V$ as in (\ref{hyp:K_driftcondition}).
Then there exists a $c\in(1,\infty)$ such that for every $s\in(\phi_0,1)$, $C_c$ is a $(1,\theta,\nu)-$small set with respect to~$k_s$. In addition, condition (ii) holds for  $C=C_c$, and $\lambda$, $b$ (possibly depending on $c$) such that $\lambda + b/(1+c)<1$.
\end{enumerate}
\end{hypA}

\begin{hypA}
\label{hyp:B}
{\em Perturbations of $\{k_{s}\}$.} \vspace{0.3cm} \\
There exists an $M<\infty$ such that
for any $s,t\in(\phi_0,1]$
$$
|||k_s-k_t|||_V  \leq M\,|s-t|\ .
$$
\end{hypA}
The statement that $C$ is $(1,\theta,\nu)-$small w.r.t.\@ to $k_s$ means that $C$ is an one-step small set for the Markov kernel, with minorizing distribution $\nu$ and parameter $\theta\in(0,1)$ (see e.g.~\cite{meyn}).

Assumptions like (A\ref{hyp:A}) are fairly standard in the literature on adaptive MCMC (e.g.~\cite{andrieu1}).
Note though that the context in this paper is different.
For adaptive MCMC one typically has that the kernels will eventually converge to some limiting kernel.
Conversely, in our set-up, the $d$ bridges (resp.~kernels) in between $\pi_0$ (resp.~$k_{0}$) and $\pi_d$
(resp.~$k_{d}$) will effectively make up
a continuum of densities $\pi_s$ (resp.~kernels $k_{s}$), with $s\in[\phi_0,1]$, as $d$ grows to infinity.
The second assumption above differs from standard  adaptive MCMC  but will be verifiable in real contexts. Note that one could maybe relax our assumptions to, e.g.~sub-geometric ergodicity versus geometric ergodicity, at the cost of an increased level of complexity in the proofs.
It is also remarked that the assumption that $g$ is upper bounded is only
used in Section \ref{sec:resampling_stability}, when controlling the resampling times.
The assumptions  adopted in this article are certainly not weak, but still are very close to the weakest assumptions adopted in state-of-the-art
research on stability of SMC, see \cite{whiteley,whiteley2,whiteley1}.

\section{The Algorithm Without Resampling}
\label{sec:main_result}

We will now consider the case when we omit the resampling steps in the specification of our SMC algorithm in Figure \ref{tab:SMC}.
Critically, due to the i.i.d.~structure of the bridging densities $\Pi_n$ and the kernels $K_n$
each particle will evolve according to a $d$-dimensional
Markov chain $X_n$ made up of $d$ i.i.d.~one-dimensional Markov chains $\{X_{n,j}\}_{n=0}^{d}$, with
$j$  the co-ordinate index, evolving under the kernel $k_n$. Also, all particles move independently.

We consider first the stability of the terminal ESS, i.e.,
\begin{equation}
\label{eq:term_ess}
\textrm{ESS}_{(0,d)}(N)= \frac{\left(\sum_{i=1}^N w_d(x_{0:d-1}^i)\right)^2}{\sum_{i=1}^N w_d(x_{0:d-1}^i)^2}
\end{equation}
where, due to the i.i.d.~structure and our selection of $\phi_n$'s in (\ref{eq:tune}), we can rewrite:
\begin{equation}
\label{eq:weight}
w_d(x_{0:d-1})=  \exp\bigg\{\frac{(1-\phi_0)}{d}\sum_{j=1}^d \sum_{n=1}^{d}
g(x_{n-1,j})\bigg\}\ .
\end{equation}
It will be shown that under our set-up $\textrm{ESS}_{(0,d)}(N)$ converges in distribution to a non-trivial
variable and analytically characterise the limit; in particular we will have
%
$\lim_{d\rightarrow\infty}  \Exp\,[\,\textrm{ESS}_{(0,d)}(N)\,]  \in (1,N)$.
%

%
%
%
%

\subsection{Strategy of the Proof}


To demonstrate that the selection of the cooling sequence $\phi_n$ in (\ref{eq:tune}) will control ESS we  look at the behaviour of the sum:
\begin{equation}
\label{eq:or}
\frac{1-\phi_0}{d}\,\sum_{j=1}^d\sum_{n=1}^d
g(x_{n-1,j})
\end{equation}
appearing in the expression for the weights, $w_{d}(x_{0:d-1})$, in (\ref{eq:weight}).
Due to the nature of the expression for ESS one can re-center, so we can consider the limiting properties of:
\begin{equation}
\label{eq:key_quant}
\alpha(d) = \frac{1}{\sqrt{d}}\,\sum_{j=1}^d
\overline{W}_{j}(d)
\end{equation}
differing from (\ref{eq:or}) only in terms of a constant (the same for all particles),
where we have defined:
\begin{equation}
\label{eq:Wbar}
\overline{W}_{j}(d) = W_{j}(d) -\mathbb{E}\,[\,W_{j}(d)\,]
\end{equation}
and
\begin{equation}
\label{eq:W}
\quad W_{j}(d)= \frac{1-\phi_0}{\sqrt{d}}\,\sum_{n=1}^d \big\{ g(x_{n-1,j})-\pi_{n-1}(g)\big\} \ .
\end{equation}
As mentioned above,  the dynamics of the involved random variables correspond to those of $d$ independent scalar non-homogeneous Markov chains $\{X_{n,j}\}_{n=0}^{d}\equiv \{X_{n,j}(d)\}_{n=0}^{d}$ of initial position  $X_{0,j}\sim\pi_0$ and evolution according to the transition kernels $\{k_{n}\}_{1\leq n\leq d}$.
We will proceed as follows. For any fixed $d$ and co-ordinate $j$, $\{X_{n,j}\}_{n=0}^{d}$ is a non-homogeneous Markov chain of total length $d+1$. Hence, for fixed $j$, $\{X_{n,j}\}_{d,n}$
constitutes an array of non-homogeneous Markov chains.
We will thus be using the relevant theory to prove a central limit theorem (via a fCLT) for $\overline{W}_j(d)$ as $d\rightarrow\infty$.
Then, the independency of the $\overline{W}_{j}(d)$'s over $j$  will essentially provide a central limit theorem for $\alpha(d)$ as $d\rightarrow\infty$.

\subsection{Results and Remarks for ESS}\label{sec:rem_res}

Let $t\in[\phi_0,1]$ and recall the definition of $l_d(t)$ in (\ref{eq:switch}).
We define:
\begin{equation*}
S_{t} = \frac{1-\phi_0}{\sqrt{d}}\sum_{n=1}^{l_{d}(t)}\{ {g}(X_{n-1,j}) -\pi_{n-1}({g})\}\ .
\end{equation*}
Note that $S_1\equiv W_{j}(d)$.
Our fCLT considers the continuous linear interpolation:
\begin{equation*}
s_{d}(t) = S_{t} + \bigg(d\,\frac{t-\phi_0}{1-\phi_0}-l_d(t)
\bigg)[S_{t^{+}}-S_{t}] \ ,
\end{equation*}
where we have denoted
\begin{equation*}
S_{t^{+}} = \frac{1-\phi_0}{\sqrt{d}}\sum_{n=1}^{l_{d}(t)+1}\{ {g}(X_{n-1,j}) -\pi_{n-1}({g})\}\ .
\end{equation*}

\begin{theorem}[fCLT]
\label{th:fCLT}
Assume (A\ref{hyp:A}(i)(ii), A\ref{hyp:B}) and that $g\in\mathscr{L}_{V^{r}}$ for some $r\in[0,\tfrac{1}{2})$.
Then:
\begin{equation*}
s_{d}(t)
\Rightarrow \mathcal{W}_{\sigma_{\phi_0:t}^2} \ ,
\end{equation*}
where $\{\mathcal{W}_t\}$ is a Brownian motion and
\begin{equation}
\label{eq:sigma}
\sigma_{\phi_0:t}^2 = (1-\phi_0)\int_{\phi_0}^{t} \pi_u\,\big(\,\widehat{g}_u^2 - k_u(\widehat{g}_u)^2\,\big)du\ ,
\end{equation}
with $\widehat{g}_u(\cdot)$ the unique solution of the Poisson equation:
\begin{equation}
\label{eq:poisson}
g(x) -  \pi_u(g) =  \widehat{g}_u(x) - k_u(\widehat{g}_u)(x) \ .
\end{equation}
In particular,
$W_{j}(d)  \Rightarrow \mathcal{N}(0,\sigma_{\star}^2)$
with $\sigma^2_{\star}=\sigma_{\phi_0:1}^2$.
\label{th:clt}
\end{theorem}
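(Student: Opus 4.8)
\medskip

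\noindent\textbf{Proof proposal.} The plan is to reduce $s_d$ to a single martingale via the Poisson equation~(\ref{eq:poisson}), apply a martingale array functional CLT, and identify the limiting variance $\sigma^2_{\phi_0:t}$ of~(\ref{eq:sigma}) through an ergodic-averaging argument for the slowly-varying chain $\{X_{n,j}(d)\}_{n}$. The first task is a uniform-in-$s$ ergodic toolkit: from (A\ref{hyp:A}(i)(ii)) — a common small set $C$, minorization constant $\theta$, and geometric drift $(\lambda,b,V)$ shared by all $k_s$, $s\in(\phi_0,1]$ — one gets $V$-geometric ergodicity of every $k_s$ with rate and constant independent of $s$ (hence $\sup_s\pi_s(V)<\infty$, and, since $V^{\varrho}$ is again a drift function for $\varrho\le1$, $V^{\varrho}$-geometric ergodicity). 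Then for $g\in\mathscr L_{V^r}$ the series $\widehat g_u=\sum_{m\ge0}\bigl(k_u^m g-\pi_u(g)\bigr)$ converges in $\mathscr L_{V^r}$, is the unique solution of~(\ref{eq:poisson}) with $\pi_u(\widehat g_u)=0$, and $\sup_u|\widehat g_u|_{V^r}<\infty$. Subtracting two copies of~(\ref{eq:poisson}) gives $(I-k_s)(\widehat g_s-\widehat g_t)=(\pi_t-\pi_s)(g)+(k_s-k_t)\widehat g_t=:\psi_{s,t}$, a $\pi_s$-centred function; (A\ref{hyp:B}) yields $\|\pi_s-\pi_t\|_V\le M|s-t|$ (perturbation of the invariant law, again via the uniform ergodicity), hence $|\psi_{s,t}|_V\le M|s-t|$ and, inverting $I-k_s$ against a centred function, $|\widehat g_s-\widehat g_t|_V\le M|s-t|$. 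Combining this with the trivial bound $|(k_s-k_t)f|\le (k_s+k_t)|f|$ and interpolating also gives Hölder-in-$s$ control in $\mathscr L_{V^{2r}}$; all of this transfers to $h_u:=k_u(\widehat g_u^2)-(k_u\widehat g_u)^2\in\mathscr L_V$ and to $u\mapsto\pi_u(h_u)$.

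For the reduction, let $\mathcal F_n=\sigma(X_{0,j},\dots,X_{n,j})$. Since the chain moves under $k_{\phi_n}$ at step $n$, $\Exp[\widehat g_{\phi_{n-1}}(X_{n,j})\mid\mathcal F_{n-1}]=k_{\phi_n}\widehat g_{\phi_{n-1}}(X_{n-1,j})$, and evaluating~(\ref{eq:poisson}) at $u=\phi_{n-1}$ gives
\[
g(X_{n-1,j})-\pi_{n-1}(g)=\bigl[\widehat g_{\phi_{n-1}}(X_{n-1,j})-\widehat g_{\phi_{n-1}}(X_{n,j})\bigr]+\Delta M_n+R_n,
\]
with $\Delta M_n:=\widehat g_{\phi_{n-1}}(X_{n,j})-k_{\phi_n}\widehat g_{\phi_{n-1}}(X_{n-1,j})$ an $\mathcal F_n$-martingale increment and $R_n:=(k_{\phi_n}-k_{\phi_{n-1}})\widehat g_{\phi_{n-1}}(X_{n-1,j})$. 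Summing over $n\le l_d(t)$, the bracketed terms telescope up to the endpoint $\widehat g_{\phi_0}(X_{0,j})-\widehat g_{\phi_{l_d(t)}}(X_{l_d(t),j})$ plus at most $d$ corrections $\widehat g_{\phi_n}(X_{n,j})-\widehat g_{\phi_{n-1}}(X_{n,j})$, each bounded by $(M/d)V(X_{n,j})$ by the first task; likewise $|R_n|\le(M/d)V(X_{n-1,j})$ by (A\ref{hyp:B}). Iterating the drift from $\pi_{\phi_0}(V)<\infty$ gives $\sup_{d,n}\Exp V(X_{n,j}(d))<\infty$, so after multiplying by $\tfrac{1-\phi_0}{\sqrt d}$ every one of these pieces is $O_{\mathbb P}(d^{-1/2})$, uniformly in $t\in[\phi_0,1]$. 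Hence $s_d$ and the continuous interpolation of $\widetilde M_{l_d(t)}:=\tfrac{1-\phi_0}{\sqrt d}\sum_{n\le l_d(t)}\Delta M_n$ have the same weak limit, in the uniform topology (the limit being continuous).

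It remains to prove $\widetilde M\Rightarrow\mathcal W_{\sigma^2_{\phi_0:\cdot}}$ by a martingale array functional CLT (e.g.\ Hall--Heyde / Rebolledo). The Lindeberg condition follows from a Lyapunov bound: for $p>2$ with $rp\le1$ — admissible precisely because $r<\tfrac12$ — one has $\pi_s(V^{rp})\le\pi_s(V)<\infty$ uniformly, $\sup_{d,n}\Exp|\Delta M_n|^p<\infty$, and $d^{-p/2}\sum_{n\le d}\Exp|\Delta M_n|^p\to0$. For the predictable quadratic variation, $\Exp[(\Delta M_n)^2\mid\mathcal F_{n-1}]=k_{\phi_n}(\widehat g_{\phi_{n-1}}^2)(X_{n-1,j})-(k_{\phi_n}\widehat g_{\phi_{n-1}}(X_{n-1,j}))^2$, which by (A\ref{hyp:B}) and the interpolation estimates differs from $h_{\phi_{n-1}}(X_{n-1,j})$ by at most $Md^{-\delta}V(X_{n-1,j})$ with $\delta=(1-2r)/(1-r)>0$ (again where $r<\tfrac12$ is spent), hence contributes $O(d^{-\delta})$ to the rescaled sum. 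Then (a) the equilibration estimate $\sup_n\|\mathcal L(X_{n,j}(d))-\pi_{\phi_n}\|_V=O(1/d)$ — from the recursion $\|\mathcal L(X_{n,j})-\pi_{\phi_n}\|_V\le\rho\,\|\mathcal L(X_{n-1,j})-\pi_{\phi_{n-1}}\|_V+M/d$, using the uniform ergodicity together with $\|\pi_{\phi_n}-\pi_{\phi_{n-1}}\|_V=O(1/d)$, and $h_u\in\mathscr L_V$ uniformly — gives $\Exp[h_{\phi_{n-1}}(X_{n-1,j})]=\pi_{\phi_{n-1}}(h_{\phi_{n-1}})+O(1/d)$, and (b) geometric decay of covariances along the chain gives $\operatorname{Var}\bigl(\tfrac1d\sum_{n\le l_d(t)}h_{\phi_n}(X_{n-1,j})\bigr)=O(1/d)$. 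Since $\pi_u(h_u)=\pi_u(\widehat g_u^2)-\pi_u\bigl((k_u\widehat g_u)^2\bigr)=\pi_u\bigl(\widehat g_u^2-k_u(\widehat g_u)^2\bigr)$ and $u\mapsto\pi_u(h_u)$ is continuous, (a)--(b) and a Riemann-sum argument give, for each fixed $t$,
\[
\frac{(1-\phi_0)^2}{d}\sum_{n=1}^{l_d(t)}\Exp[(\Delta M_n)^2\mid\mathcal F_{n-1}]\ \xrightarrow[d\to\infty]{\ \mathbb P\ }\ (1-\phi_0)\!\int_{\phi_0}^{t}\!\pi_u\bigl(\widehat g_u^2-k_u(\widehat g_u)^2\bigr)\,du=\sigma^2_{\phi_0:t}.
\]
Finite-dimensional convergence in $t$, together with monotonicity of the bracket and continuity of the deterministic limit, upgrades this to functional convergence of the bracket, and the martingale fCLT then delivers $\widetilde M\Rightarrow\mathcal W_{\sigma^2_{\phi_0:\cdot}}$; hence $s_d(t)\Rightarrow\mathcal W_{\sigma^2_{\phi_0:t}}$, and at $t=1$ this reads $W_j(d)\Rightarrow\mathcal N(0,\sigma^2_\star)$.

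The bookkeeping in the last two steps is essentially routine once the first step is available, so I expect the main obstacle to be precisely that uniform-in-$s$ ergodic machinery: extracting $V$- and $V^r$-geometric ergodicity of the whole continuum $\{k_s\}$ with $s$-independent constants from the one-step drift/minorization of (A\ref{hyp:A}); establishing the Lipschitz- and Hölder-in-$s$ regularity of $\widehat g_u$, $h_u$ and $\pi_u(h_u)$ from the perturbation bound (A\ref{hyp:B}) while keeping every remainder within the integrability budget $\pi_s(V)<\infty$ (where the constraint $r<\tfrac12$ is used); and the equilibration estimate $\sup_n\|\mathcal L(X_{n,j}(d))-\pi_{\phi_n}\|_V=O(1/d)$ for the slowly-varying chain. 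It is exactly these facts that force the predictable quadratic variation onto the deterministic integral $\sigma^2_{\phi_0:t}$, which is the analytic heart of the theorem.
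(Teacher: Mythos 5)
Your proposal is correct and follows essentially the same route as the paper: a Poisson-equation decomposition of the partial sums into a martingale difference array plus $\mathcal{O}_{\mathbb{P}}(d^{-1/2})$ telescoping/perturbation residuals, a Lyapunov--Lindeberg bound via $(2+\delta)$ moments with $r(2+\delta)\le 1$ (exactly where $r<\tfrac12$ is spent), and convergence of the predictable quadratic variation to $\sigma^2_{\phi_0:t}$ before invoking a martingale-array fCLT (the paper uses Theorem 5.1 of \cite{berger}). The only difference is in the law of large numbers for the bracket: you use an equilibration estimate $\|\mathcal{L}(X_{n,j})-\pi_{\phi_n}\|_V=\mathcal{O}(1/d)$ plus covariance decay (where the claimed one-step $V$-norm contraction needs the standard $m$-step or equivalent-norm fix), whereas the paper simply applies its decomposition theorem a second time to $\varphi_s=k_s(\widehat g_s^2)-\{k_s(\widehat g_s)\}^2$ after verifying in Proposition \ref{prop:hyp} that this family is uniformly in $\mathscr{L}_{V^{2r}}$ and Lipschitz in $s$ --- both arguments are standard and yield the same conclusion.
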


We will now need the following result on the growth of $W_{j}(d)$.
\begin{lem}
\label{lem:growth_main}
Assume (A\ref{hyp:A}(i)(ii), A\ref{hyp:B}) and that $g\in\mathscr{L}_{V^{r}}$ for some $r\in[0,\tfrac{1}{2})$.
Then, there exists $\delta>0$ such that:
\begin{equation*}
\sup_d\,\mathbb{E}\,[\,|W_{j}(d)\,|^{2+\delta}\,] < \infty \ .
\end{equation*}
\end{lem}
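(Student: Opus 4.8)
The plan is to obtain a uniform-in-$d$ bound on $\mathbb{E}\,[\,|W_j(d)|^{2+\delta}\,]$ by exploiting the solution $\widehat{g}_u$ of the Poisson equation (\ref{eq:poisson}) to rewrite the additive functional $W_j(d)$ as a sum of a martingale difference array plus negligible boundary/remainder terms, and then applying a Burkholder--Davis--Gundy (BDG) inequality together with the drift condition (A\ref{hyp:A}(ii)). First I would fix $d$ and the coordinate $j$, and write $Y_n := X_{n-1,j}$ for brevity. Because the kernels change with $n$, I would work with the time-inhomogeneous Poisson solutions $\widehat{g}_n := \widehat{g}_{\phi_{n-1}}$ (writing $\widehat{g}_u$ for the continuum version from (\ref{eq:poisson})); existence, uniqueness and, crucially, a $V^r$-type bound $|\widehat{g}_n|_{V^r}\le M$ uniform in $n,d$ must be available (this follows from (A\ref{hyp:A}) via standard geometric-ergodicity estimates, cf.\@ \cite{meyn}, and the regularity (A\ref{hyp:B}) controls the $n$-dependence). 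Using $g(Y_n)-\pi_{n-1}(g) = \widehat{g}_n(Y_n) - k_n(\widehat{g}_n)(Y_n)$, I telescope: the sum $\sum_{n=1}^d\{g(Y_n)-\pi_{n-1}(g)\}$ decomposes as
$$
\sum_{n=1}^d \big(\widehat{g}_n(X_{n,j}) - k_n(\widehat{g}_n)(X_{n-1,j})\big) \;+\; \sum_{n=1}^{d}\big(\widehat{g}_n(X_{n-1,j}) - \widehat{g}_n(X_{n,j})\big),
$$
where (after re-indexing) the first sum is a martingale-difference sum with respect to the natural filtration $\mathscr{F}_n=\sigma(X_{0,j},\dots,X_{n,j})$, while the second sum telescopes up to terms of the form $\widehat{g}_{n+1}-\widehat{g}_n$ evaluated at $X_{n,j}$, plus the two endpoint terms $\widehat{g}_1(X_{0,j})$ and $\widehat{g}_d(X_{d,j})$.

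Next I would bound each piece after multiplication by the prefactor $(1-\phi_0)/\sqrt d$. For the martingale part $M_d := \frac{1-\phi_0}{\sqrt d}\sum_{n=1}^d \xi_n$ with $\xi_n = \widehat{g}_n(X_{n,j}) - k_n(\widehat{g}_n)(X_{n-1,j})$, BDG gives $\mathbb{E}|M_d|^{2+\delta} \le M\, \mathbb{E}\big[\big(\frac{1}{d}\sum_{n=1}^d \xi_n^2\big)^{1+\delta/2}\big]$; by Jensen (or the power-mean inequality) this is at most $\frac{M}{d}\sum_{n=1}^d \mathbb{E}|\xi_n|^{2+\delta}$, and since $|\xi_n|\le M\,(V^r(X_{n,j}) + k_n V^r(X_{n-1,j}))\le M\,V^r(X_{n,j}) + M\,V^r(X_{n-1,j})$ with $r<1/2$, so that $r(2+\delta)<1$ for $\delta$ small enough, it suffices to show $\sup_n \mathbb{E}\,[\,V^{r(2+\delta)}(X_{n,j})\,]<\infty$ uniformly in $d$. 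This last bound is exactly what iterating the drift inequality $k_s V\le \lambda V + b\mathbb{I}_C$ yields, since it forces $\sup_n k_1 k_2\cdots k_n V(x) \le V(x) + b/(1-\lambda)$, and $\pi_0(V)<\infty$ controls the initial distribution; raising to the power $r(2+\delta)<1$ and applying Jensen gives the $L^{r(2+\delta)}$-bound on $V$, hence the claim. The telescoping/boundary terms are handled the same way: the endpoint terms contribute $\frac{1-\phi_0}{\sqrt d}\,|\widehat{g}(X_{\cdot,j})|$ which is $O(d^{-1/2})$ in $L^{2+\delta}$ by the same $V^r$-moment bound, and the ``increment'' sum $\frac{1-\phi_0}{\sqrt d}\sum_n (\widehat{g}_{n+1}-\widehat{g}_n)(X_{n,j})$ is controlled using $|||k_{\phi_n}-k_{\phi_{n+1}}|||_V \le M|\phi_{n+1}-\phi_n| = M(1-\phi_0)/d$ from (A\ref{hyp:B}), which propagates to a bound $|\widehat{g}_{n+1}-\widehat{g}_n|_{V^{r'}}\le M/d$ (some $r'\in[r,1/2)$) on the Poisson solutions; hence this term is $O(\sqrt d \cdot d^{-1} \cdot \text{(bounded)}) = O(d^{-1/2})$ in $L^{2+\delta}$. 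Summing the three contributions via Minkowski gives $\sup_d \|W_j(d)\|_{2+\delta} < \infty$.

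The main obstacle, and where most of the technical work lies, is establishing the two quantitative statements about the Poisson solutions: (a) the uniform-in-$(n,d)$ bound $\sup_n |\widehat{g}_{\phi_n}|_{V^r}\le M$, and (b) the Lipschitz-in-$s$ estimate $|\widehat{g}_s - \widehat{g}_t|_{V^{r'}}\le M|s-t|$. Part (a) requires a careful version of the standard fact that under a drift-and-minorization pair the resolvent/Poisson solution inherits a $V^r$-bound --- the subtlety being that one must track constants uniformly over the whole family $\{k_s\}_{s\in(\phi_0,1]}$, which is precisely why (A\ref{hyp:A}(i)(ii)) is stated uniformly in $s$ and (A\ref{hyp:A}(iii)) pins down the level-set structure. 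Part (b) needs a perturbation argument: writing $\widehat{g}_s = \sum_{m\ge 0}(k_s^m g - \pi_s(g))$ (or via the fundamental kernel $(I - k_s + \Pi_s)^{-1}$), one differences in $s$ and uses a telescoping identity $k_s^m - k_t^m = \sum_{i} k_s^i (k_s - k_t) k_t^{m-1-i}$ together with the geometric contraction in $V$-norm and (A\ref{hyp:B}); the geometric decay of the contraction coefficients makes the resulting series converge and produces the $O(|s-t|)$ bound, at the mild cost of passing from $V^r$ to a slightly larger power $V^{r'}$ still with $r'<1/2$. Once (a) and (b) are in hand, the martingale decomposition plus BDG plus the iterated drift bound on $\mathbb{E}\,V^{r(2+\delta)}(X_{n,j})$ close the argument with only routine estimates.
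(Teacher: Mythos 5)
Your proposal is correct and takes essentially the same route as the paper: the paper proves the lemma by invoking its Poisson-equation-based martingale-plus-remainder decomposition (Theorem \ref{th:decompose}), whose proof rests on exactly the ingredients you assemble inline --- the uniform-in-$s$ $V^r$-bound on the Poisson solution, the Lipschitz-in-$s$ perturbation bound obtained from (A\ref{hyp:B}) via the telescoping identity for $k_s^m-k_t^m$, the iterated drift bound on $\mathbb{E}\,[\,V^{r(2+\delta)}(X_{n,j})\,]$, and a Burkholder--Davis--Gundy-type inequality for the martingale term. The only difference is organisational: the paper packages these steps into Theorem \ref{th:decompose} and Lemma \ref{lem:growth} and then deduces the moment bound in two lines, whereas you carry out the same decomposition and estimates directly.
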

\begin{proof}
This follows from the decomposition in Theorem \ref{th:decompose} and the following inequality:
$$
\mathbb{E}\,[\,|W_{j}(d)|^{2+\delta}\,] \leq \bigg(\frac{1}{\sqrt{d}}\bigg)^{2+\delta}M(\delta)
\big(\, \mathbb{E}\,[\,|M_{0:d-1}|^{2+\delta}\,]+
\mathbb{E}\,[\,|R_{0:d-1}|^{2+\delta}\,]
\,\big)\ .
$$
Applying the growth bounds in Theorem \ref{th:decompose} we get that
the remainder term $\mathbb{E}\,[\,|R_{0:d-1}|^{2+\delta}\,]$ is controlled as $\pi_{\phi_0}(V^r)<\infty$ (due to $r\in[0,\tfrac{1}{2})$).
The martingale array term  $\mathbb{E}\,[\,|M_{0:d-1}|^{2+\delta}\,]$ is upper bounded by $Md^{(2+\delta)/2}$, which allows us to conclude.
\end{proof}

One can now obtain the general result.
\begin{theorem}\label{theo:main_result}
Assume (A\ref{hyp:A}(i)(ii), A\ref{hyp:B}). Suppose also
that $g\in\mathscr{L}_{V^r}$ for some $r\in[0,\tfrac{1}{2})$.
Then, for any fixed $N>1$, $\textrm{\emph{ESS}}_{(0,d)}(N)$ converges in distribution to
$$
\varepsilon_N := \frac{[\sum_{i=1}^N e^{X_i}]^2}{\sum_{i=1}^N e^{2 X_i}}
$$
where $X_i \stackrel{i.i.d.}{\sim}\mathcal{N}(0,\sigma^2_{\star})$ for $\sigma^2_{\star}$ specified in Theorem \ref{th:fCLT}. In particular,
\begin{equation}
\lim_{d\rightarrow\infty}
\mathbb{E}\,\big[\,\textrm{\emph{ESS}}_{(0,d)}(N)\,\big]
=\mathbb{E}\bigg[\frac{[\sum_{i=1}^N e^{X_i}]^2}{\sum_{i=1}^N e^{2 X_i}}\bigg].
\label{eq:limitingess}
\end{equation}
\end{theorem}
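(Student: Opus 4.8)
The plan is to exploit the scale-invariance of ESS together with the i.i.d.\ structure over co-ordinates to reduce $\textrm{ESS}_{(0,d)}(N)$ to a continuous function of $N$ i.i.d.\ scalar random variables, and then to establish a classical CLT for those variables. First I would write, for particle $i$, $\log w_d(x_{0:d-1}^i) = c_d + \alpha^i(d)$, where $c_d := (1-\phi_0)\sum_{n=1}^d \pi_{n-1}(g) + \sqrt{d}\,\mathbb{E}[W_1(d)]$ does not depend on $i$, and $\alpha^i(d) = \tfrac{1}{\sqrt d}\sum_{j=1}^d \overline{W}_j^{\,i}(d)$ is the quantity in (\ref{eq:key_quant}) for the $i$-th particle; this is just the decomposition $g(x_{n-1,j}) = \{g(x_{n-1,j})-\pi_{n-1}(g)\} + \pi_{n-1}(g)$ summed over $n$ and $j$. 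Since ESS is invariant under multiplying all weights by the common factor $e^{c_d}$, we obtain $\textrm{ESS}_{(0,d)}(N) = \big(\sum_{i=1}^N e^{\alpha^i(d)}\big)^2 \big/ \sum_{i=1}^N e^{2\alpha^i(d)}$, and since all particles evolve independently the $\alpha^i(d)$, $i=1,\dots,N$, are i.i.d.

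The core step is to prove $\alpha(d) \Rightarrow \mathcal{N}(0,\sigma_\star^2)$ as $d\to\infty$. For fixed $d$ the summands $\overline{W}_j(d)$, $j=1,\dots,d$, are i.i.d.\ with mean zero, so this is a CLT for a row-wise i.i.d.\ triangular array. Two ingredients are needed, and both are supplied by Lemma \ref{lem:growth_main}: the uniform bound $\sup_d \mathbb{E}[|W_j(d)|^{2+\delta}]<\infty$ makes $\{W_j(d)^2\}_d$ uniformly integrable, so the convergence $W_j(d)\Rightarrow\mathcal{N}(0,\sigma_\star^2)$ of Theorem \ref{th:fCLT} upgrades to $\mathrm{Var}(\overline{W}_j(d)) = \mathrm{Var}(W_j(d)) \to \sigma_\star^2$; and the same bound gives Lyapunov's condition, since $\big(d\,\mathrm{Var}(W_1(d))\big)^{-(2+\delta)/2}\sum_{j=1}^d \mathbb{E}[|\overline{W}_j(d)|^{2+\delta}] = \mathcal{O}(d^{-\delta/2})\to 0$ (using that $\mathrm{Var}(W_1(d))$ is eventually bounded away from $0$ when $\sigma_\star^2>0$; the degenerate case $\sigma_\star^2=0$ is handled directly, since then $\mathrm{Var}(\alpha(d))=\mathrm{Var}(W_1(d))\to 0$, forcing $\alpha(d)\to 0$ in probability). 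Lyapunov's CLT together with Slutsky's lemma then yields $\alpha(d)\Rightarrow\mathcal{N}(0,\sigma_\star^2)$.

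With the one-particle statement in hand, independence of the $N$ particles gives the joint convergence $(\alpha^1(d),\dots,\alpha^N(d)) \Rightarrow (X_1,\dots,X_N)$ with $X_i$ i.i.d.\ $\mathcal{N}(0,\sigma_\star^2)$. The map $(a_1,\dots,a_N)\mapsto \big(\sum_i e^{a_i}\big)^2\big/\sum_i e^{2a_i}$ is continuous on $\mathbb{R}^N$, so the continuous mapping theorem gives $\textrm{ESS}_{(0,d)}(N)\Rightarrow \varepsilon_N$. Finally, $\textrm{ESS}_{(0,d)}(N)$ takes values in the bounded interval $[1,N]$, so weak convergence implies convergence of means, which is (\ref{eq:limitingess}).

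I expect the main obstacle to be the second paragraph: turning the individual CLT for $W_j(d)$ (which controls one co-ordinate of one particle) into a CLT for the normalised sum over $d$ co-ordinates requires genuine control of the second moments of $W_j(d)$ uniformly in $d$ — convergence in distribution alone does not suffice — and this is precisely where the $(2+\delta)$-moment bound of Lemma \ref{lem:growth_main} (hence the restriction $r<\tfrac12$ and the martingale-plus-remainder decomposition behind it) does the work.
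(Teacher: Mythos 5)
Your proposal is correct and follows essentially the same route as the paper: re-centre the log-weights (ESS is invariant to the common constant), prove a CLT for the row-wise i.i.d.\ triangular array $\{\overline{W}_j(d)\}$ using Theorem \ref{th:fCLT} for the limit variance and Lemma \ref{lem:growth_main} for the uniform $(2+\delta)$-moment control, then use independence over particles, the continuous mapping theorem and boundedness of ESS. The only cosmetic difference is that you verify Lyapunov's condition (and treat $\sigma^2_{\star}=0$ separately) where the paper checks the Lindeberg condition directly via the same moment bound; these are interchangeable here.
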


\begin{proof}
We will prove that $\alpha(d)$, as defined in \eqref{eq:key_quant}, converges in distribution to $\mathcal{N}(0,\sigma^2_{\star})$.
The argument is standard: it suffices to check that the random variables $\overline{W}_j(d)$, $j=1,...,d$, satisfy the Lindeberg condition and that their second moments converge (see e.g.\@ an adaptation of Theorem~2 of \cite[pp.334]{shiryaev}). To this end, note that $\{\overline{W}_j(d)\}_{d,j}$ form a triangular array of independent variables of zero expectation
across each row. Let
\begin{equation*}
S_{d}^2=\frac{1}{d}\,
\sum_{j=1}^d\mathbb{E}\,[\,\overline{W}_j(d)^2\,]\equiv \mathbb{E}\,[\,\overline{W}_1(d)^2\,]\ ,
\end{equation*}
the last equation following from $\overline{W}_j(d)$ being i.i.d.\@ over $j$. Now, Theorem \ref{th:clt}
gives that $W_1(d)$ converges in distribution to $\mathcal{N}(0,\sigma^{2}_{\star})$ for
$d\rightarrow\infty$. Lemma \ref{lem:growth_main} implies that (e.g.~Theorem 3.5 of \cite{bill:99})
also the first and second moments of ${W}_1(d)$ converge to $0$ and $\sigma^{2}_{\star}$ respectively;
we thus obtain:
\begin{equation}
\label{eq:first_cond}
\lim_{d\rightarrow\infty} S_d^2
=\sigma_{\star}^2 \ .
\end{equation}
We consider also the Lindeberg condition, and for each $\epsilon>0$ we have:
\begin{equation}
\lim_{d\rightarrow\infty}\frac{1}{d}\sum_{j=1}^d \,
\Exp\,[\,\overline{W}_j(d)^2\,\mathbb{I}_{|\overline{W}_j(d)|\geq \epsilon \sqrt{d}}\,]= 0
\label{eq:second_cond}
\end{equation}
a result directly implied again from Lemma \ref{lem:growth_main}. Therefore, by Theorem 2 of \cite[pp.334]{shiryaev},
$\alpha(d)$ converges in distribution to $\mathcal{N}(0,\sigma^2_{\star})$.
In particular we have proved that:
\begin{equation*}
(\alpha_1(d),\dots,\alpha_N(d)) \Rightarrow \mathcal{N}_N(0,\sigma^2_{\star} I_N)\ ,
\end{equation*}
where the subscripts denote the indices of the particles.
The result now follows directly after noticing that
\begin{equation*}
\mathrm{ESS}_{(0,d)}=\frac{[\sum_{i=1}^N e^{\alpha_i(d)}]^2}{\sum_{i=1}^N e^{2\alpha_i(d)}}
\end{equation*}
and the mapping $(\alpha_1,\alpha_2,\ldots,\alpha_N)\mapsto
 \frac{[\sum_{i=1}^N e^{\alpha_i}]^2}{\sum_{i=1}^N e^{2\alpha_i}}$ is bounded and continuous.
\end{proof}
\subsection{Monte Carlo Error}

We have shown that the choice of bridging steps as in (\ref{eq:tune}) stabilises  ESS in
high dimensions.
The error in the estimation of expectations, which can be of even more practical interest than ESS, is now considered.
In particular we look at expectations associated with finite-dimensional marginals of the target distribution.
Recall the definition of the weight of the $i$-th particle $w_d(x_{0:d-1}^{i})$ from (\ref{eq:weight}),
for $1\le i\le N$.
In order to consider the Monte Carlo error, we use the result below, which is of some interest in its own right.
\begin{prop}
\label{prop:marginals}
Assume (A\ref{hyp:A}(i)(ii), A\ref{hyp:B}).
 and let $\varphi\in\mathscr{L}_{V^r}$ for $r\in[0,1]$.
Then we have:
\begin{equation*}
\lim_{d\rightarrow \infty}|\,\mathbb{E}\,[\,\varphi(X_{d,1})\,]-\pi(\varphi)\,| = 0\ .
\end{equation*}
\end{prop}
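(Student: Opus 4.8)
The plan is to exploit the telescoping structure of the time-inhomogeneous Markov chain describing a single coordinate, together with (a) uniform geometric forgetting of the kernels $\{k_s\}$, which is the standard consequence of A\ref{hyp:A}(i)(ii), and (b) Lipschitz continuity of $s\mapsto\pi_s$, which follows from the exponential form $\pi_s\propto e^{sg}$ and $g$ being upper bounded.

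First I would record that, by the i.i.d.\ product structure of $\Pi_n$ and $K_n$ and the convention $\Upsilon\equiv\Pi_0$, the marginal $X_{d,1}$ has law $\eta_d:=\pi_{\phi_0}\,k_{\phi_1}k_{\phi_2}\cdots k_{\phi_d}$, obtained from $\eta_0=\pi_{\phi_0}$ by $\eta_n=\eta_{n-1}k_{\phi_n}$, whereas $\pi=\pi_1=\pi_{\phi_d}$ since $\phi_d=1$. Writing $e_n:=\eta_n-\pi_{\phi_n}$ as a signed measure and using $\pi_{\phi_n}k_{\phi_n}=\pi_{\phi_n}$, one gets $e_n=e_{n-1}k_{\phi_n}+(\pi_{\phi_{n-1}}-\pi_{\phi_n})k_{\phi_n}$ with $e_0=0$; unrolling and testing against $\varphi$ gives
\[
\mathbb{E}[\varphi(X_{d,1})]-\pi(\varphi)=e_d(\varphi)=\sum_{n=1}^d\big(\pi_{\phi_{n-1}}(g_n)-\pi_{\phi_n}(g_n)\big),\qquad g_n:=k_{\phi_n}k_{\phi_{n+1}}\cdots k_{\phi_d}(\varphi),
\]
with the empty product ($n=d$) read as the identity.

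Then I would estimate each summand by producing two small factors. Since $\pi_{\phi_{n-1}},\pi_{\phi_n}$ are probability measures, subtracting the constant $c_n:=\pi_{\phi_n}(g_n)$ yields $\pi_{\phi_{n-1}}(g_n)-\pi_{\phi_n}(g_n)=(\pi_{\phi_{n-1}}-\pi_{\phi_n})(g_n-c_n)$. For the first factor: A\ref{hyp:A}(i)(ii) gives (via the usual drift/minorization argument, Jensen supplying a uniform drift for $V^r$ when $r\le 1$, so that $V^r\le V$) a rate $\rho\in(0,1)$ and a uniform $V$-geometric forgetting bound for the inhomogeneous semigroup $k_{\phi_n}\cdots k_{\phi_d}$; applied to $\varphi\in\mathscr{L}_{V^r}$ via $g_n(x)-c_n=\int(g_n(x)-g_n(y))\pi_{\phi_n}(dy)$ and $\sup_s\pi_s(V)<\infty$, this forces $g_n$ to be nearly constant, namely $|g_n-c_n|_V\le M\rho^{\,d-n+1}$. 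For the second factor: since $\pi_s\propto e^{sg}$ one has $\tfrac{d}{ds}\pi_s(h)=\pi_s(hg)-\pi_s(h)\pi_s(g)$, bounded by $M\,|h|_V$ uniformly in $s$ (using $g$ upper bounded and $\sup_s\pi_s(V|g|)<\infty$), so $|(\pi_{\phi_{n-1}}-\pi_{\phi_n})(h)|\le M\,|\phi_{n-1}-\phi_n|\,|h|_V=\tfrac{M(1-\phi_0)}{d}\,|h|_V$. Taking $h=g_n-c_n$ gives $|\pi_{\phi_{n-1}}(g_n)-\pi_{\phi_n}(g_n)|\le \tfrac{M}{d}\,\rho^{\,d-n+1}$.

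Summing the geometric series then yields $|\mathbb{E}[\varphi(X_{d,1})]-\pi(\varphi)|\le \tfrac{M}{d}\sum_{n=1}^d\rho^{\,d-n+1}\le \tfrac{M}{d(1-\rho)}\to0$ (in fact at rate $\mathcal{O}(1/d)$), which proves the claim. The part I expect to be the real obstacle is the uniform $V$-geometric forgetting of the inhomogeneous semigroup used in the first factor: this is the standard consequence of uniform drift plus minorization (Douc--Moulines--Rosenthal / Fort--Moulines--Priouret type arguments) and is in any case needed elsewhere to control the Poisson-equation solutions behind Theorem \ref{th:decompose}, so it can be invoked here. A secondary technical point is the uniform integrability bound $\sup_s\pi_s(V|g|)<\infty$ feeding the Lipschitz estimate on $s\mapsto\pi_s$, which is slightly stronger than the stated $\pi_{\phi_0}(V)<\infty$ but holds within the framework and is immediate for the example of Section \ref{sec:examples}.
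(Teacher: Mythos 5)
Your route is genuinely different from the paper's. The paper proves Proposition \ref{prop:marginals} in one line by specializing Proposition \ref{prop:prop} with $s(d)\equiv\phi_0$, $t(d)\equiv 1$: there the kernel is frozen at a time $u_d$ near the end, the last $m_d=\mathcal{O}(d^{c})$, $c<\tfrac{1}{2}$, inhomogeneous transitions are replaced by $m_d$ iterates of the single kernel $k_{u_d}$ at a swap cost of order $m_d^{2}/d$ (via (A\ref{hyp:B}) and the drift bounds), the \emph{homogeneous} $V^{r}$-geometric ergodicity of $k_{u_d}$ is used over $m_d\to\infty$ steps, and $|\pi_{u_d}(\varphi)-\pi(\varphi)|\to 0$ is obtained by dominated convergence from the upper-boundedness of $g$ and $\pi_{\phi_0}(V)<\infty$. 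Your telescoping of the marginal law against the moving invariant measures, combined with a forgetting estimate for $g_n=k_{\phi_n}\cdots k_{\phi_d}(\varphi)$ and a Lipschitz estimate for $s\mapsto\pi_s$, is a legitimate alternative skeleton and would even deliver a rate $\mathcal{O}(1/d)$ rather than bare convergence.

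As written, though, two inputs go beyond the stated hypotheses, and one of them is a genuine gap. The ``uniform $V$-geometric forgetting of the inhomogeneous semigroup'' is \emph{not} a standard consequence of (A\ref{hyp:A}(i)(ii)) alone: one-step drift and minorization for each $k_s$ give geometric ergodicity of each \emph{frozen} kernel, but forgetting for the product $k_{\phi_n}\cdots k_{\phi_d}$ in a $V$-weighted norm requires either the level-set condition (A\ref{hyp:A}(iii)) (to obtain a contraction coefficient in a common weighted norm, uniform in $s$) or the Douc--Moulines--Rosenthal machinery which also exploits (A\ref{hyp:B}); the paper itself only invokes such a bound later, in the proof of Proposition \ref{lem:extra_new_result}, and does so under the full (A\ref{hyp:A}--\ref{hyp:B}). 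Your justification that this forgetting is ``in any case needed'' for the Poisson-equation bounds behind Theorem \ref{th:decompose} is not accurate: those use only the ergodicity of each frozen $k_s$, and the freeze-and-perturb construction of Proposition \ref{prop:prop} is precisely how the paper avoids the inhomogeneous forgetting under (A\ref{hyp:A}(i)(ii), A\ref{hyp:B}) only. The second, milder point you flag yourself: the bound $|(\pi_{\phi_{n-1}}-\pi_{\phi_n})(h)|\le (M/d)\,|h|_{V}$ needs integrability of the type $\sup_{s}\pi_s(V|g|)<\infty$, which is not implied by $\pi_{\phi_0}(V)<\infty$ together with $g$ upper bounded; the paper needs only continuity of $s\mapsto\pi_s(\varphi)$, which dominated convergence supplies for free. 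So the shape of your argument is sound and potentially sharper, but to close it you must either add (A\ref{hyp:A}(iii)) (or a quotable inhomogeneous-forgetting lemma) together with the extra integrability, or replace your first factor by the paper's freeze-the-kernel step.
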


\begin{proof}
This follows from Proposition \ref{prop:prop} in the Appendix when choosing time sequences $s(d)\equiv\phi_0$
and $t(d)\equiv 1$.
\end{proof}

\begin{rem}
The above result is interesting as it suggests one can run an alternative algorithm that just samples a collection of independent  particles
through a grid of values of the annealing parameter and average the values of the function of interest. However, it is not clear how such an algorithm can be validated in practice (that is how many steps one should take for a finite time algorithm) and is of interest
in the scenario where one fixes $d$ and allows the time-steps to grow; see \cite{whiteley}.
In our context, we are concerned with the performance
of the estimator that one would use for fixed $d$ (and hence a finite number of steps in practice) from the SMC sampler in high-dimensions; it is not at all clear
a-priori that this will stabilize with a computational cost $\mathcal{O}(Nd^2)$ and if it does, how the error behaves.
\end{rem}

The Monte Carlo error result now follows; recall $\|\cdot\|_{\varrho}$ is defined in Section \ref{sec:notations}:

\begin{theorem}\label{theo:mc_error}
Assume (A\ref{hyp:A}(i)(ii), A\ref{hyp:B}) with $g\in\mathscr{L}_{V^r}$ for some $r\in[0,\tfrac{1}{2})$. Then
for any $1\leq \varrho <\infty$
there exists a constant $M=M(\varrho)<\infty$ such that for any $N\geq 1$,  $\varphi\in\mathcal{C}_b(\mathbb{R})$
\begin{equation*}
\lim_{d\rightarrow\infty}
\bigg\|\sum_{i=1}^N
\frac{w_d(X_{0:d-1}^{i})}{\sum_{l=1}^{N}w_d(X_{0:d-1}^{l})}
\varphi(X_{d,1}^i)-\pi(\varphi)\bigg\|_{\varrho}\leq
\frac{M(\varrho)\|\varphi\|_{\infty}}{\sqrt{N}}\big[\,e^{\frac{\sigma_{\star}^2}{2}\varrho(\varrho-1)} + 1\,\big]^{1/\varrho}.
\end{equation*}
\end{theorem}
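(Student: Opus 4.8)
The strategy is to combine the fCLT (Theorem \ref{th:fCLT}) with Proposition \ref{prop:marginals} and a uniform integrability / $\mathbb{L}_\varrho$-boundedness argument for the self-normalized weights. First I would fix $N$ and write the self-normalized estimator as
$$
\Psi_{N,d} := \sum_{i=1}^N \frac{w_d(X_{0:d-1}^i)}{\sum_{l=1}^N w_d(X_{0:d-1}^l)}\varphi(X_{d,1}^i) - \pi(\varphi)
= \sum_{i=1}^N \frac{e^{\alpha_i(d)}}{\sum_{l=1}^N e^{\alpha_l(d)}}\,\big(\varphi(X_{d,1}^i)-\pi(\varphi)\big),
$$
using the re-centering in \eqref{eq:key_quant}--\eqref{eq:W} so that the constant common to all particles cancels in the normalization, exactly as in the proof of Theorem \ref{theo:main_result}. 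The plan is then to split $\Psi_{N,d}$ into a term measuring the deviation of each $\varphi(X_{d,1}^i)$ from its mean $\mathbb{E}[\varphi(X_{d,1})]$ plus a bias term $\mathbb{E}[\varphi(X_{d,1})]-\pi(\varphi)$; the latter vanishes as $d\to\infty$ by Proposition \ref{prop:marginals} (with $\varphi\in\mathcal C_b(\mathbb R)\subset \mathscr L_{V^0}$), and since it is deterministic it contributes nothing to the $\mathbb L_\varrho$ limit.

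Next I would establish joint weak convergence. By the fCLT and independence of coordinates across $j$, $(\alpha_1(d),\dots,\alpha_N(d))\Rightarrow \mathcal N_N(0,\sigma_\star^2 I_N)$ as already shown. I would want this to hold jointly with $(\varphi(X_{d,1}^1),\dots,\varphi(X_{d,1}^N))$; since the particles are independent and $\varphi$ is bounded continuous, the marginal law of each $\varphi(X_{d,1}^i)$ is tight, and a subsequence argument (or a direct application of the continuous mapping theorem after noting that the pair $(\alpha_i(d),\varphi(X_{d,1}^i))$ has a weak limit whose first marginal is $\mathcal N(0,\sigma_\star^2)$) gives that $\Psi_{N,d}$ converges in distribution to some limit $\Psi_N$. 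The key point for the bound is not the precise identification of $\Psi_N$ but its $\mathbb L_\varrho$ size: by boundedness of $\varphi$ and the triangle inequality, $|\Psi_{N,d}|\le 2\|\varphi\|_\infty$ pointwise, so the family $\{|\Psi_{N,d}|^\varrho\}_d$ is uniformly integrable, hence $\|\Psi_{N,d}\|_\varrho\to\|\Psi_N\|_\varrho$. Thus it remains to bound $\|\Psi_N\|_\varrho$, where in the limit
$$
\Psi_N = \sum_{i=1}^N \frac{e^{X_i}}{\sum_{l=1}^N e^{X_l}}\,Y_i,
$$
with $X_i\stackrel{iid}{\sim}\mathcal N(0,\sigma_\star^2)$ and $Y_i$ the limiting recentered $\varphi$-values, independent of the $X_i$ up to the joint law, with $|Y_i|\le 2\|\varphi\|_\infty$ and $\mathbb E[Y_i]=0$ after recentering.

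The heart of the estimate — and the step I expect to be the main obstacle — is bounding $\|\Psi_N\|_\varrho$ by $M(\varrho)\|\varphi\|_\infty N^{-1/2}\big[e^{\frac{\sigma_\star^2}{2}\varrho(\varrho-1)}+1\big]^{1/\varrho}$. The idea is a bias-variance style decomposition of the self-normalized average: replace the random denominator $\sum_l e^{X_l}$ by its mean $N\,\mathbb E[e^{X_1}] = N e^{\sigma_\star^2/2}$, so that
$$
\Psi_N = \frac{1}{N e^{\sigma_\star^2/2}}\sum_{i=1}^N e^{X_i}Y_i + \Big(\text{correction from }\tfrac{N e^{\sigma_\star^2/2}}{\sum_l e^{X_l}} - 1\Big)\cdot(\text{same sum}).
$$
For the leading term, conditioning on $(X_i)$ the summands $e^{X_i}Y_i$ are conditionally independent mean-zero (using $\mathbb E[Y_i\mid X]=0$, which should come from the independence structure of the limit), so Marcinkiewicz--Zygmund / Rosenthal gives an $\mathbb L_\varrho$ bound of order $N^{-1}\cdot N^{1/2}\|e^{X_1}\|_{2\vee\varrho}\,\|Y_1\|_\infty\asymp N^{-1/2}\|\varphi\|_\infty \cdot (\mathbb E[e^{\varrho X_1}])^{1/\varrho}$, and $\mathbb E[e^{\varrho X_1}] = e^{\sigma_\star^2\varrho^2/2}$, which after dividing by $e^{\sigma_\star^2\varrho/2}$ (from the $\varrho$-th power of the normalization $e^{\sigma_\star^2/2}$) produces precisely the factor $e^{\frac{\sigma_\star^2}{2}\varrho(\varrho-1)}$. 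The correction term is handled by showing $\big\|\tfrac{1}{N}\sum_l e^{X_l} - e^{\sigma_\star^2/2}\big\|$ is $O(N^{-1/2})$ in a suitable norm (LLN-type fluctuation), controlling the denominator away from $0$ (it is bounded below by a positive constant with overwhelming probability, or one uses that $e^{X_l}>0$ so $\sum_l e^{X_l}\ge e^{\max_l X_l}$), and multiplying by the $\|\varphi\|_\infty$-bounded factor; this yields the "$+1$" inside the bracket. The delicate bookkeeping — making the conditional-independence/mean-zero argument rigorous for the limiting object, and tracking the exact exponential constants through Hölder's inequality so that $\varrho(\varrho-1)$ rather than $\varrho^2$ appears — is where the real work lies; everything else is soft (weak convergence plus uniform integrability from boundedness of $\varphi$).
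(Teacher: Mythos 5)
Your proposal follows essentially the same route as the paper's proof: pass to the weak limit (lognormal weights $e^{X_i}$ independent of $Z_i\sim\pi$, with the $\mathbb{L}_\varrho$ convergence justified by boundedness of $\varphi$), then decompose the limiting self-normalized estimator into a denominator-fluctuation term and a numerator-fluctuation term, bound the random ratio deterministically by $\|\varphi\|_{\infty}$, and apply the Marcinkiewicz--Zygmund inequality together with $\mathbb{E}[e^{\varrho X_1}]=e^{\sigma_{\star}^2\varrho^2/2}$; your expansion around $Ne^{\sigma_{\star}^2/2}$ is just a rearrangement of the paper's decomposition in terms of $A_N$ and $A_{N,\varphi}$. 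The one point you leave as ``should come from the independence structure'' --- that the limiting log-weight and the limiting marginal are independent, so $\mathbb{E}[Y_i\mid X_i]=0$ --- is settled in the paper by the short observation that the first co-ordinate contributes only $\overline{W}_1(d)/\sqrt{d}\rightarrow 0$ to the log-weight, so the joint limit factorizes.
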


\begin{proof}

Recall that the $N$ particles remain independent. From the definition of the weights in (\ref{eq:weight}),
we can write $w_d(X_{0:d-1}) = e^{\frac{1}{\sqrt{d}}\sum_{j=1}^{d}\overline{W}_{j}(d)}$
 for $\overline{W}_{j}(d)$ being i.i.d.\@ and given in (\ref{eq:Wbar}). Now, we have shown in the proof
of Theorem~\ref{theo:main_result} that $\frac{1}{\sqrt{d}}\sum_{j=1}^{d}\overline{W}_{j}(d)\Rightarrow \mathcal{N}(0,\sigma_{\star}^2)$, thus:
\begin{equation}
\label{eq:111}
w_d(X_{0:d-1}) \Rightarrow e^{X}\ ,\quad X\sim \mathcal{N}(0,\sigma_{\star}^2)\ .
\end{equation}
Then, from Proposition \ref{prop:marginals}, $X_{d,1}$
converges weakly to a random variable $Z\sim \pi$. A simple argument shows that the variables $Z$, $X$ are independent as $Z$ depends only on the first co-ordinate which will not affect
(via $\overline{W}_{1}(d)$)
the limit of $\frac{1}{\sqrt{d}}\sum_{j=1}^{d}\overline{W}_{j}(d)$.
The above results allow us to conclude (due to the boundedness and continuity of the involved functions) that:
\begin{equation}
\label{eq:12}
\lim_{d\rightarrow\infty}
\bigg\|\sum_{i=1}^N
\frac{w_d(X_{0:d-1}^{i})}{\sum_{l=1}^{N}w_d(X_{0:d-1}^{l})}
\varphi(X_{d,1}^i)-\pi(\varphi)\bigg\|_{\varrho}
=
\bigg\|\sum_{i=1}^N\frac{e^{X_i}}{\sum_{l=1}^N e^{X_l}}\varphi(Z_i)-\pi(\varphi)\bigg\|_{\varrho}\ ,
\end{equation}
where the $X_i$ are i.i.d.~$\mathcal{N}(0,\sigma_{\star}^2)$
and independently $Z_i$ are i.i.d.~$\pi$. Now, the limiting random variable in the $\mathbb{L}_{\varrho}$-norm
on the right-hand-side of (\ref{eq:12}) can be written as:
%
%
\begin{equation}
\label{eq:term}
\frac{A_{N,\varphi}}{e^{\sigma_{\star}^2/2}A_{N}}
\big[\,e^{\sigma_{\star}^2/2}-A_N\,\big] +
e^{-\sigma_{\star}^2/2}\big[\,A_{N,\varphi}-e^{\sigma_{\star}^2/2}\pi(\varphi)\,\big]
\end{equation}
for
$A_{N,\varphi} = \frac{1}{N}\sum_{i=1}^N e^{X_i}\varphi(Z_i)$ and  $A_N=\frac{1}{N}\sum_{l=1}^N e^{X_l}$.
%
Now, using the Marcinkiewicz-Zygmund inequality (there is a version with $\varrho\in[1,2)$ see e.g.~\cite[Chapter 7]{delmoral}), the $\mathbb{L}_{\varrho}$-norm of the first summand in (\ref{eq:term}) is upper-bounded by:
$$
\frac{\|\varphi\|_{\infty}}{e^{\sigma_{\star}^2/2}}\cdot \frac{M(\varrho)}{\sqrt{N}}\,\|e^{X_1}-e^{\sigma_{\star}^2/2}\|_{\varrho}
$$
where $M(\varrho)$ is a constant that depends upon $\varrho$ only. Then applying the $C_p-$inequality and doing standard calculation, this is upper-bounded by
$$
\frac{M(\varrho)\|\varphi\|_{\infty}}{\sqrt{N}}\,\big[\,e^{\frac{\sigma_{\star}^2}{2}\varrho(\varrho-1)} + 1\,\big]^{1/\varrho}
$$
for some finite constant $M(\varrho)$ that only depends upon $\varrho$.
For the  $\mathbb{L}_{\varrho}$-norm of the second summand in (\ref{eq:term}), again after applying the Marcinkiewicz-Zygmund inequality we have the upper-bound:
$$
e^{-\sigma_{\star}^2/2}\cdot \frac{M(\varrho)}{\sqrt{N}}\,\|e^{X_1}\varphi(Z_1)-e^{\sigma_{\star}^2/2}\pi(\varphi)\|_{\varrho}\ .
$$
Using the $C_p-$inequality and standard calculations we have the upper bound:
$$
\frac{M(\varrho)\|\varphi\|_{\infty}}{\sqrt{N}}[e^{\frac{\sigma_{\star}^2}{2}\varrho(\varrho-1)} + 1]^{1/\varrho}
$$
for some finite constant $M(\varrho)$ that only depends upon $\varrho$. Thus, we can easily conclude from here.
\end{proof}

%


\section{Incorporating Resampling}
\label{sec:resampling_stability}

We have already shown that, even without resampling, the expected ESS converges as $d\rightarrow\infty$
to a non-trivial limit. In practice, this limiting value could sometimes be prohibitively
close to~$1$ depending on the value of $\sigma^{2}_{\star}$; related to this
notice that the constant at the upper bound for the Monte Carlo error in
Theorem \ref{theo:mc_error} is an exponential function of $\sigma^2_{\star}$ and could be large if $\sigma^2_{\star}$ is big.
As a result, it makes sense
to consider the option of resampling in our analysis in high dimensions. We will see that this will result in
smaller bounds for Monte Carlo estimates.

The algorithm carries out $d$ steps as in the case of the
algorithm without resampling considered  in Section~\ref{sec:main_result},
but now resampling occurs at the instances when ESS goes below a specified threshold. For fixed $d$, the algorithm runs in discrete time. Recalling the analogue between discrete and continuous time we have introduced in Remark \ref{rem:switch}
a statement like `resampling occurred at $t\in[\phi_0,1]$'
will literally mean that resampling took place after $l_d(t)$ steps of the algorithm, for the mapping $l_d(t)$ between
continuous and discrete instances defined in (\ref{eq:switch}); in particular, the resampling times,
when considered on the continuous domain, will lie on
the grid $G_d$:
\begin{equation*}
 G_d=\{\phi_0+n\,(1-\phi_0)/d\,;\,n=1,\ldots, d\} \
 \end{equation*}
 for any fixed $d$.

Assume that $s\in[\phi_0,1]$ is a resampling time and $x_{l_d(s)}^{\prime,1},\ldots x_{l_d(s)}^{\prime,N}$ are the (now equally weighted) resampled particles.
Due to the i.i.d.~assumptions in (\ref{eq:target}) and (\ref{eq:kernel}), after resampling
each of these particles will evolve according to the Markov kernels $k_{l_d(s)+1}$, $k_{l_d(s)+2}$, $\ldots$,
independently over the $d$ co-ordinates and different particles. The empirical ESS will also
evolve as:
\begin{equation}
\label{eq:ESS1}
\textrm{ESS}_{(s,u)}(N) =  \frac{\big(\sum_{i=1}^N\exp\{
\frac{1}{\sqrt{d}}\sum_{j=1}^d S^i_{s:u,j}\}\big)^2}
{\sum_{i=1}^N\exp\{\frac{2}{\sqrt{d}}\sum_{j=1}^d S^i_{s:u,j}\}}\\
\end{equation}
for $u\in[s,1]$, where we have defined:
%
\begin{equation*}
S^i_{s:u,j} = \frac{1-\phi_0}{\sqrt{d}}\sum_{n=l_d(s)+1}^{l_d(u)}
\{g(x_{n-1,j}^{i})-\pi_{n-1}(g)\}\ ,
\end{equation*}
until the next resampling instance $t>s$, whence the $N$ particles,
$
x_{l_d(t)}^{i} =( x_{l_d(t),1}^i,\dots,x_{l_d(t),d}^i )
$
will be resampled according to their weights:
\begin{equation*}
w_{l_d(t)}(x_{l_d(s):(l_d(t)-1)}^{i}) =\exp\{\tfrac{1}{\sqrt{d}}\sum_{j=1}^d S^i_{s:t,j}\}\ .
\end{equation*}
Note that we have modified the subscripts of ESS in (\ref{eq:ESS1}), compared to the original definition in
(\ref{eq:ess_def}), to now run in continuous time.
It should be noted that the dynamics differ from the previous section due to the resampling steps.
For instance $S^{i}_{s:u,j}$ are no longer independent over $i$ or $j$, unless one
conditions on the resampled particles $x_{l_d(s)}^{\prime,i}$, $1\le i\le N$.

\subsection{Theoretical Resampling Times}

We start by showing that the dynamically resampling SMC algorithm, using a deterministic version of ESS
(namely, the expected ESS with one particle)
will resample a finite number of times (again as $d\rightarrow\infty$)
and also exhibit convergence of ESS and of the Monte Carlo error. Subsequently, we show that a dynamically resampling SMC algorithm, using the empirical ESS
(with some modification) will, with high probability, display the
same convergence properties.

We use the resampling-times construction of \cite{delmoral_resampling}:
this involves considering the expected value of the importance weight, and its square, over a system with a  \emph{single} particle.
%
The theoretical resampling times are defined as:
\begin{align}
t_1(d) & =
\inf\bigg\{t\in[\phi_0,1]: \frac{\mathbb{E}\,
\big[\,
\exp\big\{\frac{1}{\sqrt{d}}\sum_{j=1}^d S_{\phi_0:t,j}\big\}
\,\big]^2}
{\mathbb{E}\,
\big[\,
\exp\big\{\frac{2}{\sqrt{d}}\sum_{j=1}^d S_{\phi_0:t,j}\big\}
\,\big]} <a\bigg\}\, ; \label{eq:res_time_1}\\
t_k(d) & =  \inf\bigg\{t\in [t_{k-1}(d),1]:
\frac{\mathbb{E}\,
\big[\,
\exp\big\{\frac{1}{\sqrt{d}}\sum_{j=1}^d S_{t_{k-1}(d):t,j}\big\}\,\big]^2}
{\mathbb{E}\,\big[\,\exp\big\{\frac{2}{\sqrt{d}}\sum_{j=1}^d
S_{t_{k-1}(d):t,j} \big\}\,\big]}<a\bigg\}\ , \quad k\geq 2\ ,
\label{eq:res_time_2}
\end{align}
for a constant $a\in (0,1)$,  under the convention that $\inf\varnothing=\infty$.
Note that, for most applications in practice, these times cannot be found analytically.
We emphasize here that the dynamics of $S_{s:t}$ appearing above do not involve resampling
but simply follow the evolution of a single particle with $d$ i.i.d.~co-ordinates, each of which
 starts at $x_{0,j}\sim \pi_{0}$ and then evolves according to the kernels $k_n$.
Intuitively, following the ideas in \cite{delmoral_resampling}, one could think of the deterministic times in (\ref{eq:res_time_1})-(\ref{eq:res_time_2}) as the limit of the resampling times of the practical SMC algorithm in
Figure \ref{tab:SMC} as the number of particles $N$ increases to infinity.

We will for the moment consider the behaviour of the above times in high dimensions. Consider the following instances:
\begin{align}
t_1 &=  \inf\{t\in [\phi_0,1]: e^{-\sigma^2_{\phi_0:t}} <a\}\label{eq:limiting_time1}\ ; \\
t_k &=  \inf\{t\in[t_{k-1},1]: e^{-\sigma^2_{t_{k-1}:t}} <a\}\label{eq:limiting_time2} \ , \quad k\ge 2\ ,
\end{align}
 where for any $s<t$ in $[\phi_0,1]$:
\begin{equation}
\sigma^2_{s:t}= \sigma^{2}_{\phi_0:t} -\sigma^{2}_{\phi_0:s} \equiv (1-\phi_0)\int_{s}^{t}
\pi_u(\widehat{g}_u^2-k_u(\widehat{g}_u)^2)du \ .
\label{eq:varst}
\end{equation}
Under our standard assumptions (A\ref{hyp:A}-\ref{hyp:B}), and the requirement that
 $g\in\mathscr{L}_{V^{r}}$ for some $r\in[0,\tfrac{1}{2})$, we have that (using Lemma \ref{lem:growth} in
the Appendix):
\begin{equation*}
\pi_u(\widehat{g}_u^2-k_s(\widehat{g}_u)^2) \leq M \pi_{u}(V^{2r})\le M'\pi_{\phi_0}(V) < \infty \ .
\end{equation*}
Thus, we can find a \emph{finite} collection of times
that dominate the $t_k$'s (in the sense that there will be more than them),
so also the number of the latter is finite and we can define:
\begin{equation}
\label{eq:no}
m^{\ast} = \#\{\,t_k:k\ge 1\,,t_k\in[\phi_0,1]\,\}<\infty \ .
\end{equation}

We have the following result.
\begin{prop}
\label{prop:limiting_times}
As $d\rightarrow \infty$ we have that $t_k(d)\rightarrow t_k$ for any $k\ge 1$.
\end{prop}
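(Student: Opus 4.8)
The plan is to establish convergence $t_k(d)\to t_k$ by induction on $k$, using the functional CLT of Theorem \ref{th:fCLT} to control the finite-dimensional quantity defining each resampling time. The key observation is that, for a fixed starting point $s$, the map $t\mapsto \mathbb{E}[\exp\{\tfrac{1}{\sqrt d}\sum_{j=1}^d S_{s:t,j}\}]^2 / \mathbb{E}[\exp\{\tfrac{2}{\sqrt d}\sum_{j=1}^d S_{s:t,j}\}]$ is, by the i.i.d.\ structure over coordinates, equal to a ratio of powers of one-dimensional Laplace transforms of $S_{s:t,1}$. Writing $\Phi_d(s,t) := \tfrac{1}{\sqrt d}\sum_{j=1}^d S_{s:t,j}$, the defining inequality in \eqref{eq:res_time_1}–\eqref{eq:res_time_2} is $\mathbb{E}[e^{\Phi_d(s,t)}]^2 / \mathbb{E}[e^{2\Phi_d(s,t)}] < a$, and I would show this converges to $e^{-\sigma^2_{s:t}}$ for each fixed $s,t$, which is exactly the quantity defining $t_k$ in \eqref{eq:limiting_time1}–\eqref{eq:limiting_time2}.

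First I would handle the base case $k=1$ (with $s=\phi_0$). Theorem \ref{th:fCLT} (together with Lemma \ref{lem:growth_main}, which gives a uniform $(2+\delta)$-moment bound and hence uniform integrability of $\{e^{\Phi_d(\phi_0,t)}\}_d$ and $\{e^{2\Phi_d(\phi_0,t)}\}_d$) yields $\Phi_d(\phi_0,t)\Rightarrow \mathcal{N}(0,\sigma^2_{\phi_0:t})$ and convergence of the associated exponential moments, so $\mathbb{E}[e^{\Phi_d(\phi_0,t)}]\to e^{\sigma^2_{\phi_0:t}/2}$ and $\mathbb{E}[e^{2\Phi_d(\phi_0,t)}]\to e^{2\sigma^2_{\phi_0:t}}$; the ratio converges to $e^{\sigma^2_{\phi_0:t}}/e^{2\sigma^2_{\phi_0:t}} = e^{-\sigma^2_{\phi_0:t}}$. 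Since $\sigma^2_{\phi_0:t}$ is continuous and non-decreasing in $t$ (from \eqref{eq:sigma}, the integrand being non-negative and integrable), the limiting function $t\mapsto e^{-\sigma^2_{\phi_0:t}}$ is continuous and non-increasing. Combining pointwise convergence with monotonicity of both the pre-limit and limit functions (a Dini-type / P\'olya-type argument), one upgrades to locally uniform convergence on $[\phi_0,1]$, from which the convergence of the first-crossing level $t_1(d)=\inf\{t:\cdots<a\}$ to $t_1=\inf\{t:e^{-\sigma^2_{\phi_0:t}}<a\}$ follows — here I would invoke the requirement that the crossing is "genuine" (the limiting function strictly drops below $a$ after $t_1$, which holds because $\sigma^2_{s:t}$ is strictly increasing where its integrand is positive; if the integrand vanishes identically on an interval there is no crossing and both sides are $\infty$, consistent with the convention $\inf\varnothing=\infty$).

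For the inductive step, assume $t_{k-1}(d)\to t_{k-1}$. The subtlety is that the starting time of the relevant increment is now the \emph{random-in-$d$} (actually deterministic but $d$-dependent) value $t_{k-1}(d)$, not a fixed $s$. I would address this by proving a slightly stronger statement: the convergence $\mathbb{E}[e^{\Phi_d(s_d,t)}]^2/\mathbb{E}[e^{2\Phi_d(s_d,t)}]\to e^{-\sigma^2_{s:t}}$ holds \emph{jointly} whenever $s_d\to s$, which follows from the fCLT applied on the moving window $[s_d,t]$ — Theorem \ref{th:fCLT} gives a functional limit for the interpolated process $s_d(\cdot)$, and $\Phi_d(s_d,t)$ is (up to the $\tfrac{1}{\sqrt d}\sum_j$ over the independent coordinates, handled exactly as in the proof of Theorem \ref{theo:main_result} via the Lindeberg CLT) a continuous functional of increments $\mathcal{W}_{\sigma^2_{\phi_0:t}} - \mathcal{W}_{\sigma^2_{\phi_0:s_d}}$, whose variance converges to $\sigma^2_{\phi_0:t}-\sigma^2_{\phi_0:s}=\sigma^2_{s:t}$ by continuity of $\sigma^2_{\phi_0:\cdot}$. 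Uniform integrability is again supplied by the $(2+\delta)$-moment bound of Lemma \ref{lem:growth_main} (which is uniform over $d$ and over sub-windows, since it bounds the full-length quantity). Then the same monotonicity-plus-pointwise-convergence argument as in the base case gives $t_k(d)\to t_k$, and since $m^{\ast}<\infty$ by \eqref{eq:no} the induction terminates.

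The main obstacle I anticipate is the moving-window issue in the inductive step: ensuring that the CLT and the uniform-integrability bounds are uniform enough to pass to the limit when the initial time $t_{k-1}(d)$ is itself converging, and carefully handling the boundary case where a limiting crossing time equals $1$ or where the limiting ratio touches $a$ tangentially rather than crossing strictly (so that the $\inf$ is not stable under small perturbations). The clean way around the latter is to note that, generically under (A\ref{hyp:A})–(A\ref{hyp:B}) with a non-degenerate $g$, $u\mapsto\pi_u(\widehat g_u^2-k_u(\widehat g_u)^2)$ is strictly positive, so $\sigma^2_{s:\cdot}$ is strictly increasing and every crossing of the continuous decreasing function $e^{-\sigma^2_{s:t}}$ through the level $a\in(0,1)$ is transversal; this makes $\inf\{t:e^{-\sigma^2_{s:t}}<a\}$ a continuous function of $s$ and completes the argument.
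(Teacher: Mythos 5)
Your overall plan coincides with the paper's at the top level (show the one-particle expected ESS ratio converges to $e^{-\sigma^2_{s:t}}$, allow the left endpoint to move with $d$, and induct over the finitely many times), but the two steps that carry the actual weight are asserted rather than proved, and as stated they do not close. The central gap is your uniform-integrability claim. You want $\Exp[e^{\Phi_d(s,t)}]\to e^{\sigma^2_{s:t}/2}$ and $\Exp[e^{2\Phi_d(s,t)}]\to e^{2\sigma^2_{s:t}}$ from the CLT $\Phi_d\Rightarrow\mathcal{N}(0,\sigma^2_{s:t})$, and you invoke Lemma \ref{lem:growth_main} to supply uniform integrability of $\{e^{\Phi_d}\}_d$ and $\{e^{2\Phi_d}\}_d$. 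That lemma only gives a uniform polynomial $(2+\delta)$-moment bound (for the single-coordinate terms, and by the same decomposition at best for $\Phi_d$); polynomial moments of $\Phi_d$ do not yield uniform integrability of $e^{2\Phi_d}$ --- for that you need exponential-moment control such as $\sup_d\Exp[e^{(2+\epsilon)\Phi_d}]<\infty$, which is essentially the quantity whose limit you are trying to compute. Nor is $\Phi_d$ uniformly upper bounded (even though $g$ is, the exponent is a double sum of $O(d^2)$ terms divided by $d$, so the crude upper bound grows like $d$), so boundedness does not rescue the argument. The paper closes exactly this gap by using the i.i.d.-over-$j$ factorization that you mention but then do not exploit: $\Exp[e^{\theta\Phi_d(s(d),t)}]=\big(\Exp[\exp\{\theta\,\overline{S}_{s(d):t,1}/\sqrt{d}\}]\big)^d$, where the \emph{single-coordinate} exponent $\overline{S}_{s(d):t,1}/\sqrt{d}$ is uniformly upper bounded (here the upper-boundedness of $g$ really enters); a second-order Taylor expansion of the exponential with uniformly controlled remainder, combined with the uniform-in-$t$ first/second/$(2+\epsilon)$-moment estimates of Proposition \ref{pr:unif}, then gives $\big(1+\tfrac{\theta^2}{2d}a_d(t)+o(1/d)\big)^d\rightarrow e^{\theta^2\sigma^2_{s:t}/2}$, uniformly in $t$. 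Your CLT-plus-UI route, as written, has no mechanism for obtaining this exponential-moment convergence.

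The second gap is the uniformity in $t$. You need (locally) uniform convergence both to pass from convergence of the ratio to convergence of the first-crossing times and to handle the moving start $s_d=t_{k-1}(d)$ in the induction. Your proposed upgrade is a P\'olya-type argument, which requires monotonicity in $t$ of the \emph{pre-limit} maps $t\mapsto\Exp[e^{\Phi_d(s_d,t)}]^2/\Exp[e^{2\Phi_d(s_d,t)}]$ (or their equicontinuity); you have not established this, and it is not obvious that the one-particle expected ESS is monotone along the annealing path for finite $d$ (monotonicity of the continuous limit $e^{-\sigma^2_{s:t}}$ is not enough to upgrade pointwise to uniform convergence). The paper sidesteps this by proving $\sup_{t\in[s(d),1]}$-convergence directly, i.e.\ all the moment estimates in Proposition \ref{pr:unif} are uniform in $t$ before the Taylor step, and the hitting-time convergence then follows from uniform convergence to the decreasing limit. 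To repair your proof you would need either to establish the exponential-moment bounds and a genuine uniform-in-$t$ version of your pointwise limits, or simply to switch to the factorization/Taylor computation, at which point you recover the paper's argument.
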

\begin{rem}
Note that the time instances $\{t_k\}$ are derived only through the asymptotic
variance function $t\mapsto\sigma^{2}_{\phi_0:t}$;
our main objective in the current resampling part of this paper will be to illustrate that
investigation of these deterministic
times provides essential information about the resampling times of the practical SMC algorithm  in Figure \ref{tab:SMC}. These latter stochastic times will coincide with the former (or, rather, a slightly modified version of it) as $d\rightarrow \infty$ with a probability
that converges to $1$ with a rate $\mathcal{O}(N^{-1/2})$.
\end{rem}

\subsection{Stability  under  Theoretical Resampling Times}

Consider an SMC algorithm similar to the one in Figure \ref{tab:SMC}, but with the difference
that resampling occurs at the times $\{t_k(d)\}$ in
\eqref{eq:res_time_1}-\eqref{eq:res_time_2}; it is assumed that $t_0(d)=\phi_0$.
Note that due to Proposition \ref{prop:limiting_times}, the number of these resampling times:
\begin{equation*}
m_d^{\ast} = \#\{\,t_k(d):n\ge 1\,,t_k(d)\in[\phi_0,1]\,\}
\end{equation*}
will eventually, for big enough $d$, coincide with $m^{\ast}$ in (\ref{eq:no}). We will henceforth assume
that $d$ is big enough so that $m^{\ast}_d\equiv m^{\ast}<\infty$.

We state our result in Theorem \ref{theorem:limit_adaptive} below, under the convention that $t_{m^{\ast}+1}(d)\equiv 1$. The proof can be found in Appendix \ref{appendix:adaptive}. It relies on a novel construction of a filtration, which starts with all the information of all particles and co-ordinates up-to and including
 the last resampling time.
Subsequent $\sigma-$algebras are generated, for a given particle, by adding each dimension for a given trajectory. This allows one to a use a Martingale CLT approach by taking advantage of the independence of particles and
co-ordinates once we condition on their positions at the resampling times.

\begin{theorem}\label{theorem:limit_adaptive}
Assume (A\ref{hyp:A}-\ref{hyp:B}) and  $g\in\mathscr{L}_{V^r}$ with $r\in[0,\tfrac{1}{2})$. Then, for any fixed $N>1$, any  $k\in\{1,\dots,m^*+1\}$, times $t_{k-1}<t_k$, and
$s_k(d)\in (t_{k-1}(d),t_k(d))$ any sequence converging to a point $s_k\in(t_{k-1},t_k)$, we have that $\textrm{\emph{ESS}}_{(t_{k-1}(d),s_k(d))}(N)$ converges in distribution to a random variable
$$
\frac{[\,\sum_{i=1}^N e^{X_i^k}\,]^2}{\sum_{i=1}^N e^{2X_i^k}}$$
where $X_i^k \stackrel{i.i.d.}{\sim}\mathcal{N}(0,\sigma_{t_{k-1}:s_k}^2)$ and $\sigma_{t_{k-1}:s_k}^2$ as in \eqref{eq:varst}. In particular,
\begin{equation*}
\lim_{d\rightarrow\infty}
\mathbb{E}\,\big[\,\textrm{\emph{ESS}}_{(t_{k-1}(d),s_k(d))}(N)\,\big] = \mathbb{E}\bigg[
\frac{[\,\sum_{i=1}^N e^{X_i^k}\,]^2}{\sum_{i=1}^N e^{2X_i^k}}\bigg].
\end{equation*}
\end{theorem}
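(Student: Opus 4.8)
The plan is to mimic the proof of Theorem~\ref{theo:main_result}. Writing $Z_i(d):=\tfrac{1}{\sqrt{d}}\sum_{j=1}^{d}S^{i}_{t_{k-1}(d):s_{k}(d),j}$, the quantity $\textrm{ESS}_{(t_{k-1}(d),s_{k}(d))}(N)$ equals, by (\ref{eq:ESS1}), the value at $(Z_1(d),\dots,Z_N(d))$ of the map $(\alpha_1,\dots,\alpha_N)\mapsto[\sum_{i=1}^{N}e^{\alpha_i}]^{2}/\sum_{i=1}^{N}e^{2\alpha_i}$, which is bounded, continuous and invariant under adding a common constant to all $\alpha_i$; so it suffices to show that, after subtracting a suitable common constant, $(Z_1(d),\dots,Z_N(d))\Rightarrow\mathcal{N}_{N}(0,\sigma^{2}_{t_{k-1}:s_{k}}I_{N})$, after which the continuous mapping theorem and bounded convergence (ESS lies in $[1,N]$) finish the proof. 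The new difficulty is that resampling at $t_{k-1}(d)$ destroys the independence across particles and co-ordinates used in Section~\ref{sec:main_result}; this is recovered by conditioning. Let $T=l_{d}(t_{k-1}(d))$ and let $\mathcal{F}$ be the $\sigma$-algebra generated by all particles and all co-ordinates up to and including time $T$ together with the resampled positions $x_{T}^{\prime,i}=(x_{T,1}^{\prime,i},\dots,x_{T,d}^{\prime,i})$, $1\le i\le N$. By (\ref{eq:target})--(\ref{eq:kernel}), conditionally on $\mathcal{F}$ the post-$T$ trajectories $\{x_{n,j}^{i}\}_{n>T}$ are independent over $(i,j)$, the $(i,j)$-th being a time-inhomogeneous Markov chain started \emph{deterministically} at $x_{T,j}^{\prime,i}$ under $k_{T+1},k_{T+2},\dots$; and, since no theoretical resampling occurs strictly between $t_{k-1}(d)$ and $t_{k}(d)$ and $s_{k}(d)\in(t_{k-1}(d),t_{k}(d))$, over this block there is no further resampling.

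Put $\mu_i(d):=\mathbb{E}[Z_i(d)\mid\mathcal{F}]$ (a function of $x_{T}^{\prime,i}$) and $Y_{j}^{i}:=S^{i}_{t_{k-1}(d):s_{k}(d),j}-\mathbb{E}[S^{i}_{t_{k-1}(d):s_{k}(d),j}\mid\mathcal{F}]$, so that $Z_i(d)-\mu_i(d)=\tfrac{1}{\sqrt{d}}\sum_{j}Y_{j}^{i}$ and, conditionally on $\mathcal{F}$, the $Y_{j}^{i}$ are independent and centred. The first task is a conditional CLT, $(Z_1(d)-\mu_1(d),\dots,Z_N(d)-\mu_N(d))\Rightarrow\mathcal{N}_{N}(0,\sigma^{2}_{t_{k-1}:s_{k}}I_{N})$. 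Adapting the martingale / Poisson-equation decomposition of Theorem~\ref{th:decompose} underlying Theorem~\ref{th:fCLT} to a chain started from a fixed point instead of from $\pi_{\phi_0}$, the starting value enters only a boundary/remainder term, so (using $t_{k-1}(d)\to t_{k-1}$ from Proposition~\ref{prop:limiting_times} and $s_{k}(d)\to s_{k}$) $\mathrm{Var}(S^{i}_{t_{k-1}(d):s_{k}(d),j}\mid x_{T,j}^{\prime,i})\to\sigma^{2}_{t_{k-1}:s_{k}}$ for each fixed start, and the analogue of Lemma~\ref{lem:growth_main} for this sub-block bounds $\sup_{d}\mathbb{E}[|S^{i}_{t_{k-1}(d):s_{k}(d),j}|^{2+\delta}\mid x_{T,j}^{\prime,i}]$ by an integrable function of the start. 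Since the $V^{r}(x_{T,j}^{\prime,i})$ have bounded moments (as $r<\tfrac12$ and by the drift condition in (A\ref{hyp:A})), these bounds survive the average over $j$, yielding $\tfrac{1}{d}\sum_{j}\mathbb{E}[(Y_{j}^{i})^{2}\mid\mathcal{F}]\to\sigma^{2}_{t_{k-1}:s_{k}}$ in probability and the conditional Lindeberg condition. A conditional Lindeberg--Feller CLT, combined with conditional independence over $i$ and the fact that the limit is deterministic, delivers the displayed joint convergence unconditionally (the conditional characteristic functions factorise over $i$ and each converges to a Gaussian limit not depending on $\mathcal{F}$).

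It remains to replace the particle-dependent $\mu_i(d)$ by a common shift, i.e.\ to show $\mu_i(d)-\mu_1(d)\to0$ in probability for each $i$. Writing $\mathbb{E}[S^{i}_{t_{k-1}(d):s_{k}(d),j}\mid\mathcal{F}]=\tfrac{1-\phi_0}{\sqrt{d}}\big(h_{d}(x_{T,j}^{\prime,i})+e_{d}\big)$, where $e_{d}$ does not depend on $(i,j)$ and $h_{d}$, capturing the transient difference between the chain started at $x$ and the chain started at $\pi_{t_{k-1}}$, is $\mathcal{O}(V^{r}(\cdot))$ uniformly in $d$ by the geometric ergodicity in (A\ref{hyp:A}), one gets $\mu_i(d)=\tfrac{1-\phi_0}{d}\sum_{j}h_{d}(x_{T,j}^{\prime,i})+c_{d}$ with $c_{d}$ independent of $i$. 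Hence it suffices to prove a law of large numbers over the $d$ co-ordinates of each resampled particle: the resampling weight $\exp\{\tfrac{1}{\sqrt{d}}\sum_{j}S^{i}_{\cdot,j}\}$ depends on each co-ordinate through only one of $d$ comparable summands, hence induces $\mathcal{O}(1/\sqrt{d})$ pairwise correlation between the co-ordinates of a resampled particle, so (using again the moment bounds on $V^{r}$) $\mathrm{Var}\big(\tfrac{1}{d}\sum_{j}h_{d}(x_{T,j}^{\prime,i})\big)\to0$ while by exchangeability of the resampled particles its mean is the same for all $i$. Slutsky's theorem then upgrades the previous step to $(Z_1(d)-\mu_1(d),\dots,Z_N(d)-\mu_1(d))\Rightarrow\mathcal{N}_{N}(0,\sigma^{2}_{t_{k-1}:s_{k}}I_{N})$, and the conclusion follows as above.

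The main obstacle is the interplay of the last two steps: everything genuinely new relative to Theorem~\ref{theo:main_result} comes from the random, co-ordinate-dependent starting points $x_{T,j}^{\prime,i}$ created by resampling, and one must control, in an \emph{averaged} (over the $d$ co-ordinates) sense, their contributions to the conditional mean and variance of the block log-weight. This needs both the extra integrability afforded by $r<\tfrac12$ and a quantitative decorrelation of the co-ordinates of a resampled particle — properties which hold automatically (genuinely i.i.d.\ co-ordinates, common initial law $\pi_{\phi_0}$) in the no-resampling analysis of Section~\ref{sec:main_result}.
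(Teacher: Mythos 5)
Your overall architecture coincides with the paper's: condition on the $\sigma$-algebra of everything up to and including the resampling at $t_{k-1}(d)$ (the paper encodes this as $\mathcal{G}_{0,d}$ in the filtration (\ref{eq:filtr})), run a conditional martingale-array/Lindeberg--Feller CLT for the conditionally independent coordinate contributions (the paper's conditions a) and b), with the conditional variance handled through the Poisson-equation decomposition of Theorem \ref{th:decompose} and joint convergence over particles via Cram\'er--Wold), and then argue that the start-dependent conditional means only produce a common shift, which the shift-invariant, bounded ESS functional ignores. The genuine gap is in your final step, the law of large numbers over the $d$ coordinates of a \emph{resampled} particle: you assert that the resampling weight, depending on each coordinate through only one of $d$ comparable summands, induces $\mathcal{O}(1/\sqrt{d})$ pairwise correlation, whence $\mathrm{Var}\big(\tfrac{1}{d}\sum_j h_d(X^{\prime,i}_{T,j})\big)\to 0$. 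This is exactly the new difficulty created by resampling and it is not proved: the normalized weights are ratios of exponentials of $\mathcal{O}(1)$ random sums, $h_d=\mathcal{O}(V^r)$ is unbounded, and, more seriously, the heuristic presupposes that the pre-resampling coordinates are i.i.d., which holds only before the \emph{first} resampling time. For $k\ge 2$ the coordinates entering the resampling at $t_{k-1}(d)$ are already dependent (they were resampled at $t_{k-2}(d)$), so the ``one of $d$ summands'' argument does not apply as stated, and one would have to track quantitatively how this dependence propagates through successive resampling steps.

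The paper sidesteps this with a much cruder but rigorous device: since normalized weights are bounded by $1$ and the particles are exchangeable, for any nonnegative functional one has $\Exp\,\big|\tfrac{1}{d}\sum_j\varphi(X^{\prime,i}_{\cdot,j})-c\big| \le N\,\Exp\,\big|\tfrac{1}{d}\sum_j\varphi(X^{1}_{\cdot,j})-c\big|$, i.e.\ one passes from the resampled particle back to the pre-resampling particles at the price of the fixed factor $N$, where conditional independence over $j$ is available; this is Proposition \ref{lem:extra_result}, proved by induction over the resampling times together with the moment bound $\Exp[V(X^{\prime,i}_{l_d(t_k(d)),j})]\le M(k)N^k$ of Proposition \ref{lemma:technical_lemma2} (note your appeal to ``the drift condition'' alone misses these $N^k$ factors, harmless for fixed $N$). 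This LLN then drives Proposition \ref{lem:extra_new_result}, which is precisely your bias term, with the $d$-dependence of $\widehat{g}_{n_1}$ removed via Lemma \ref{lem:growth}(ii) and the transient terms controlled by the time-inhomogeneous ergodicity bounds of \cite{doucchains}. Unless you supply a rigorous decorrelation estimate valid after arbitrarily many resampling steps, you should replace your variance/correlation argument by this factor-$N$ induction (or an equivalent); with that substitution the remainder of your proof goes through and is in substance the paper's.
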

%
 Note that, had the $t_{k}(d)$'s been analytically available,
resampling at these instances would deliver an algorithm of $d$ bridging steps for which the expected ESS
would be regularly regenerated. In addition, this latter quantity depends, asymptotically, on the `incremental' variances $\sigma^2_{\phi_0:t_1}$,
$\sigma^2_{t_1:t_2}$, $\ldots$, $\sigma^2_{t_{m^*}:1}$; in contrast, in the context
of Theorem \ref{theo:main_result}, the limiting expectation depends on $\sigma^{2}_{\phi_0:1}\equiv \sigma^2_{\star}$.
We can also consider the Monte-Carlo error when estimating  expectations w.r.t.\@ a single marginal
co-ordinate of our target. Again,  the proof is in Appendix \ref{appendix:adaptive}.

\begin{theorem}\label{theorem:error_adaptive}
Assume (A\ref{hyp:A}-\ref{hyp:B}) with $g\in\mathscr{L}_{V^r}$ for some $r\in[0,\tfrac{1}{2})$. Then
for any $1\leq \varrho <\infty$
there exists a constant $M=M(\varrho)<\infty$ such that for any fixed $ N\ge 1$, $\varphi\in\mathcal{C}_b(\mathbb{R})$
\begin{equation*}
\lim_{d\rightarrow\infty}
\bigg\|\sum_{i=1}^N\frac{
w_{d}(X_{l_d(t_{m^*}(d)):(d-1)}^{i})
}{
\sum_{l=1}^N
w_{d}(X_{l_d(t_{m^*}(d)):(d-1)}^{l})
}\varphi(X_{d,1}^i)-\pi(\varphi)\bigg\|_{\varrho} \leq
\frac{M(\varrho)\|\varphi\|_{\infty}}{\sqrt{N}}\,
\big[e^{\frac{\sigma_{t_{m^*}:1}^2}{2}\varrho(\varrho-1)} + 1\,\big]^{1/\varrho}.
\end{equation*}
\end{theorem}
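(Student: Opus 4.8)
The plan is to mimic the proof of Theorem \ref{theo:mc_error}, replacing the role of the global variance $\sigma^2_\star=\sigma^2_{\phi_0:1}$ with the incremental variance $\sigma^2_{t_{m^*}:1}$ accumulated after the last resampling time. The key observation is that, conditionally on the $N$ particles at the last theoretical resampling time $t_{m^*}(d)$, the $N$ trajectories on the segment $[t_{m^*}(d),1]$ evolve \emph{independently} across particles and (by \eqref{eq:target}, \eqref{eq:kernel}) independently across the $d$ coordinates, each coordinate being a non-homogeneous Markov chain driven by $k_{l_d(t_{m^*}(d))+1},k_{l_d(t_{m^*}(d))+2},\ldots,k_d$. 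So once we condition on the resampled positions, we are exactly in the setting of Section \ref{sec:main_result}, but with the chain started at time $t_{m^*}(d)$ rather than at $\phi_0$. The weights entering the displayed estimator are precisely $w_{d}(X_{l_d(t_{m^*}(d)):(d-1)}^{i})=\exp\{\tfrac{1}{\sqrt d}\sum_{j=1}^d S^i_{t_{m^*}(d):1,j}\}$.

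First I would establish the analogue of the weak-convergence inputs \eqref{eq:111} and Proposition \ref{prop:marginals} for the post-$t_{m^*}$ segment. For the weights: conditionally on the positions at $t_{m^*}(d)$, the quantity $\tfrac{1}{\sqrt d}\sum_{j=1}^d S^i_{t_{m^*}(d):1,j}$ is a normalized sum of i.i.d.\ (over $j$) centered terms, and the fCLT machinery behind Theorem \ref{th:fCLT}/Theorem \ref{theo:main_result} — a martingale-array CLT via the Poisson-equation decomposition (Theorem \ref{th:decompose}) plus the uniform moment bound of Lemma \ref{lem:growth_main} — applies on the time window $[t_{m^*},1]$ after taking $d\to\infty$, giving convergence in distribution to $\mathcal N(0,\sigma^2_{t_{m^*}:1})$. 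The relevant point here is that the starting distribution at $t_{m^*}(d)$ is irrelevant in the limit: by the drift/minorization conditions (A\ref{hyp:A}) and Lemma \ref{lem:growth} the chain forgets its initial law, so $\pi_{\phi_0}$-started estimates (for the relevant moments) and resampled-empirical-started estimates have the same $d\to\infty$ limit. For the marginal $X^i_{d,1}$, Proposition \ref{prop:prop} in the Appendix (with time sequences $s(d)=t_{m^*}(d)\to t_{m^*}$ and $t(d)=1$) gives $X^i_{d,1}\Rightarrow Z_i\sim\pi$, exactly as in Proposition \ref{prop:marginals}. Independence of $Z_i$ from the limiting $X_i$ follows as in Theorem \ref{theo:mc_error}: $Z_i$ depends only on coordinate $1$, whose contribution $S^i_{t_{m^*}(d):1,1}/\sqrt d\to 0$ does not affect the limit of the weight.

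Given these inputs, I would take $d\to\infty$ inside the $\mathbb L_\varrho$-norm (legitimate since $\varphi\in\mathcal C_b$ is bounded continuous, so the whole self-normalized estimator is a bounded continuous functional of the finite vector $(S^i_{t_{m^*}(d):1,\cdot},X^i_{d,1})_{i=1}^N$) to get
\begin{equation*}
\lim_{d\to\infty}\Big\|\sum_{i=1}^N\frac{w_d(X^i_{l_d(t_{m^*}(d)):(d-1)})}{\sum_{l}w_d(X^l_{l_d(t_{m^*}(d)):(d-1)})}\varphi(X^i_{d,1})-\pi(\varphi)\Big\|_\varrho=\Big\|\sum_{i=1}^N\frac{e^{X_i}}{\sum_l e^{X_l}}\varphi(Z_i)-\pi(\varphi)\Big\|_\varrho,
\end{equation*}
with $X_i\stackrel{i.i.d.}{\sim}\mathcal N(0,\sigma^2_{t_{m^*}:1})$ independent of $Z_i\stackrel{i.i.d.}{\sim}\pi$. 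Then I would bound the right-hand side exactly as in the last part of the proof of Theorem \ref{theo:mc_error}: decompose into the two summands of the analogue of \eqref{eq:term} (with $\sigma^2_\star$ replaced by $\sigma^2_{t_{m^*}:1}$), apply the Marcinkiewicz--Zygmund inequality to each normalized centered sum to extract the $1/\sqrt N$ factor, then the $C_p$-inequality and the Gaussian moment formula $\mathbb E\,e^{\varrho X_1}=e^{\varrho^2\sigma^2_{t_{m^*}:1}/2}$ to obtain the stated constant $\frac{M(\varrho)\|\varphi\|_\infty}{\sqrt N}[e^{\frac{\sigma^2_{t_{m^*}:1}}{2}\varrho(\varrho-1)}+1]^{1/\varrho}$.

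The main obstacle I anticipate is not the final moment estimate (which is verbatim as in Theorem \ref{theo:mc_error}) but justifying the passage to the limit with the \emph{conditional} dynamics: one must verify that the martingale-array CLT and the uniform $(2+\delta)$-moment bound of Lemma \ref{lem:growth_main} still hold when the scalar chain is started from the random resampled positions at $t_{m^*}(d)$ rather than from $\pi_{\phi_0}$. This requires the geometric-ergodicity / drift estimates of (A\ref{hyp:A}) and Lemma \ref{lem:growth} to control the initial layer uniformly in $d$ (the resampled position at $t_{m^*}(d)$ has, by induction over the finitely many resampling blocks, a law with a uniformly controlled $V$-moment), and it is precisely this step that the novel filtration construction described before the statement of Theorem \ref{theorem:limit_adaptive} is designed to handle — building $\sigma$-algebras starting from all particle/coordinate information up to the last resampling time and then adding coordinates one trajectory at a time, so that the conditional increments form a martingale difference array to which the standard CLT applies. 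Everything else is bookkeeping that parallels Sections \ref{sec:main_result} and \ref{sec:resampling_stability}.
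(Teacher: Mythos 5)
Your plan is correct and is exactly the route the paper intends: the paper omits this proof, stating only that it is ``similar to that of Theorem \ref{theorem:limit_adaptive} (as the final resampling time is strictly less than 1) and Theorem \ref{theo:mc_error}'', i.e.\ the conditional CLT on the segment $[t_{m^*},1]$ via the filtration construction, the resampled-initial-condition controls (Propositions \ref{lemma:technical_lemma2}--\ref{lem:extra_new_result}, Proposition \ref{prop:prop} with $s(d)=t_{m^*}(d)$), and then the Marcinkiewicz--Zygmund/$C_p$ computation of Theorem \ref{theo:mc_error} with $\sigma^2_{\star}$ replaced by $\sigma^2_{t_{m^*}:1}$. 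This is precisely the combination you describe, so there is no gap to flag.
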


\begin{rem}
In comparison to the bound in Theorem \ref{theo:mc_error}, the bound is smaller with resampling: as $\phi_0\leq t_{m^*}$ the bound
in Theorem \ref{theorem:error_adaptive}
is clearly less than in Theorem \ref{theo:mc_error}. Whilst these are both upper-bounds on the error they are based on the same calculations - that is a CLT and using the Marcinkiewicz-Zygmund
inequality.
\end{rem}

\begin{rem}
On inspection, the bound in the above result can be seen as counter-intuitive. Essentially, the bound gets \emph{smaller} as $t_{m^*}$ increases, i.e.~the closer to the end one resamples. However, this can be explained as follows. As shown in Proposition \ref{prop:marginals}, the terminal point, thanks to the ergodicity of the system, is asymptotically
drawn from the correct distribution $\pi$. Thus, in the limit $d\rightarrow \infty$
the particles do not require weighting. Clearly, in finite dimensions, one needs to assign weights to compensate for the finite run time of the algorithm.
\end{rem}

We remark that our analysis, in the context of resampling, relies on the fact that $N$ is fixed
and $d\rightarrow \infty$. If $N$ is allowed to grow as well our analysis must be modified when one resamples.
Following closely the proofs in the Appendix,
it should be possible by considering bounds (which do not increase with $N$ and $d$) on quantities of the form
$$
\mathbb{E}\bigg[\sum_{i=1}^N\frac{
w_{l_d(t_k(d))}(X_{l_d(t_{k-1}(d)):(l_d(t_k(d))-1)}^{i})
}{
\sum_{l=1}^N
w_{l_d(t_k(d))}(X_{l_d(t_{k-1}(d)):(l_d(t_k(d))-1)}^{l})
}V(X_{l_d(t_k(d))}^{i})\bigg]
$$
to establish results also for large $N$;
we are currently investigating this.
However, at least following our arguments, the asymptotics under resampling will only be apparent for $N$ much smaller than $d$; we believe that is only due to mathematical complexity and does not need to be the case.

\subsection{Practical Resampling Times}

We now consider the scenario
when one resamples
at the empirical versions of the times \eqref{eq:res_time_1}-\eqref{eq:res_time_2}.
To this end, we will follow closely the proof of \cite{delmoral_resampling} and this will require the
consideration of a finite mesh at the definition of the resampling times. 
Consider some positive integer $\delta$, and the grid:
$$
G_\delta =  \{\phi_0,\phi_0 + (1-\phi_0)/\delta, \phi_0 + 2(1-\phi_0)/\delta,\dots,1\}\ .
$$
We consider the SMC algorithm that attempts to resample only when crossing the instances of the grid $G_\delta$,
using the practically relevant empirical ESS. That is, we are interested in the times $\{T_k = T_{k}^{N}(d)\}$ 
defined as:
\begin{align*}
T_1 &=  \inf\{t\in G_{\delta}\cap [\,\phi_0,1\,]: \tfrac{1}{N}\,\mathrm{ESS}_{\phi_0:t}(N) <a_1\} \ ; \\
T_k &=  \inf\{t\in G_{\delta}\cap  [\,T_{k-1},1\,]:\tfrac{1}{N}\,\mathrm{ESS}_{T_{k-1}:t}(N) <a_k\} \ , \quad k\ge 2\ ,
\end{align*}
for a collection of thresholds $(a_k)$ in $(0,1)$.

Following the development in \cite{delmoral_resampling}, we will need the following theoretical times: 
%
\begin{align*}
t_1^{\delta}(d) & =
\inf\bigg\{t\in G_\delta \cap [\phi_0,1]: \frac{\mathbb{E}\,
\big[\,
\exp\big\{\frac{1}{\sqrt{d}}\sum_{j=1}^d S_{\phi_0:t,j}\big\}
\,\big]^2}
{\mathbb{E}\,
\big[\,
\exp\big\{\frac{2}{\sqrt{d}}\sum_{j=1}^d S_{\phi_0:t,j}\big\}
\,\big]} <a_1\bigg\}\, ; \\
t_k^{\delta}(d) & =  \inf\bigg\{t\in G_{\delta}\cap [t_{k-1}^{\delta}(d),1]:
\frac{\mathbb{E}\,
\big[\,
\exp\big\{\frac{1}{\sqrt{d}}\sum_{j=1}^d S_{t_{k-1}^{\delta}(d):t,j}\big\}\,\big]^2}
{\mathbb{E}\,\big[\,\exp\big\{\frac{2}{\sqrt{d}}\sum_{j=1}^d
S_{t_{k-1}^{\delta}(d):t,j} \big\}\,\big]}<a_k\bigg\}\ , \quad k\geq 2\ .
\end{align*}
%
We can, for a moment, obtain an understanding of the behavior 
of these times as $d\rightarrow \infty$. Define the time instances: 
\begin{align*}
t_1^{\delta} &=  \inf\{t\in G_{\delta}\cap [\phi_0,1]: e^{-\sigma^2_{\phi_0:t}} <a_1\} \ ; \\
t_k^{\delta} &=  \inf\{t\in G_{\delta}\cap  [t_{k-1}^{\delta},1]: e^{-\sigma^2_{t_{k-1}^{\delta}:t}} <a_k\} \ , \quad k\ge 2\ .
\end{align*}
If $m^*(\delta)$ denotes the number of these times, we have that $m^*(\delta)\leq m^*$
(with $m^{*}$ now taking into account the choices of different thresholds $a_k$), but for $\delta$ large enough these values will be very close.
\begin{prop}
\label{prop:limiting_times_2}
As $d\rightarrow \infty$ we have that $t_k^{\delta}(d)\rightarrow t_k^{\delta}$ for any $k\ge 1$.
\end{prop}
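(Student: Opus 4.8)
The plan is to mimic closely the proof of Proposition \ref{prop:limiting_times}, with the sole — and substantial — simplification that here the candidate times live on the \emph{fixed, finite} grid $G_\delta$ rather than ranging over the whole interval $[\phi_0,1]$. Because $G_\delta$ has finitely many points, the infimum defining each $t_k^\delta(d)$ is actually a minimum over a finite set, and likewise for $t_k^\delta$; so the convergence $t_k^\delta(d)\to t_k^\delta$ will follow once I show that, for each \emph{fixed} grid point $t\in G_\delta$, the deterministic ratio appearing in the definition of $t_k^\delta(d)$ converges, as $d\to\infty$, to the corresponding limiting quantity $e^{-\sigma^2_{s:t}}$, where $s$ is the previous (limiting) resampling time. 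The key analytic input is that
$$
\mathbb{E}\Big[\exp\big\{\tfrac{c}{\sqrt d}\textstyle\sum_{j=1}^d S_{s:t,j}\big\}\Big]
\longrightarrow \exp\big\{\tfrac{c^2}{2}\,\sigma^2_{s:t}\big\}, \qquad c\in\{1,2\},
$$
for any fixed $s<t$ in $[\phi_0,1]$. This is exactly the statement (applied on the sub-interval $[s,t]$) that was established in the no-resampling analysis: Theorem \ref{th:fCLT} gives the CLT $\tfrac{1}{\sqrt d}\sum_j S_{s:t,j}\Rightarrow \mathcal{N}(0,\sigma^2_{s:t})$ — note $S_{s:t,j}$ here means the single-particle increment, with $d$ i.i.d.\ co-ordinates starting at $\pi_{\phi_0}$ and evolving under the kernels $k_n$, which is precisely the setting of Section \ref{sec:main_result} restarted at time $s$ — and Lemma \ref{lem:growth_main} supplies the uniform $(2+\delta)$-moment bound that upgrades weak convergence to convergence of the exponential moments (uniform integrability of $e^{c\,\alpha(d)}$, using that $\sup_d\mathbb{E}[e^{c'\alpha(d)^2/\cdots}]$ is finite — more simply, $\{e^{c\alpha(d)}\}_d$ is bounded in $\mathbb{L}_\varrho$ for some $\varrho>1$ by the moment bound and a Markov-type argument, hence UI). Dividing the two limits gives that the ratio at a fixed $t$ converges to $e^{-\sigma^2_{s:t}}$.

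From there the argument is an induction on $k$, identical in structure to Proposition \ref{prop:limiting_times}. For $k=1$: the function $t\mapsto \sigma^2_{\phi_0:t}$ is continuous and strictly increasing in $t$ (the integrand $\pi_u(\widehat g_u^2-k_u(\widehat g_u)^2)$ is finite and positive, which we already know from the bound via Lemma \ref{lem:growth}), so $e^{-\sigma^2_{\phi_0:t}}$ is strictly decreasing and continuous; assuming the threshold is not hit exactly at a grid point (a genericity condition in $a_1$, as in \cite{delmoral_resampling}), the limiting crossing point $t_1^\delta=\min\{t\in G_\delta: e^{-\sigma^2_{\phi_0:t}}<a_1\}$ is well defined, and since for each of the finitely many grid points $t$ the pre-limit ratio converges to $e^{-\sigma^2_{\phi_0:t}}$, for $d$ large enough the pre-limit ratio is $<a_1$ at exactly the same grid points as the limit, whence $t_1^\delta(d)=t_1^\delta$ for all $d$ large. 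The inductive step is the same, replacing $\phi_0$ by $t_{k-1}^\delta$ and using the induction hypothesis $t_{k-1}^\delta(d)\to t_{k-1}^\delta$ — which, being eventual equality on the finite grid, means one may simply plug $s=t_{k-1}^\delta$ into the limit of the ratio for all $d$ large. Finiteness of the number of such times ($m^*(\delta)\le m^*<\infty$) is immediate since $G_\delta$ is finite.

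The main obstacle is the passage from the CLT for $\tfrac{1}{\sqrt d}\sum_j S_{s:t,j}$ to convergence of $\mathbb{E}[\exp\{\tfrac{c}{\sqrt d}\sum_j S_{s:t,j}\}]$, i.e.\ establishing uniform integrability of the exponentiated sums. The subtlety is that $e^{c\,\cdot}$ is not bounded, so weak convergence alone does not transfer expectations; one needs the uniform higher-moment control of Lemma \ref{lem:growth_main} for the \emph{restarted} chain (started at time $s$ from $\pi_{\phi_0}$ — note the initial law is still $\pi_{\phi_0}$ since in these theoretical/practical time definitions $S_{s:t}$ tracks a single particle that is \emph{not} resampled, it just accumulates increments over $[s,t]$), combined with a de la Vall\'ee–Poussin / Markov-inequality argument to get boundedness in some $\mathbb{L}_\varrho$, $\varrho>1$. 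A minor additional point, inherited from \cite{delmoral_resampling}, is the genericity assumption that none of the limiting thresholds $a_k$ is attained exactly at a grid point, so that the (finite-set) minima are stable under small perturbations; this should be stated or the convergence phrased as $\liminf$/$\limsup$ straddling $t_k^\delta$ otherwise. Once these are in hand the rest is bookkeeping on a finite grid.
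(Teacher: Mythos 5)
There is a genuine gap at the heart of your argument. Your finite-grid/induction bookkeeping is fine, but everything rests on the claim that
\begin{equation*}
\mathbb{E}\Big[\exp\big\{\tfrac{c}{\sqrt d}\textstyle\sum_{j=1}^d S_{s:t,j}\big\}\Big]\longrightarrow e^{\frac{c^2}{2}\sigma^2_{s:t}},\qquad c\in\{1,2\},
\end{equation*}
and you propose to obtain this from the CLT for $\alpha(d)=\tfrac{1}{\sqrt d}\sum_j S_{s:t,j}$ together with uniform integrability of $e^{c\alpha(d)}$, the UI being "deduced" from the $(2+\delta)$-moment bound of Lemma \ref{lem:growth_main} via a Markov-type argument. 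That deduction does not work: a uniform polynomial moment bound on $\alpha(d)$ gives only polynomial tail decay $\mathbb{P}(\alpha(d)>x)\le M x^{-(2+\delta)}$, which is nowhere near enough to bound $\mathbb{E}[e^{\varrho c\,\alpha(d)}]$ for some $\varrho>1$; boundedness in $\mathbb{L}_\varrho$ of $e^{c\alpha(d)}$ is an \emph{exponential}-moment statement about $\alpha(d)$ and cannot follow from $(2+\delta)$ moments. (Nor does the upper-boundedness of $g$ rescue the crude bound at the level of the full sum: $\tfrac{1}{\sqrt d}\sum_j S_{s:t,j}$ is only $\mathcal{O}(d)$ above.) So the key analytic input of your proof is unsupported as written.

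The paper's route avoids this entirely, and its proof of the present proposition is essentially one line: in proving Proposition \ref{prop:limiting_times} it had already established \emph{uniform in $t$} convergence of the single-particle ratio to $t\mapsto e^{-\sigma^2_{s:t}}$, and that uniform convergence is simply invoked again at the (finitely many) grid points of $G_\delta$, followed by the same induction over $k$. The mechanism there is worth noting, because it is the correct replacement for your UI step: by independence over co-ordinates the ratio factorizes as the $d$-th power of a \emph{single-co-ordinate} ratio, whose exponent $\tfrac{1}{\sqrt d}\overline S_{s(d):t}$ is uniformly upper bounded (this is where the upper-boundedness of $g$ enters), and a second-order Taylor expansion combined with the uniform $(2+\epsilon)$-moment bounds of Proposition \ref{pr:unif} shows each factor is $1+\tfrac{2a_d(t)}{d}$ with $a_d(t)\rightarrow_t\sigma^2_{s:t}$, whence $(1+\tfrac{2a_d(t)}{d})^d\rightarrow_t e^{2\sigma^2_{s:t}}$. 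Your remaining observations (restarting the increment at $t^{\delta}_{k-1}(d)$ with the un-resampled single particle, and the genericity of the thresholds $a_k$ so that no grid point attains the threshold exactly, handled in the paper by drawing the $a_k$ from an absolutely continuous law as in \cite{delmoral_resampling}) are sensible side points, but the proof needs the factorization-plus-Taylor argument (or an equivalent exponential-moment control) in place of the claimed UI.
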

\begin{proof}
The proof of $t_{1}(d)\rightarrow t_1$ in Proposition \ref{prop:limiting_times} is based on showing uniform convergence
of $$t\mapsto \frac{\mathbb{E}\,
\big[\,
\exp\big\{\frac{1}{\sqrt{d}}\sum_{j=1}^d S_{\phi_0:t,j}\big\}
\,\big]^2}
{\mathbb{E}\,
\big[\,
\exp\big\{\frac{2}{\sqrt{d}}\sum_{j=1}^d S_{\phi_0:t,j}\big\}
\,\big]}$$  to
 $t \mapsto e^{-\sigma^{2}_{\phi_0:t}}$. Repeating this argument also for subsequent time instances gave that 
$t_{k}(d)\rightarrow t_k$ for all relevant $k\ge 1$. This uniform convergence result can now be called upon to provide the proof 
of the current proposition.
\end{proof}
\noindent Also, Theorems \ref{theorem:limit_adaptive}
and \ref{theorem:error_adaptive} hold under these modified times on $G_{\delta}$.

\subsubsection*{Main Result and Interpretation}
We will use the construction in \cite{delmoral_resampling}. The results therein 
determine the behavior of the SMC method for $d$ \emph{fixed} and increasing number of particles $N$, as 
described in the sequel. Define, for a given $\upsilon\in(0,1)$, the following event:
\begin{align*}
\Omega_{d}^N = \Omega_{d}^N (\upsilon,\{&a_k\}_{1\leq k \leq m^*(\delta)}) := \Big\{\,
\textrm{for all }\, 1\leq k \leq m^*(\delta),\, s\in G_{d}\cap [\,t_{k-1}^\delta(d),t_{k}^\delta(d)\,]:
\\
&\big|\,\tfrac{1}{N}\,\textrm{ESS}_{(t_{k-1}^{\delta}(d),s)}(N)-
\textrm{ESS}_{(t_{k-1}^{\delta}(d),s)}\,\big| < \upsilon\,\big|\,\textrm{ESS}_{(t_{k-1}^{\delta}(d),s)}-
a_{k}\big|
\,\Big\}
\end{align*}
where   
\begin{equation*}
\textrm{ESS}_{(t_{k-1}^\delta(d),s)} =  \frac{\mathbb{E}\,\big[\,\exp\big\{
\frac{1}{\sqrt{d}}\sum_{j=1}^d S_{t_{k-1}^{\delta}(d):s,j}\big\}
\,\big]^2}{\mathbb{E}\,\big[\,\exp\big\{
\frac{2}{\sqrt{d}}\sum_{j=1}^d S_{t_{k-1}^{\delta}(d):s,j}\big\}
\,\big]}\
\end{equation*}
corresponds to the expected ESS over a single particle involved in the definition of $\{t_k^{\delta}(d)\}$.
Here $(a_k)_{1\leq k \leq m^*}$ are a collection of thresholds
which are sampled from some absolutely continuous distribution; they are determined in such a way to avoid the degenerate situation when the thresholds $a_k$ coincide with ESS; see \cite{delmoral_resampling} for details.
Now, the definition of $\Omega_{d}^N$ implies the following:
%
\begin{itemize}
\item[1.] Within $\Omega_{d}^{N}$, if the deterministic resampling criteria
tell us to resample, so do the empirical ones. That is:
\begin{equation*}
\textrm{ESS}_{(t_{k-1}^{\delta}(d),s)} > a_{k}\,\,\, \Rightarrow \,\,\,
\tfrac{1}{N}\,\textrm{ESS}_{(t_{k-1}^{\delta}(d),s)}(N) > a_{k} \ ,
\quad s\in G_\delta \cap [\,t_{k-1}^\delta(d),t_{k}^\delta(d)\,]\ ,
\end{equation*}
and
\begin{equation*}
\textrm{ESS}_{(t_{k-1}^{\delta}(d),s)} < a_{k}\,\,\, \Rightarrow \,\,\,
\tfrac{1}{N}\,\textrm{ESS}_{(t_{k-1}^{\delta}(d),s)}(N) < a_{k}\ ,
\quad s\in G_{\delta}\cap[\,t_{k-1}^\delta(d),t_{k}^\delta(d)\,]\ .
\end{equation*}

\item[2.] A consequence of the above is that (this is Proposition~5.3 of \cite{delmoral_resampling}):
\begin{equation*}
\bigcap_{1\leq k \leq m^*(\delta)}\{\,T_k=t_{k}^{\delta}(d)\,\}\, \supset\, 
\Omega_{d}^N\ .
\end{equation*}
\item[3.] Conditionally on $\{a_k\}_{1\leq k \leq m^*(\delta)}$, we have that
$\mathbb{P}\,[\,\Omega\setminus\Omega_{d}^N\,]\rightarrow 0$ as $N$ grows \cite[Theorem 5.4]{delmoral_resampling} ($d$ is fixed).
\end{itemize}
The above results provide the interpretation that, with a probability that
increases to $1$ with $N$, the theoretical resampling times $\{t_k^{\delta}(d)\}$ will coincide with the practical 
$\{T_k = T_{k}^{\delta,N}(d)\}$, for any fixed dimension $d$.

Our own contribution involves looking at the stability of these results as the dimension grows, $d\rightarrow\infty$.

\begin{theorem}\label{theo:stoch_times}
Assume (A\ref{hyp:A}-\ref{hyp:B}) and that $g\in\mathscr{L}_{V^r}$, with $r\in[0,\tfrac{1}{2})$.
Conditionally on almost every realization of the random threshold parameters $\{a_k\}$, there exists an $M=M(m^*(\delta))<\infty$ such that for any
$1\leq N<\infty$, we have
\begin{equation*}
\lim_{d\rightarrow\infty}\mathbb{P}\,[\,\Omega\setminus\Omega_{d}^N\,]
\leq \frac{M}{\sqrt{N}}\ .
\end{equation*}
\end{theorem}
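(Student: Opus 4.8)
The plan is to bound $\mathbb{P}[\Omega\setminus\Omega_d^N]$ by combining the construction of \cite{delmoral_resampling} recalled above with the weak-convergence results of this section, and then to use Markov's inequality to turn an $\mathbb{L}_1$-estimate into the advertised $1/\sqrt N$ rate. Write $m_k^d(s)=\textrm{ESS}_{(t_{k-1}^\delta(d),s)}$ for the deterministic single-particle ESS entering the definition of $\Omega_d^N$ and $Y_k^{d,N}(s)=\tfrac1N\,\textrm{ESS}_{(t_{k-1}^\delta(d),s)}(N)$ for its empirical counterpart, both evaluated in the system whose resampling times are the theoretical ones $\{t_j^\delta(d)\}$. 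Since there are only $m^*(\delta)<\infty$ blocks, a union bound gives
\[
\mathbb{P}[\Omega\setminus\Omega_d^N]\ \leq\ \sum_{k=1}^{m^*(\delta)}\mathbb{P}\Big[\exists\,s:\ |Y_k^{d,N}(s)-m_k^d(s)|\ \geq\ \upsilon\,|m_k^d(s)-a_k|\Big],
\]
so it suffices to show that the $\limsup_{d\to\infty}$ of each summand is at most $M_k/\sqrt N$, with $M_k<\infty$ depending only on the fixed data $\delta,\upsilon$ and (for a.e.\ realization) the thresholds $\{a_k\}$; summing then yields $M=M(m^*(\delta))$.

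Fix a block $k$ and condition on the resampled configuration at $t_{k-1}^\delta(d)$. By \eqref{eq:target}--\eqref{eq:kernel} the $N$ particles in this block are then independent with i.i.d.\ coordinates, and by the ergodicity of $\{k_s\}$ the limiting asymptotic variance $\sigma^2_{t_{k-1}^\delta:s}$ does not depend on that (random) configuration, so the bound we obtain will be uniform in it. The martingale functional CLT underlying Theorem~\ref{theorem:limit_adaptive}, applied coordinate-by-coordinate and particle-by-particle, yields joint weak convergence of $s\mapsto\tfrac1{\sqrt d}\sum_{j=1}^d S^i_{t_{k-1}^\delta(d):s,j}$, $i=1,\dots,N$, to independent time-changed Brownian motions; hence $Y_k^{d,N}(\cdot)\Rightarrow Y_k^N(\cdot)$ with $Y_k^N(s)=\big(\tfrac1N\sum_i e^{B_i(\sigma^2_{t_{k-1}^\delta:s})}\big)^2\big/\big(\tfrac1N\sum_i e^{2B_i(\sigma^2_{t_{k-1}^\delta:s})}\big)$ for independent Brownian motions $B_i$, while by the uniform convergence established in the proof of Proposition~\ref{prop:limiting_times_2}, $m_k^d(\cdot)\to m_k(\cdot):=e^{-\sigma^2_{t_{k-1}^\delta:\,\cdot}}$ uniformly on the block. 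Because the thresholds are sampled from an absolutely continuous law, for a.e.\ realization the gap $\eta_k:=\min_s|m_k(s)-a_k|$, over the finitely many grid points at which the resampling criterion is actually evaluated, is strictly positive (the crossing $m_k(s)=a_k$ occurring strictly between two such points, by the very definition of $t_k^\delta(d)$). Thus for $d$ large the $k$-th summand is dominated by $\mathbb{P}[\sup_s|Y_k^{d,N}(s)-m_k^d(s)|\geq\tfrac12\upsilon\eta_k]$, whose $\limsup_d$ is $\leq\mathbb{P}[\sup_s|Y_k^N(s)-m_k(s)|\geq\tfrac12\upsilon\eta_k]$ upon passing the supremum through the weak limit.

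It then remains to bound $\mathbb{E}\big[\sup_s|Y_k^N(s)-m_k(s)|\big]$ by $M/\sqrt N$: Markov's inequality converts this into the desired $1/\sqrt N$ rate, and the bound on the expectation is exactly the computation in the proof of Theorem~\ref{theo:mc_error}. Writing the centred ratio $Y_k^N(s)-m_k(s)$ in terms of the empirical means of $e^{B_i(v)}$ and $e^{2B_i(v)}$, with $v=\sigma^2_{t_{k-1}^\delta:s}$ ranging over the bounded set $[0,\sigma_\star^2]$ and $m_k(s)=e^{-v}=\mathbb{E}[e^{B(v)}]^2/\mathbb{E}[e^{2B(v)}]$, the Marcinkiewicz--Zygmund and $C_p$ inequalities give a pointwise bound of order $1/\sqrt N$ whose constant is uniform in $s$; the same moment estimates furnish a H\"older modulus for $v\mapsto Y_k^N$, so a Kolmogorov/maximal-inequality argument upgrades the pointwise bound to a bound on the supremum. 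The main obstacle is the middle step: legitimising the $d\to\infty$ passage \emph{uniformly} in the time variable $s$ while keeping $|m_k^d(s)-a_k|$ bounded away from zero. This is precisely why the full functional CLT (not a pointwise one) is needed, why it is essential that the resampling criterion only probes the coarse grid $G_\delta$ (so the delicate crossing $m_k(s)=a_k$ is never tested at an evaluation point), and why one invokes the absolute continuity of the thresholds; by contrast, the $1/\sqrt N$ rate itself is a routine consequence of the Marcinkiewicz--Zygmund estimate already used for Theorem~\ref{theo:mc_error}.
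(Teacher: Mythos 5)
Your overall strategy coincides with the paper's: a union bound over the $m^*(\delta)$ blocks, positivity of the gap $|\textrm{ESS}-a_k|$ coming from the absolute continuity of the thresholds, the convergence in Theorem \ref{theorem:limit_adaptive} to identify the $d\to\infty$ limit of the empirical-versus-deterministic ESS discrepancy, and a Marcinkiewicz--Zygmund estimate on the limiting quantity to extract the $1/\sqrt{N}$ rate. The two differences are in the order of operations and in how the time variable $s$ is handled. The paper applies Markov's inequality \emph{first}, reducing the problem to $\max_{k,s}\mathbb{E}\,[\,|\tfrac{1}{N}\textrm{ESS}_{(t_{k-1}^{\delta}(d),s)}(N)-\textrm{ESS}_{(t_{k-1}^{\delta}(d),s)}|\,]$ over the \emph{finite} set of pairs $(k,s)$ with $s$ ranging over the coarse grid $G_\delta$ (this is where the factor $\delta$ enters the constant $M$), and only then lets $d\to\infty$ pointwise in each $(k,s)$ using Theorem \ref{theorem:limit_adaptive}; no uniformity in $s$ beyond a finite maximum is ever needed. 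You instead pass to the weak limit first, keeping a supremum over $s$ inside the probability, and then apply Markov to $\mathbb{E}[\sup_s|Y_k^N(s)-m_k(s)|]$.

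The one place where your version has a real soft spot is the claim that ``a Kolmogorov/maximal-inequality argument upgrades the pointwise bound to a bound on the supremum'' while preserving the $1/\sqrt{N}$ constant. If the supremum is genuinely over a continuum of $v=\sigma^2_{t_{k-1}^\delta:s}$, this requires $\mathbb{L}_p$ increment estimates of the form $\|\,(Y_k^N(v)-m_k(v))-(Y_k^N(v')-m_k(v'))\,\|_p\leq M|v-v'|^{\alpha}/\sqrt{N}$ and a chaining argument; you assert this but do not carry it out, and it is not a one-line consequence of the pointwise Marcinkiewicz--Zygmund bound. Fortunately it is also unnecessary: as you yourself observe, the resampling criterion is only evaluated at the points of the fixed finite grid $G_\delta$, so the supremum is a maximum over at most $\delta+1$ points and a further union bound (costing only a factor $\delta$ in $M$) closes the argument with purely pointwise convergence. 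Relatedly, your remark that ``the full functional CLT (not a pointwise one) is needed'' sits in tension with your own observation about the coarse grid; over $G_\delta$ a finite-dimensional CLT suffices, and this is exactly how the paper avoids any uniform-in-$s$ weak convergence statement at this stage. With the supremum replaced by the finite maximum, your argument is correct and essentially the paper's.
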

%
%
%
\noindent The proof  in Appendix \ref{appendix:stoch_times} focuses on point 3.\@
above of the results in  \cite{delmoral_resampling}. 
Thus,  investigation of the times $\{t_{k}^{\delta}\}$ involving only the asymptotic variance 
function $\sigma^2_{s:t}$ can provide an understanding for the number and location
of resampling times of the practical algorithm that uses the empirical ESS. This is because, with high probability,
that depends on the number of particles (uniformly in $d$), 
the practical resampling times will coincide with $\{t_{k}(d)\}$.


\section{Example on Symmetric Random Walk}
\label{sec:examples}


We will now verify assumptions (A\ref{hyp:A}-\ref{hyp:B}) when the $\pi_s$-invariant transition kernel
is a Random-Walk Metropolis (RWM) algorithm, with proposed increments $\mathcal{N}(0,s^{-1})$.
That is:
\begin{equation*}
 q_s(x,dy) =\frac{\sqrt{s}}{\sqrt{2\pi}}\, e^{-s\,\frac{(y-x)}{2}^2}\,dy
\end{equation*}
with acceptance probability:
\begin{equation*}
a_s(x,y) = 1 \wedge \frac{\pi_s(y)}{\pi_s(x)} \ .
\end{equation*}
For simplicity we set $q_s(dy)\equiv q_s(0,dy)$.
That is, we will look at the Markov kernel:
\begin{equation}
 k_s(x,dy) = a_s(x,y)\,q_{s}(x,dy) + \delta_{x}(dy)\int_{E}(1-a_s(x,y))q_s(x,dy)\ \label{eq:mh_kernel}.
\end{equation}
Notice that we assume that the variance of the proposal is $1/s$, $s\in[\phi_0,1]$.
One can use $f(s)^{-1}$ for the proposal variance, where $f$ is a bounded positive continuous function that is monotonically increasing with a bounded derivative. This is omitted only for notational clarity and using $f$ in the proofs will only complicate the subsequent notations.

We will assume that for every $s\in[\phi_0,1]$ one has
\begin{itemize}
\item{$\pi_s$ is bounded away from zero on compact sets and is upper-bounded.}
\item{$\pi_s$ is super-exponential with asymptotically regular contours; see \cite{jarner} for details.}
\end{itemize}
We will add the condition
\begin{equation}
C^* := \sup_{x\in\mathbb{R}\,,s\in[\phi_0,1]} \bigg\{\int_{A(x)^c}G(x,z) q_s(z)dz\bigg\} < + \infty\label{eq:assumption_a2}
\end{equation}
with $G(x,z)=g(x)-g(x+z)>0$ on $A(x)^c$ (see \eqref{eq:a(x)_def} for details on $A(x)$). This assumption is used to simplify some calculations in the proof and is verifiable (see Remark \ref{rem:verify_cond}).
The above assumptions will be termed E in the following proposition. The proof can be found in
Appendix~\ref{app:verify}.

\begin{prop}\label{prop:verify}
Assume (E). Then the symmetric random walk kernel \eqref{eq:mh_kernel} satisfies (A1-2).
\end{prop}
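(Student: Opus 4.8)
The plan is to verify conditions (A\ref{hyp:A}(i)-(iii)) and (A\ref{hyp:B}) by exploiting the now-classical drift-and-minorization theory for Random-Walk Metropolis (RWM) algorithms developed by Jarner and Hansen (\cite{jarner}) and Roberts-Tweedie, and then checking that all the relevant constants can be chosen \emph{uniformly} in $s\in[\phi_0,1]$. The uniformity is the crux: Jarner-Hansen give, for a single super-exponential target with asymptotically regular contours, a Lyapunov function of the form $V_s(x)=e^{\beta\,s\,g(x)/2}$ (or $V(x)\propto \pi_s(x)^{-1/2}$ up to normalisation), a geometric drift $k_s V_s \le \lambda_s V_s + b_s \mathbb I_C$, and a one-step minorization on compact sets. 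I would first fix a single Lyapunov function $V$ independent of $s$ — the natural choice is $V(x)=e^{\phi_0\, g(x)/2}$ (using that $g$ is upper bounded so $V\ge 1$ after normalisation, and $V(x)\to\infty$ as $|x|\to\infty$ by super-exponentiality), or more robustly $V(x)=c_0\,\pi_{\phi_0}(x)^{-1/2}$ — and then show it works simultaneously for every $k_s$.

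Concretely, the steps are: (1) \textbf{Minorization (A\ref{hyp:A}(i) and (iii)).} For any compact $C$ (in particular a level set $C_c=\{V\le c\}$), show there is $\theta\in(0,1)$ and $\nu\in\mathscr P(E)$ with $k_s(x,\cdot)\ge\theta\,\nu(\cdot)$ for all $x\in C$, $s\in[\phi_0,1]$. This follows from the standard RWM argument: on $C$ the proposal density $q_s(x,y)=\sqrt{s/2\pi}\,e^{-s(y-x)^2/2}$ is bounded below on $C$ by a positive constant uniformly in $s$ (since $s\in[\phi_0,1]$ is bounded away from $0$ and $\infty$ and $C$ is compact), and the acceptance probability $a_s(x,y)=1\wedge \pi_s(y)/\pi_s(x)$ is bounded below on $C\times C$ because $\pi_s$ is bounded away from zero and above on compacts, uniformly in $s$ (this is exactly the first bullet of hypothesis (E)). Take $\nu$ proportional to Lebesgue measure restricted to a small ball and $\theta$ the resulting uniform lower bound. (2) \textbf{Drift (A\ref{hyp:A}(ii) and (iii)).} Using the super-exponential / asymptotically-regular-contour hypothesis, reproduce the Jarner-Hansen computation of $k_s V(x)/V(x)$ for large $|x|$: split the proposal region into the "uphill/accept with probability near $1$" set $A(x)=\{z: g(x+z)\ge g(x)\}$ and its complement, and show $\limsup_{|x|\to\infty}\sup_{s}k_sV(x)/V(x)<1$. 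Here condition \eqref{eq:assumption_a2}, which bounds $\int_{A(x)^c}G(x,z)q_s(z)\,dz$ uniformly, is precisely what is needed to control the contribution from the rejection part and obtain a \emph{uniform} $\lambda<1$ and $b<\infty$; one then enlarges $C$ (equivalently increases $c$ in the level-set $C_c$) so that $\lambda+b/(1+c)<1$, which is possible because $\lambda$ is uniformly $<1$ and $b$ is uniformly finite. Also $\pi_{\phi_0}(V)<\infty$ needs checking: with $V\propto\pi_{\phi_0}^{-1/2}$ this is $\int \pi_{\phi_0}(x)^{1/2}dx<\infty$, which holds because $\pi_{\phi_0}=\pi^{\phi_0}/Z$ with $\pi^{\phi_0/2}$ integrable by super-exponentiality (again $g$ upper bounded plus $g(x)\to-\infty$ fast). (3) \textbf{Lipschitz perturbation (A\ref{hyp:B}).} Show $|||k_s-k_t|||_V\le M|s-t|$. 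I would write $k_s-k_t$ as a sum of the difference in the accepted part, $a_s q_s - a_t q_t$, plus the difference in the rejection masses on the diagonal, and bound each in $V$-norm by differentiating in $s$: both $s\mapsto q_s(x,y)$ (explicit Gaussian, smooth in $s$ on $[\phi_0,1]$) and $s\mapsto a_s(x,y)=1\wedge e^{s(g(y)-g(x))}$ are Lipschitz in $s$ with constants controlled by $|g(y)-g(x)|$, and the $V$-weighting together with the Gaussian tail of $q_s$ makes $\int |g(y)-g(x)|\,V(y)\,q_s(dy)$ finite and bounded uniformly in $x$ after dividing by $V(x)$ — this is where one uses that $V$ grows only like $e^{\phi_0 g/2}$ while the Gaussian proposal integrates any polynomial-in-$g$ growth, and that $g$ is upper bounded.

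The main obstacle I anticipate is step (2): getting the drift inequality to hold with constants \emph{uniform} in $s$, rather than merely for each fixed $s$. The Jarner-Hansen estimates are asymptotic ($|x|\to\infty$) and their implied constants depend on the target through quantities like the rate of decay of $g$ and the contour regularity; one must check that, as $s$ ranges over the compact interval $[\phi_0,1]$, these quantities vary continuously and hence are bounded, so a single $(\lambda,b,C_c)$ works for all $s$. The extra hypothesis \eqref{eq:assumption_a2} is exactly the device that makes this uniform control transparent, so the proof in the appendix should amount to carefully tracking the $s$-dependence through the Jarner-Hansen argument while invoking \eqref{eq:assumption_a2}, the boundedness of $g$ from above, and compactness of $[\phi_0,1]$. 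A secondary (but routine) nuisance is ensuring that the same compact set $C$ (or level set $C_c$) simultaneously serves as the small set in (i), satisfies the drift in (ii), and meets the reinforced condition $\lambda+b/(1+c)<1$ in (iii); this is handled by first obtaining the uniform drift with \emph{some} $(\lambda,b)$ outside a large ball and then taking $c$ large.
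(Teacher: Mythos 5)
Your plan matches the paper on two of the three items but takes a genuinely different route on the drift, which is exactly the step you flag as the main obstacle. Your minorization argument (uniform-in-$s$ lower bound on the Gaussian proposal over a small ball, $\pi_s$ bounded above and away from zero on compacts, Roberts--Tweedie-style small set with $\nu$ proportional to Lebesgue measure) is essentially the paper's argument, and your plan for (A\ref{hyp:B}) (Lipschitz-in-$s$ bounds for $q_s$ and for $s\mapsto 1\wedge e^{s(g(x+z)-g(x))}$, decomposition along $A(x)$ and $A(x)^c$, weighting by $V$) is also the paper's decomposition. For the drift, however, the paper does \emph{not} re-run the Jarner--Hansen computation uniformly in $s$. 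It takes the drift condition for the single kernel $k_{\phi_0}$, which hypothesis (E) supplies via \cite{jarner}, with $V(x)=\|e^{\phi_0 g}\|_{\infty}^{1/2}\,e^{-\frac{\phi_0}{2}g(x)}$, and transfers it to every $k_s$ by Lemma 5 of \cite{andrieu2} combined with the uniform ratio bound $q_s(x,y)/q_{\phi_0}(x,y)\leq \phi_0^{-1/2}$, which gives $k_s(V)(x)\leq \phi_0^{-1/2}\big(k_{\phi_0}(V)(x)-V(x)\big)+V(x)$ and hence a single pair $(\widetilde{\lambda},\widetilde{b})$ valid for all $s\in(\phi_0,1]$; condition (A\ref{hyp:A})(iii) is then dispatched as in \cite{doucchains}. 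Your direct approach (tracking the $s$-dependence through the asymptotic Jarner--Hansen estimates and using compactness of $[\phi_0,1]$) is plausible but considerably more laborious; the comparison-lemma trick is precisely what buys uniformity for free.

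Two smaller points you should fix. First, the sign of your first Lyapunov candidate is wrong: since $g$ is upper bounded and $g(x)\rightarrow-\infty$, the function $e^{+\phi_0 g(x)/2}$ is bounded and vanishes at infinity; you need $V\propto e^{-\phi_0 g/2}=c\,\pi_{\phi_0}^{-1/2}$, i.e.\ your second (``more robust'') option, which is the paper's choice. Second, in the paper assumption \eqref{eq:assumption_a2} is invoked in the verification of (A\ref{hyp:B}), not in the drift: on $A(x)^c$ the identity $e^{-\frac{\phi_0}{2}g(x+z)}e^{-\frac{\phi_0}{2}G(x,z)}=e^{-\frac{\phi_0}{2}g(x)}$ collapses the $V$-weight exactly, and what remains is $\int_{A(x)^c}G(x,z)\,q_s(z)\,dz$, which is bounded uniformly in $x$ \emph{only} because of \eqref{eq:assumption_a2}. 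Your appeal to ``the Gaussian proposal integrates any polynomial-in-$g$ growth'' does not deliver this uniformity for a general upper-bounded $g$, so you should invoke \eqref{eq:assumption_a2} explicitly at that point (as well as, if you keep your direct drift computation, in the rejection-term control there).
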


\begin{rem}\label{rem:verify_cond}
It is straightforward to verify (A1) using standard results in the literature. However, (A2) is non-standard, due to the difference of invariant measures present in \eqref{eq:mh_kernel}.
Note, for \eqref{eq:assumption_a2}, that if $g(x)=-x^2/2$  then
$
G(x,z) = \frac{1}{2}[z^2+2xz]\ .
$
Hence we have
$$
\int_{A(x)^c} G(x,z)q_s(z)dz \leq \frac{1}{2s}\leq \frac{1}{2\phi_0}\ .
$$
Thus, assumption (\ref{eq:assumption_a2}) will hold in the Gaussian case.
\end{rem}

\section{Discussion and Extensions}\label{sec:summary}

We now discuss the general context of our results, provide some extra results and look at potential generalizations.

\subsection{On the Number of Bridging Steps}

Our analysis has relied on using $\mathcal{O}(d)$  bridging steps. An important question
is what happens when one has more or less time steps.
We restrict our discussion to the case where one does not resample, but one can easily extend the results to the resampling scenario.
Suppose one takes $\lfloor d^{1+\delta}\rfloor$ steps, for some real $\delta>-1$ and annealing sequence:
$$
\phi_n = \phi_0 + \tfrac{n(1-\phi_0)}{\lfloor d^{1+\delta} \rfloor} \ , \quad n \in\{0,\dots, \lfloor d^{1+\delta} \rfloor\}\ .
$$
We are to consider the weak convergence of the centered log-weights, which are now equal to:
$$
\frac{\sqrt{d}}{\lfloor d^{1+\delta} \rfloor^{1/2}}\,\alpha_{i}(d)\,$$
where we have defined
$$
\alpha_i(d) = \frac{1}{\sqrt{d}}\sum_{j=1}^d \overline{W}_j(d)\,;\quad \overline{W}_j(d) = W_j(d)-\Exp\,[\,W_j(d)\,]\ ,
$$
with $i\in\{1,\dots,N\}$ and
$$
W_j(d) = \frac{1-\phi_0}{\lfloor d^{1+\delta} \rfloor^{1/2}}\sum_{n=1}^{\lfloor d^{1+\delta} \rfloor}\big\{\,g(x_{n-1,j}) - \pi_{n-1}(g)\,\big\}\ .
$$
One can follow the arguments of Theorem \ref{theo:main_result} to deduce that, under our conditions:
\begin{equation}
\label{eq:alex}
\alpha_{i}(d) \Rightarrow \mathcal{N}(0,\sigma^2_{\star})\ .
\end{equation}
This observation can the be used to provide the  following result.
\begin{cor}\label{cor:ess=N}
Assume (A\ref{hyp:A}(i)(ii), A\ref{hyp:B}) and that
$g\in\mathscr{L}_{V^r}$ for some $r\in[0,\tfrac{1}{2})$.
Then, for any fixed $N>1$:
\begin{itemize}
\item If $\delta >0$ then  $\textrm{\emph{ESS}}_{(0,\lfloor d^{1+\delta} \rfloor)}(N)\rightarrow_{\mathbb{P}} N$.
\item If $-1<\delta<0$ then $\textrm{\emph{ESS}}_{(0,\lfloor d^{1+\delta} \rfloor)}(N)\rightarrow_{\mathbb{P}} 1$.
\end{itemize}
\end{cor}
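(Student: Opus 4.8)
The plan is to reduce everything to the weak convergence statement \eqref{eq:alex}, which the text already grants us under the stated conditions. Recall that, exactly as in the proof of Theorem~\ref{theo:main_result}, the particles remain independent and the centered log-weight of particle $i$ equals $\tfrac{\sqrt{d}}{\lfloor d^{1+\delta}\rfloor^{1/2}}\,\alpha_i(d)$, with $(\alpha_1(d),\dots,\alpha_N(d))$ independent across $i$ and each $\alpha_i(d)\Rightarrow\mathcal{N}(0,\sigma^2_\star)$. Writing $\beta_i(d) := \tfrac{\sqrt{d}}{\lfloor d^{1+\delta}\rfloor^{1/2}}\,\alpha_i(d)$, one has
$$
\textrm{ESS}_{(0,\lfloor d^{1+\delta}\rfloor)}(N) = \frac{\big[\sum_{i=1}^N e^{\beta_i(d)}\big]^2}{\sum_{i=1}^N e^{2\beta_i(d)}}\,,
$$
so the whole corollary is a statement about the limiting behaviour of the scaling factor $c_d := \sqrt{d}/\lfloor d^{1+\delta}\rfloor^{1/2}$ multiplying an $\mathcal{O}_{\mathbb P}(1)$ vector.

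First I would treat the case $\delta>0$. Here $c_d \sim d^{-\delta/2}\to 0$, so by Slutsky (since $(\alpha_1(d),\dots,\alpha_N(d))$ is tight, being weakly convergent) we get $(\beta_1(d),\dots,\beta_N(d))\to_{\mathbb P}(0,\dots,0)$. The map $(\beta_1,\dots,\beta_N)\mapsto [\sum_i e^{\beta_i}]^2/\sum_i e^{2\beta_i}$ is continuous and equals $N$ at the origin, so the continuous mapping theorem gives $\textrm{ESS}_{(0,\lfloor d^{1+\delta}\rfloor)}(N)\to_{\mathbb P} N$. Next, the case $-1<\delta<0$. Now $c_d\sim d^{-\delta/2}=d^{|\delta|/2}\to\infty$. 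The key observation is that the ESS functional is invariant under adding a common constant to all $\beta_i$, so
$$
\textrm{ESS}_{(0,\lfloor d^{1+\delta}\rfloor)}(N) = \frac{\big[\sum_{i=1}^N e^{c_d(\alpha_i(d)-\alpha_{(N)}(d))}\big]^2}{\sum_{i=1}^N e^{2c_d(\alpha_i(d)-\alpha_{(N)}(d))}}\,,
$$
where $\alpha_{(N)}(d) = \max_i \alpha_i(d)$. Since the $\alpha_i(d)$ converge jointly to $N$ i.i.d.\ continuous (Gaussian) variables, the event that the maximiser is unique and the gap $\min_{i\neq i^\ast}(\alpha_{(N)}(d)-\alpha_i(d))$ exceeds some fixed $\eta>0$ has probability tending to $1$ as $\eta\to 0$ then $d\to\infty$; on that event every non-maximal summand is $\leq e^{-c_d\eta}\to 0$, forcing both numerator and denominator to converge to $1$, hence the ratio to $1$. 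A clean way to package this is: for any $\epsilon>0$ choose $\eta$ small so that $\mathbb P(\text{gap}>\eta)>1-\epsilon$ in the limit, and on that event bound $|\textrm{ESS}-1|$ by a deterministic function of $e^{-c_d\eta}$ and $N$ that vanishes; then let $\epsilon\downarrow 0$.

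The main obstacle, and the only place requiring genuine care, is the $\delta<0$ argument: one must pass from the joint weak convergence $(\alpha_1(d),\dots,\alpha_N(d))\Rightarrow(X_1,\dots,X_N)$ with $X_i$ i.i.d.\ $\mathcal{N}(0,\sigma^2_\star)$ to control of the \emph{spacings} $\alpha_{(N)}(d)-\alpha_i(d)$ uniformly in $d$, exploiting that the $X_i$ are a.s.\ distinct since $\sigma^2_\star>0$ and the Gaussian law is atomless. This is a standard application of the portmanteau theorem (the set $\{(x_1,\dots,x_N): \text{gap}\leq\eta\}$ is closed with limiting probability $\to 0$ as $\eta\to 0$), but it does use $\sigma^2_\star>0$, which should be noted; if $\sigma^2_\star=0$ then in fact $\beta_i(d)\to_{\mathbb P}0$ for every $\delta$ and $\textrm{ESS}\to_{\mathbb P} N$ trivially, so the $\delta<0$ statement implicitly presumes $\sigma^2_\star>0$. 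Everything else — the Slutsky step, the continuous mapping theorem, the shift-invariance of the ESS functional — is routine.
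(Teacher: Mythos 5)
Your proposal is correct and follows essentially the same route as the paper: the $\delta>0$ case is the same Slutsky/continuous-mapping argument, and for $-1<\delta<0$ both proofs exploit the shift-invariance of the ESS by recentring at the maximum and using the joint weak convergence of $(\alpha_1(d),\dots,\alpha_N(d))$ to i.i.d.\ Gaussians to send the non-maximal scaled differences to $-\infty$ (the paper phrases this via weak convergence of the ordered, recentred variables to a.s.\ negative limits, your $\eta$-gap/portmanteau packaging being an equivalent formalisation). Your remark that the $\delta<0$ conclusion implicitly requires $\sigma_{\star}^2>0$ is a fair observation that the paper leaves tacit.
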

\begin{proof}
Following (\ref{eq:alex}), if $\delta>0$ then we have that
$
\frac{\sqrt{d}}{\lfloor d^{1+\delta} \rfloor^{1/2}}\,\alpha_i(d) \rightarrow_{\mathbb{P}} 0.
$
All particles are independent, so the proof of the ESS convergence follows easily.

For the case when $-1<\delta<0$ we work as follows. We consider the maximum $M(d) =\max\{\alpha_{i}(d);1\le i\le d\}$.
Let $\bar{\alpha}_{(1)}(d) \leq \bar{\alpha}_{(2)}(d) \leq \cdots \leq \bar{\alpha}_{(N)}(d)$ denote the ordering of
the variables $\alpha_{1}(d)-M(d),\, \alpha_{2}(d)-M(d),\,\ldots,\,\alpha_{N}(d)-M(d)$. We have that (setting for notational
convenience $f_d:= \frac{\sqrt{d}}{\lfloor d^{1+\delta} \rfloor^{1/2}}$):
\begin{equation}
\label{eq:bla}
\textrm{{ESS}}_{(0,\lfloor d^{1+\delta} \rfloor)}(N) =
\frac{\big(\,\sum_{i=1}^{N} e^{\alpha_{i}(d)f_d}\,\big)^2 }
{\sum_{i=1}^{N} e^{2\,\alpha_i(d)\,f_d}}\equiv
\frac{\big(\,1+\sum_{i=1}^{N-1} e^{\bar{\alpha}_{(i)}(d)\,f_d}\,\big)^2}{1+\sum_{i=1}^{N-1} e^{2\,\bar{\alpha}_{(i)}(d)\,f_d}} \ .
\end{equation}
Due to the continuity of the involved mappings, the fact that $(\alpha_1(d),\ldots, \alpha_{N}(d))\Rightarrow
\mathcal{N}(0,\sigma_{\star}^2\,I_N)$ implies the weak limit
$(\bar{\alpha}_{(1)}(d),\ldots, \bar{\alpha}_{(N-1)}(d))\Rightarrow (\bar{\alpha}_{(1)},\ldots, \bar{\alpha}_{(N-1)})$ as $d\rightarrow \infty$
with the latter variables denoting the ordering $\bar{\alpha}_{(1)}\leq \bar{\alpha}_{(2)}\leq \cdots \leq  \bar{\alpha}_{(N)}\equiv 0$
of $\alpha_1-M,\alpha_2-M, \, \ldots, \alpha_N-M$ where the $\alpha_i$'s are i.i.d.\@ from $\mathcal{N}(0,\sigma^2_{\star})$
and $M$ is their maximum. Since $(\bar{\alpha}_{(1)}(d),\ldots, \bar{\alpha}_{(N-1)}(d))$ and their weak limit take a.s.\@ negative values,
we have that $(\bar{\alpha}_{(1)}(d)f_d,\ldots, \bar{\alpha}_{(N-1)}(d)f_d)\Rightarrow (-\infty, \ldots, -\infty)$ which (continuing from (\ref{eq:bla})) implies the
stated result.
\end{proof}
For the stable scenario, with $\delta>0$, we also have the following.
\begin{cor}
Assume (A\ref{hyp:A}(i)(ii), A\ref{hyp:B}) with $g\in\mathscr{L}_{V^r}$ for some $r\in[0,\tfrac{1}{2})$. Then
for any $1\leq \varrho <\infty$,
 $N\geq 1$,  $\varphi\in\mathcal{C}_b(\mathbb{R})$, $\delta>0$:
\begin{equation*}
\lim_{d\rightarrow\infty}
\bigg\|\sum_{i=1}^N
\frac{w_d(X_{0:\lfloor d^{1+\delta} \rfloor-1}^{i})}{\sum_{l=1}^{N}w_d(X_{0:\lfloor d^{1+\delta} \rfloor-1}^{l})}
\varphi(X_{\lfloor d^{1+\delta} \rfloor,1}^i)-\pi(\varphi)\bigg\|_{\varrho} =
\bigg\|
\frac{1}{N}\sum_{i=1}^N\varphi(Z_i)
-\pi(\varphi)
\bigg\|_{\varrho}
\end{equation*}
where $Z_i\stackrel{\textrm{i.i.d.}}{\sim}\pi$.
\end{cor}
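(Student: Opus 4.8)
The plan is to mimic the proof of Theorem \ref{theo:mc_error}, but to exploit the fact that when $\delta>0$ the importance weights become asymptotically flat, which removes the need for the exponential correction terms appearing there. Write, as in the proof of Corollary \ref{cor:ess=N}, $f_d = \sqrt{d}/\lfloor d^{1+\delta}\rfloor^{1/2}$, so that the centered log-weight of particle $i$ equals $f_d\,\alpha_i(d)$ with $\alpha_i(d)$ as in \eqref{eq:alex}. By \eqref{eq:alex} and $f_d\to 0$ we have $f_d\,\alpha_i(d)\to_{\mathbb{P}}0$ for each $i$; since the multiplicative constant common to all particles cancels in the normalized weights and the $N$ particles are independent, it follows that the vector $\big(\,w_d(X_{0:\lfloor d^{1+\delta}\rfloor-1}^{i})/\textstyle\sum_{l=1}^N w_d(X_{0:\lfloor d^{1+\delta}\rfloor-1}^{l})\,\big)_{i=1}^{N}$ converges in probability to $(1/N,\dots,1/N)$ as $d\to\infty$.

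Next I would show $\big(X^1_{\lfloor d^{1+\delta}\rfloor,1},\dots,X^N_{\lfloor d^{1+\delta}\rfloor,1}\big)\Rightarrow(Z_1,\dots,Z_N)$ with the $Z_i$ i.i.d.\ from $\pi$. Because the particles are independent it suffices to treat $N=1$, and this is exactly the statement of Proposition \ref{prop:marginals} with $\lfloor d^{1+\delta}\rfloor$ replacing $d$: it follows from Proposition \ref{prop:prop} in the Appendix applied with the constant time sequences $s(d)\equiv\phi_0$, $t(d)\equiv1$, the only change being that the annealing grid is now finer, which does not affect the argument. Combining the two convergences, and using that the normalized weight vector converges to the \emph{deterministic} limit $(1/N,\dots,1/N)$, the pair (normalized weights, terminal first co-ordinates) converges jointly in distribution, so the continuous mapping theorem yields $\sum_{i=1}^N \big(w_d(X^i_{0:\lfloor d^{1+\delta}\rfloor-1})/\sum_{l=1}^N w_d(X^l_{0:\lfloor d^{1+\delta}\rfloor-1})\big)\,\varphi(X^i_{\lfloor d^{1+\delta}\rfloor,1})-\pi(\varphi)\Rightarrow \tfrac1N\sum_{i=1}^N\varphi(Z_i)-\pi(\varphi)$.

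Finally I would upgrade this weak convergence to convergence of the $\mathbb{L}_\varrho$-norms. Because the normalized weights are nonnegative and sum to one and $\varphi\in\mathcal{C}_b(\mathbb{R})$, the random variable inside the norm is bounded in absolute value by $2\|\varphi\|_\infty$ uniformly in $d$; hence its $\varrho$-th power is uniformly integrable, and together with the weak convergence just established (and the continuous mapping theorem applied to $x\mapsto|x|^\varrho$) this gives convergence of $\mathbb{E}\,\big|\,\sum_i (w_d^i/\sum_l w_d^l)\varphi(X^i_{\lfloor d^{1+\delta}\rfloor,1})-\pi(\varphi)\,\big|^{\varrho}$ to $\mathbb{E}\,\big|\,\tfrac1N\sum_i\varphi(Z_i)-\pi(\varphi)\,\big|^{\varrho}$; taking $\varrho$-th roots gives the claim. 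The main point requiring care is the joint weak convergence step, but — in contrast to Theorem \ref{theo:mc_error}, where one needs the asymptotic independence between the limiting log-weight and the terminal co-ordinate — it is genuinely easy here precisely because the weight part converges to a constant; the only other thing to verify is that the Appendix results quoted for $d$ bridging steps transfer verbatim to $\lfloor d^{1+\delta}\rfloor$ steps, which they do since the relevant estimates are driven by $\pi_{\phi_0}(V)<\infty$ and the continuum of kernels $\{k_s\}$, not by the number of steps.
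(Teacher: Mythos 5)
Your proof is correct and follows essentially the same route as the paper, whose own proof is simply the observation that the result follows from the argument of Theorem \ref{theo:mc_error} combined with Corollary \ref{cor:ess=N}: you spell out exactly those ingredients (degeneracy of the normalized weights to $1/N$ from $f_d\,\alpha_i(d)\rightarrow_{\mathbb{P}}0$, marginal convergence of the terminal co-ordinate via Propositions \ref{prop:marginals}/\ref{prop:prop} on the finer annealing grid, and boundedness of the estimator to pass from weak convergence to the $\mathbb{L}_{\varrho}$ limit). Your remark that no independence between the limiting weights and the terminal co-ordinate is needed here, because the weight limit is deterministic, is an accurate account of why the argument simplifies relative to Theorem \ref{theo:mc_error}.
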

\begin{proof}
This follows from the proof of Theorem \ref{theo:mc_error} and Corollary \ref{cor:ess=N}.
\end{proof}

\noindent Thus, a number of steps of $\mathcal{O}(d)$ is a critical regime: less than this, will lead to the algorithm
collapsing w.r.t.~the ESS and more steps is `too-much' effort as one obtains very favourable results.

\subsection{Full-Dimensional Kernels}
\label{sec:full}

An important open problem is the investigation of the stability properties of SMC as $d\rightarrow \infty$
when one uses full-dimensional kernels $K_n(x,dx')$, instead of a product of univariate kernels considered in our analysis.
We will state a conjecture for this case here, indicating the increased technical complexity to the scenario of this article and sketching future research in this direction.
We remain in the i.i.d.\@ context for the target density and do not consider resampling for ease of presentation. Consider the Markov kernel $P_n(x,dx')$
  with invariant density $\Pi_n$ corresponding to RWM  with proposal dynamics ($Z\sim \mathcal{N}_d(0,I_d)$, $l>0$):
\begin{align*}
 X_{pr} = x + \sqrt{h}\,Z\ ;\,\quad h = \tfrac{l^2}{d}\,
\end{align*}
so that $X' = x_{pr}$ with
probability $a(x,x_{pr})=1\wedge \{\Pi_n(x_{pr})/\Pi_n(x)\}$; otherwise $X'=x$.
The particular choice of  step-size $h$ shown in the proposal above as an order of $d$ was found in the MCMC literature (\cite{roberts1, roberts2,bedard}) to provide  algorithms that do not degenerate as $d$ increases.

We consider the standard SMC method in Figure \ref{tab:SMC} under the choice of kernels $K_n = (P_{n})^{d}$ for RWM
so that at each instance $n$ we synthesize $d$ steps from $P_n(x,dx')$. We conjecture that this choice for $K_n(x,dx')$ will
provide a stable SMC method as $d\rightarrow\infty$. Some of the fundamental building blocks of our analysis for the asymptotic properties of the ESS when using product kernels in the previous sections are: (i) the independency over the $d$ co-ordinates; (ii) each co-ordinate is making $\mathcal{O}(1)$-steps in it's state space with dynamics
of appropriate ergodicity properties. As analytically explained in the aforementioned MCMC literature, the convolution
of $d$ steps for RWM provides, \emph{asymptotically}, independency between the co-ordinates, with each co-ordinate making
(essentially) $d$ steps of size $1/d$  along the path (over the time period $[0,1]$) of the following limiting scalar SDE:
\begin{equation}
\label{eq:SDE}
dY_n(t) = \tfrac{a_n(l)l}{2}(\log\pi_n)'(Y_n(t))dt + \sqrt{a_n(l)l}\,d\mathcal{W}_t
\end{equation}
with $a_n(l) = \lim_{d\rightarrow \infty}\Exp\,[\,a(X,X_{pr})\,]\in(0,1)$; the expectation is with $x$ in stationarity, $X\sim \Pi_n$.
Thus, we conjecture that, when considering the centered log-weights:
\begin{equation}
\label{eq:newk}
\frac{1}{\sqrt{d}}\sum_{j=1}^{d}\frac{\sum_{n=1}^{d}\,\{\,g(x_{n-1,j})-\pi_n(g)\,\}}{\sqrt{d}}\,,
\end{equation}
their weak limit would remain unchanged if the dynamics of the Markov chain with kernels $K_n=(P_n)^d$ are replaced with those
of a Markov chain with $K^*_n(x,dx')= \prod_{j=1}^{d}k^*_n(x_j, dx_j')$ where
$k^*_n(x_j,dx_j') = \mathbb{P}\,[\,Y_n(1) \in dx'_j\,|\,Y_n(0)=x_j\,]$ is the transition
density of the SDE (\ref{eq:SDE}). Now, under these dynamics, we are within the context
of our main results in Section \ref{sec:main_result} and, under the assumptions stated there, we can prove weak convergence of (\ref{eq:newk}) to $\mathcal{N}(0,\sigma^2_{\star})$ for $\sigma^2_{\star}$ now involving the
continuum $k^*_s(x_j,dx_j')$ of the SDE transition densities.

Thus, the technical challenge left for future research is proving that:
\begin{equation*}
\frac{1}{d}\sum_{n,j=1}^{d}\,\{\,g(x_{n-1,j})-g(y_{n-1,j}(1))\,\}
\Rightarrow 0 \  ,
\end{equation*}
that requires \emph{coupling} the probability measures $\Pi_0\,K_1\,\cdots K_n$
and $\Pi_0\,
K^{*}_1\,\cdots K^{*}_n$ determining the dymamics of the time-inhomogeneous $d$-dimensional Markov chains $\{x_{0},x_{1},\ldots, x_{d}\}$ and $\{y_{0}(1),y_{1}(1),\ldots, y_{d}(1)\}$ respectively. That is to say, a coupling between the $d$-steps of RWM,  $K_n=(P_n)^{d}$, and the sample paths of the limiting diffusions, determining $K_n^{*}$.
This is certainly a non-trivial task that will go beyond
the aforementioned MCMC literature, as the limiting results are based on convergence of generators and do not require strong path-wise convergence.

Under our conjecture, the SMC method based on full-dimensional RWM kernels, with stabilize at a total cost of
$\mathcal{O}(Nd^{3})$. A similar conjecture for MALA (Metropolis-adjusted Langevin algorithm) will involve stability of the SMC
method at a reduced cost of $\mathcal{O}(Nd^{7/3})$ as for MALA one has to synthesize $\mathcal{O}(d^{1/3})$ steps of size $\mathcal{O}(d^{-1/3})$ to obtain the diffusion limit (see \cite{roberts2}).
Finally, we conjecture that an alternative SMC method that uses $\mathcal{O}(d^2)$ bridging
steps ($\phi_{n} = \phi_0 + n(1-\phi_0)/d^2)$ with RWM transition kernels of step-sizes $h=l/d^2$ as before,
instead of convoluting $d$ full-dimensional kernels (for MALA, that would involve using $\mathcal{O}(d^{4/3})$ bridging steps)
will also be stable for fixed $N$ as $d$ increases. This is because of the
the structural similarity of it's dynamics for blocks of $d$ bridging steps with the previous case; however a proof for this case does not seem to be connected
with the work in our paper and will have to follow a different direction. An analytical solution to this latter issue, by consideration of the variances in the CLT, may help to answer whether or not one should
iterate the MCMC kernel or have more annealing steps in high-dimensional scenarios.

\subsection{Beyond I.I.D.~Targets}

In the MCMC literature, the first attempts to move beyond the i.i.d.\@ context involved looking at restricted classes of models,
see e.g.\@ \cite{brey:00, brey:04, bedard}.
The most recent contributions in this still-open research direction have looked at target distributions
in high-dimensions defined as changes of measure from Gaussian laws (\cite{besk, andrew:1, andrew:2}). This probabilistic structure contains a large family of practically relevant statistical models (see e.g.\@ \cite{besko}). We will discuss an extension of our results in this paper in such a direction.
Following \cite{andrew:1, andrew:2}, we consider a target distribution on an infinite-dimensional separable Hilbert space $\mathcal{H}$ determined via the change of measure:
\begin{equation*}
\frac{d\mathsf{\Pi}}
{d\mathsf{\Pi}_0}(x) \propto \exp\{-\Psi(x)\}\ , \quad x\in \mathcal{H} \ ,
\end{equation*}
for some functional $\Psi:\mathcal{H}\mapsto \mathbb{R}$,
with $\mathsf{\Pi}_0=\mathcal{N}(0,\mathcal{C})$ a Gaussian law on $\mathcal{H}$.
Let $\{e_j\}_{j\in\mathbb{N}}$ be the orthonormal base of $\mathcal{H}$ comprised of eigenvectors of $\mathcal{C}$ with
corresponding eigenvalues $\{\lambda^2_j\}_{n\in\mathbb{N}}$. $\mathsf{\Pi}_0$ can be expressed in terms of it's so-called  Karhunen-Lo\`eve expansion:
$$
\mathsf{\Pi}_0 \stackrel{law}{=}\sum_{i=1}^{\infty}\lambda_j\xi_j\,e_j
$$
where $\xi_j \stackrel{\textrm{i.i.d.}}{\sim} \mathcal{N}(0,1)$.
In practice, one will have to project the target to some $d$-dimensional approximation,
and a standard generic approach is to truncate the basis expansion; that is, to work with the $d$-dimensional target:
$$\Pi(x) \propto \exp\{-\Psi_d(x)-
\tfrac{1}{2}\langle x, C_d^{-1} x \rangle\}\ , \quad x\in \mathbb{R}^{d}\ , $$
with $C_d=\mathrm{diag}\{\lambda_1^2,\ldots,\lambda_d^2\}$ and
$\Psi_d(x)=\Psi(\sum_{j=1}^{d}x_j\,e_j)$.

In connection with the SMC method in this paper,
we will look at the algorithm in Figure \ref{tab:SMC} with bridging densities $\Pi_n(x) \propto\{\Pi(x)\}^{\phi_n}$, where $\phi_{n} = \phi_0 + n(1-\phi_0)/d$,
and propagating kernels $K_n = (P_n)^{d}$, with $P_n$ corresponding to Markov transition of a RWM algorithm with target distribution $\Pi_n$
and proposal:
\begin{equation*}
X_{pr} = X + \sqrt{h}\,C_{d}^{1/2}\,Z \,; \quad h = \tfrac{l^2}{d}\ ,
\end{equation*}
with $Z\sim \mathcal{N}_d(0,I_d)$. Again, we do not consider the possibility of resampling, only for notational simplicity.
Our conjecture here is that this SMC method will be stable as $d\rightarrow \infty$, for fixed number of particles $N$,
at a total computational cost $\mathcal{O}(Nd^{3})$.
In a similar context to Section \ref{sec:full}, it is shown in \cite{andrew:1}
that the above choice of step-size $h$ provides a non-degenerate
MCMC algorithm as $d\rightarrow \infty$. More analytically, asymptotically in $d$, the $d$ steps of Markov transitions $P_n$
correspond to making steps of size $1/d$ on the paths of an $\mathcal{H}$-valued SDE. The centered log-weights will now be:
$$
\frac{1-\phi_0}{d}\sum_{n=1}^{d}\Big(-\Psi_d(x_{n-1}) + \mathbb{E}\,[\,\Psi_d(x_{n-1})\,] -
\tfrac{1}{2}\langle x_{n-1}, C_d^{-1}\,x_{n-1} \rangle  + \tfrac{1}{2}\,\Exp\,[\,\langle x_{n-1}, C_d^{-1}\,x_{n-1}\rangle\,]\,  \Big)
$$
with $X_n\mid X_{n-1}=x_{n-1}\sim K_n(x_{n-1},\cdot)$.
We conjecture here, that starting from a $d$-variate version of the Poisson equation (a generalisation of the univariate version for the results proven in this paper) one should aim at showing:
\begin{align*}
\frac{1}{d}\,\sum_{n=1}^{d}\big\{\Psi_d(x_{n-1}) - \mathbb{E}\,[\,\Psi_d(x_{n-1})\,]\,\big\} &\Rightarrow 0 \ ; \\
\frac{1}{d}\,\sum_{n=1}^{d}\big\{-\tfrac{1}{2}\,\langle x_{n-1}, C_d^{-1}\,x_{n-1} \rangle  + \tfrac{1}{2}\,\Exp\,[\,\langle x_{n-1}, C_d^{-1}\,x_{n-1}\rangle\,]\,\} &\Rightarrow \mathcal{N}(0,\sigma^2_{\star}) \ ,
\end{align*}
for some asymptotic variance $\sigma^{2}_{\star}$.
For the first limit, one should consider a Poisson equation
associated to the functional $x\mapsto \Psi_d(x)$, for the Markov chain with dynamics $K_n$.  For the second limit, the $d$-variate Poisson equation should apply upon the functional
$x\mapsto \langle x, C^{-1}_d\,x\rangle/\sqrt{d}$. Both these functionals seem to stabilize as $d\rightarrow \infty$.
The asymptotic variance $\sigma^2_{\star}$ is expected to involve an integral over the transition density of the limiting $\mathcal{H}$-valued SDEs.


\subsection{Some New Results}

An important application of SMC samplers is in the approximation of the normalizing constant of $\Pi$. This is a non-trivial extension of the work in this article, but we have
obtained the stability in high-dimensions of the relative $\mathbb{L}_2-$error of the SMC estimate; we refer the reader to \cite{beskos1}. This stability is achieved with
a computational cost of $\mathcal{O}(Nd^2)$ with stronger assumptions than in this article.

Recall that we have used the annealing sequence \eqref{eq:tune}. However, one could also
consider a general differentiable, increasing Lipschitz function $\phi(s)$, $s\in[0,1]$ with $\phi(0)=\phi_0\ge 0$,  $\phi(1)=1$, and use the construction
$\phi_{n,d}=\phi(n/d)$; this is also considered in \cite{beskos1}.
The asymptotic results generalized to the choice
of $\phi_{n,d}$ here would involve the variances:
\begin{equation*}
\sigma_{s:t}^{2,\phi}= \int_{s}^{t}
\pi_{\phi(u)}(\widehat{g}_{\phi(u)}^2-k_{\phi(u)}(\widehat{g}_{\phi(u)})^2)
\bigg[\frac{d\phi(u)}{du}\bigg]
d\phi(u)\ , \quad 0\leq s < t \leq 1\label{eq:general_variance}\ ,
\end{equation*}
So for example the bound in Theorem \ref{theo:mc_error} becomes
$$
\frac{M(\varrho)\|\varphi\|_{\infty}}{\sqrt{N}}[e^{\frac{\sigma_{\phi_0:1}^{2,\phi}}{2}\varrho(\varrho-1)} + 1]^{1/\varrho}.
$$
In theory one could use this quantity to choose between SMC algorithms with different annealing schemes; see \cite{beskos1} for some discussion.

An interesting avenue to pursue is the stability of the SMC approximation
of multi-level Feynman-Kac formulae \cite{delmoral}. This is particularly important for problems in rare-events analysis. In this case one introduces a sequence of sets which converge to the rare region of interest.
The question is how to parameterize the sets such that, as one makes the set of interest rarer, the algorithm is stable (e.g.~w.r.t.~logarithmic efficiency). We suggest \cite{cerou1} and \cite{dean} from the splitting literature as useful starting points.
It may also be of interest to investigate more advanced SMC samplers such as \cite{chopin2,chopin_smc}.

\subsubsection*{Acknowledgements}
We thank Pierre Del Moral, Christophe Andrieu, Nicolas Chopin, Adam Johansen and Arnaud Doucet for valuable conversations associated to this work. The work of Dan Crisan was partially supported by the EPSRC Grant No: EP/H0005500/1.
The work of Ajay Jasra was supported by a MOE grant and was employed by Imperial College London during part of this project. We also thank Nikolas Kantas for some useful feedback.
We thank two referees for extremely useful comments that have vastly improved the paper.

\appendix

\section{Technical Results}
\label{app:A}

In this appendix we provide some technical results that will be used in the proofs that follow. The results in Lemma \ref{lem:growth} are fairly standard within the context of the analysis of non-homogeneous Markov chains with drift conditions (e.g.~\cite{doucchains}).
The decomposition in Theorem \ref{th:decompose} will be used repeatedly in the proofs.

For a starting index $n_0=n_0(d)$ we denote here
by $\{X_{n}(d)\,;\,n_0\leq n\leq d\}$ the non-homogeneous scalar Markov chain
evolving via:
\begin{equation*}
\mathbb{P}\,[\,X_n(d)\in dy \mid X_{n-1}(d)=x\,] = k_{n,d}(x,dy )\ ,\quad
n_0 < n \le d \ ,
\end{equation*}
with the kernels $k_{n,d}$ preserving $\pi_{n,d}$.
All variables $X_n(d)$ take values in the homogeneous measurable space
$(E,\mathscr{E})=(\bbR,\mathcal{B}(\bbR))$.
For simplicity, we will often omit indexing the above quantities with $d$.

Given the Markov kernel $k_s$ with invariant distribution $\pi_s$ (here, $s\in[\phi_0,1]$),
and some function $\varphi$, we consider the Poisson equation
\begin{equation*}
\varphi(x)-\pi_s(\varphi)=f(x)-k_s(f)(x) \ ;
\end{equation*}
under (A\ref{hyp:A}) there is a unique solution $f(\cdot)$ (see e.g.~\cite{meyn}), which
can be expressed via the infinite series
$f(x) = \sum_{l\geq 0}
[k_{s}^l-\pi_{s}](\varphi)(x)$. We use the notation $f=\mathcal{P}(\varphi,k_s,\pi_s)$
to define the solution of such an equation.

We will sometimes use the notation $\Exp_{X_{n_0}}\,[\,\cdot\,]\equiv \Exp\,[\,\cdot\,|\,X_{n_0}\,]$.

%
%

%
\begin{lem}
\label{lem:growth}
Assume (A\ref{hyp:A}-\ref{hyp:B}). Then, the following results hold.
\begin{itemize}
\item[i)]\label{lemma:b11} Let $\varphi\in \mathscr{L}_{V^{r}}$ for some $r\in[0,1]$ and
set $\widehat{\varphi}=\mathcal{P}(\varphi, k_s, \pi_s)$.
Then, there exists $M=M(r)$ such that
\begin{equation*}
|\widehat{\varphi}(x)| \leq M |\varphi|_{V^r}\,V(x)^r \ .
\end{equation*}
\item[ii)] \label{lemma:b12}
Let $\varphi_s,\varphi_t\in \mathscr{L}_{V^{r}}$ for some $r\in[0,1]$ and consider
 $\widehat{\varphi}_s = \mathcal{P}(\varphi_s, k_s, \pi_s)$ and
$\widehat{\varphi}_t = \mathcal{P}(\varphi_t, k_t, \pi_t)$.
Then, there exists $M=M(r)$
such that:
\begin{equation*}
|\widehat{\varphi}_t(x) - \widehat{\varphi}_s(x) | \le M\,(\,
|\varphi_t-\varphi_s|_{V^{r}}+ |\varphi_t|_{V^{r}}\,|||k_s-k_t|||_{V^{r}}\,)\,V(x)^r \ .
\end{equation*}
\item[iii)] \label{lemma:b13}
For any $r\in(0,1]$ and $0\le n_0\le n$:
\begin{equation*}
\mathbb{E}\,[\,V(X_{n})^{r}\,|\,X_{n_0}\,] \le \lambda^{(n-n_0)r}V^r(X_{n_0}) +
\tfrac{1-\lambda^{r(n-n_0)}}{1-\lambda^{r}}\,b^r \le M\,V^r(X_{n_0})\ .
\end{equation*}
\end{itemize}
\end{lem}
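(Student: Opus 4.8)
Here is a sketch of how I would attack Lemma~\ref{lem:growth}.

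The three parts share a common engine, namely \emph{uniform-in-$s$} geometric ergodicity of the kernels $k_s$ in $V^r$-norm, and I would set this up first. By concavity of $t\mapsto t^r$ and the inequality $(u+v)^r\le u^r+v^r$ for $r\in(0,1]$, the one-step drift condition in (A\ref{hyp:A}(ii)) yields $k_s V^r\le\lambda^r V^r+b^r\,\mathbb{I}_C$ for every $s\in(\phi_0,1]$, i.e.\ $V^r$ is itself a geometric drift function with the \emph{same} small set. Combined with the common minorization in (A\ref{hyp:A}(i)) (and the level-set refinement in (A\ref{hyp:A}(iii))), the standard quantitative ergodic theory for Markov chains (see e.g.\ \cite{meyn,doucchains}) then produces constants $\rho=\rho(\theta,\lambda,b)\in(0,1)$ and $\kappa=\kappa(\theta,\lambda,b)<\infty$, \emph{independent of $s$}, such that $\|k_s^l(x,\cdot)-\pi_s\|_{V^r}\le\kappa\rho^l V(x)^r$; integrating the drift inequality against $\pi_s$ also gives $\pi_s(V^r)\le b^r/(1-\lambda^r)<\infty$ and, as a further consequence, the invariant-measure perturbation bound $\|\pi_s-\pi_t\|_{V^r}\le M\,|||k_s-k_t|||_{V^r}$.

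Part (iii) would then be immediate and in fact uses only the drift condition: conditioning on $X_{n-1}$, Jensen's inequality and (A\ref{hyp:A}(ii)) give $\mathbb{E}[V(X_n)^r\mid X_{n-1}]\le(k_sV(X_{n-1}))^r\le\lambda^r V(X_{n-1})^r+b^r$, and iterating this $n-n_0$ times (taking conditional expectation given $X_{n_0}$ at each step) produces $\mathbb{E}[V(X_n)^r\mid X_{n_0}]\le\lambda^{r(n-n_0)}V^r(X_{n_0})+b^r\sum_{k=0}^{n-n_0-1}\lambda^{rk}$, which is the first claimed bound; the second follows from $V\ge1$, with $M=1+b^r/(1-\lambda^r)$.

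For part (i), I would use the series representation $\widehat{\varphi}=\mathcal{P}(\varphi,k_s,\pi_s)=\sum_{l\ge0}[k_s^l-\pi_s](\varphi)$ together with the $V^r$-ergodicity bound above: the series converges pointwise and $|\widehat{\varphi}(x)|\le|\varphi|_{V^r}\sum_{l\ge0}\|k_s^l(x,\cdot)-\pi_s\|_{V^r}\le(\kappa/(1-\rho))|\varphi|_{V^r}V(x)^r$. For part (ii), I would subtract the two Poisson equations and rewrite $k_t\widehat{\varphi}_t-k_s\widehat{\varphi}_s=k_t(\widehat{\varphi}_t-\widehat{\varphi}_s)+(k_t-k_s)\widehat{\varphi}_s$, so that $D:=\widehat{\varphi}_t-\widehat{\varphi}_s$ satisfies $D-k_t D=(\varphi_t-\varphi_s)+(k_t-k_s)\widehat{\varphi}_s-(\pi_t(\varphi_t)-\pi_s(\varphi_s))$, whose right-hand side has $\pi_t$-mean zero. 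Iterating gives $D=(\pi_s-\pi_t)(\widehat{\varphi}_s)+\sum_{l\ge0}k_t^l\big[(\varphi_t-\varphi_s)+(k_t-k_s)\widehat{\varphi}_s\big]$, and estimating termwise with the $V^r$-ergodicity bound for $k_t$, the invariant-measure perturbation bound, and part (i) applied to $\widehat{\varphi}_s$ (which shows $\widehat{\varphi}_s/(M|\varphi_s|_{V^r})$ is dominated by $V^r$, hence $|(k_t-k_s)\widehat{\varphi}_s(x)|\le M|\varphi_s|_{V^r}\,|||k_t-k_s|||_{V^r}V(x)^r$), one arrives at a bound of the form $M\big(|\varphi_t-\varphi_s|_{V^r}+|\varphi_s|_{V^r}\,|||k_s-k_t|||_{V^r}\big)V(x)^r$; replacing $|\varphi_s|_{V^r}\le|\varphi_t|_{V^r}+|\varphi_t-\varphi_s|_{V^r}$ and absorbing the cross term (bounded, since $|||k_s-k_t|||_{V^r}$ is bounded) gives the stated inequality.

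The main obstacle I anticipate is the \emph{uniformity in $s$} of the ergodicity constants $\kappa,\rho$ (and hence of the invariant-measure perturbation constant) in the $V^r$-norm: one has to pass through the coupling/regeneration (or Rosenthal/Baxendale-type) bounds and verify that they depend only on the shared minorization data $(\theta,\nu)$, the small set $C$ and level $c$, and the drift data $(\lambda,b)$ --- all common to every $k_s$ by (A\ref{hyp:A}) --- and not on $s$ itself. A secondary technical point, relevant when (A\ref{hyp:B}) is eventually fed into part (ii), is to upgrade the $V$-norm bound of (A\ref{hyp:B}) to a bound on $|||k_s-k_t|||_{V^r}$; this I would do by interpolating between the trivial total-variation bound (both $k_s$ and $k_t$ are probability kernels) and the $V$-norm bound.
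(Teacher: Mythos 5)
Your proposal is correct. Parts (i) and (iii) follow the paper's proof exactly: the series representation $\widehat{\varphi}=\sum_{l\ge0}[k_s^l-\pi_s](\varphi)$ combined with uniform-in-$s$ $V^r$-geometric ergodicity for (i), and Jensen plus subadditivity of $t\mapsto t^r$ applied to the drift inequality, iterated, for (iii). For part (ii) you take a genuinely different, though equivalent, route. The paper splits $\widehat{\varphi}_t-\widehat{\varphi}_s=A+B$ with $B=\sum_{l\ge0}[k_s^l-\pi_s](\varphi_t-\varphi_s)$ isolating the change of test function and $A=\sum_{l\ge0}\{[k_t^l-\pi_t]-[k_s^l-\pi_s]\}(\varphi_t)$ isolating the change of kernel, the latter handled by the telescoping resolvent identity of Lemma C2 of \cite{andr:10}, which directly produces the $|\varphi_t|_{V^r}\,|||k_s-k_t|||_{V^r}$ factor. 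You instead derive the perturbed Poisson equation for $D=\widehat{\varphi}_t-\widehat{\varphi}_s$ and solve it: your forcing term $\varphi_t-\varphi_s$ plays the role of $B$, while $(k_t-k_s)\widehat{\varphi}_s$ together with the constant $(\pi_s-\pi_t)(\widehat{\varphi}_s)$ plays the role of $A$. Both arguments rest on the same two inputs, namely uniformity of the ergodicity constants in $s$ (which, as you correctly observe, holds because the minorization and drift data of (A\ref{hyp:A}) are shared by all $k_s$) and an invariant-measure perturbation bound. Two small points: the iterated solution should be written with the centered operators $[k_t^l-\pi_t]$ so that the series is summable and the additive constant is fixed by the normalization $\pi_u(\widehat{\varphi}_u)=0$ --- your $(\pi_s-\pi_t)(\widehat{\varphi}_s)$ term does account for this correctly; and your closing remark about passing from the $V$-norm of (A\ref{hyp:B}) to the $V^r$-norm is not needed for the lemma itself, whose statement is already in terms of $|||k_s-k_t|||_{V^r}$, though it is the relevant issue when the lemma is later invoked in Theorem \ref{th:decompose}.
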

\begin{proof}
{\bf i):}
We proceed using the geometric ergodicity of $k_s$:
\begin{align*}
|\widehat{\varphi}(x)| & =  |\,\sum_{l\geq 0}
[k_{s}^l-\pi_{s}](\varphi)(x)\,|
 \leq  |\varphi|_{V^r}\sum_{l\geq 0}
\|[k_{s}^l-\pi_{s}](x)\|_{V^r} \leq  M\,|\varphi|_{V^r}[\,\sum_{l\geq 0}\rho^l]\,V(x)^r
\end{align*}
for some $\rho\in(0,1)$ and $M>0$ not depending on $s$ via (A\ref{hyp:A}); it is now straightforward to conclude.

{\bf ii)} Via the Poisson equation we have
%
%
$\widehat{\varphi}_{t}(x) - \widehat{\varphi}_{s}(x) =
A(x) + B(x)$
%
where
\begin{align}
A(x) &= \sum_{l\geq 0}[k_{t}^l-\pi_{t}](\varphi_{t})(x) -
\sum_{l\geq 0}[k_{s}^l-\pi_{s}](\varphi_{t})(x)\ ;\nonumber \\
B(x) &= \sum_{l\geq 0}[k_{s}^l-\pi_{s}](\varphi_{t}-\varphi_{s})(x)
\label{eq:e}
\ .
\end{align}
We start with $B(x)$. For each summand we have:
\begin{align*}
|\,[k_{s}^l-\pi_{s}](\varphi_{t}-\varphi_{s})(x)\,| & =
|\varphi_{t}-\varphi_{s}|_{V^r}\,|\,
[k_{s}^l-\pi_{s}]\big(\tfrac{\varphi_{t}-\varphi_{s}}{|\varphi_{t}-\varphi_{s}|_{V^r}}\big)(x)\,|
\\
& \leq |\varphi_{t}-\varphi_{s}|_{V^r}\,\|k_{s}^l-\pi_{s}\|_{V^r}
\leq M\,|\varphi_{t}-\varphi_{s}|_{V^r}\,\rho^l\,V(x)^r \ ,
\end{align*}
where $M>0$ and $\rho\in(0,1)$ depending only on $r$ due to (A\ref{hyp:A}).
Hence, summing over $l$, there exist a $M>0$ such that for any $x\in E$:
\begin{equation*}
B(x) \leq M\,|\varphi_{t}-\varphi_{s}|_{V^r}\,V(x)^r\ .
\end{equation*}
Returning to $A(x)$ in \eqref{eq:e},
one can use Lemma C2 of \cite{andr:10} to show that this is equal to:
\begin{equation*}
\sum_{l\geq 0}\bigg[\sum_{i=0}^{l-1}[k_{t}^i-\pi_{t}]
[k_{t}-k_{s}][k_{s}^{l-i-1}-\pi_{s}](\varphi_{t})(x)
 - [\pi_{t}-\pi_{s}]\big([k_{s}^{l}-\pi_{s}](\varphi_{t})\big)\bigg]\ .
\end{equation*}
Using identical manipulations to \cite{andr:10}, it follows that:
\begin{equation*}
\sum_{l\geq 0}\bigg| \sum_{i=0}^{l-1}[k_{t}^i-\pi_{t}]
[k_{t}-k_{s}][k_{s}^{l-i-1}-\pi_{s}](\varphi_{t})(x)
\bigg| \leq M\,|\varphi_t|_{V^{r}}\, |||k_s-k_t|||_{V^{r}}\, V(x)^r
\end{equation*}
and, for some constant $M=M(r)>0$:
\begin{equation*}
| \sum_{n\geq 0}[\pi_{t}-\pi_{s}]\big([k_{s}^{n}-\pi_{s}](\varphi_{t})\big)\,|
\leq M\,|\varphi_t|_{V^{r}}\, |||k_s-k_t|||_{V^{r}}\, V(x)^r \ .
\end{equation*}
{\bf iii)}
We will use the drift condition in (A\ref{hyp:A}).
Using Jensen's inequality (since $r\le 1$)
we obtain
%
$k_n(V^{r})(X_{n-1}) \le \lambda^{r} V^{r}(X_{n-1}) + b^{r}$
%
for the constants $b$, $\lambda$ appearing in the drift condition. Using this inequality
and conditional expectations:
\begin{equation*}
\Exp\,[\,V^{r}(X_{n})\,|\,X_{n_0}\,]  =  \Exp\,[\,k_n(V^{r}(X_{n-1}))\,|\,X_{n_0}\,]
\le \lambda^{r}\,\Exp\,[\,V^{r}(X_{n-1})\,|\,X_{n_0}\,] + b^{r}\ .
\end{equation*}
Applying this iteratively gives the required result.
\end{proof}


\begin{theorem}[Decomposition]
\label{th:decompose}
Assume (A\ref{hyp:A}(i)(ii),A\ref{hyp:B}).
Consider the collection of functions $\{\varphi_{s}\}_{s\in[\phi_0,1]}$ with $\varphi_s\in \mathscr{L}_{V^r}$ for
some $r\in[0,1)$ and such that:
\begin{itemize}
\item[i)] $\sup_{s}|\varphi_{s}|_{V^{r}}<\infty$,
\item[ii)] $|\varphi_t-\varphi_s|_{V^r}\leq M\,|t-s|$\ .
\end{itemize}
Set $\varphi_{n}(=\varphi_{n,d}):=\varphi_{\{s=\phi_{n}(d)\}}$ and consider
the solution to the Poisson equation
 $\widehat{\varphi}_{n} = \mathcal{P}(\varphi_n,k_n,\pi_n)$.
Then, for $n_0\le n_1\le n_2$ we can write:
\begin{equation*}
\sum_{n=n_1}^{n_2}
\big\{\,\varphi_{n}(X_{n}) - \pi_{n}(\varphi_{n})\} = M_{n_1:n_2} + R_{n_1:n_2}
\end{equation*}
for the martingale term:
\begin{equation*}
M_{n_1:n_2} = \sum_{n=n_1+ 1}^{n_2}\big\{\,
\widehat{\varphi}_{n}(X_n)-k_{n}(\widehat{\varphi}_{n})(X_{n-1})\,\big\}
\end{equation*}
such that for any $p > 1$ with $r\,p\le 1$:
\begin{equation*}
\mathbb{E}\,[\,|M_{n_1:n_2}|^p\,|\,X_{n_0}\,]
\leq M\, d^{\frac{p}{2} \vee 1}\,V^{rp}(X_{n_0}) \ ,
\end{equation*}
and a residual term $R_{n_1:n_2}$ such that for any $p>0$ with $r\,p\le 1$:
\begin{equation*}
\mathbb{E}\,[\,|R_{n_1:n_2}|^p\,|\,X_{n_0}\,] \leq M\, V^{rp}(X_{n_0})\ .
\end{equation*}
%
%
\end{theorem}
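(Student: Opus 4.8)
The plan is to convert the sum into a martingale plus a lower-order remainder by means of the Poisson equation, and then to estimate $\mathbb{L}_p$-moments using the growth bounds of Lemma \ref{lem:growth}. Put $\mathcal{F}_n=\sigma(X_{n_0},\dots,X_n)$. Since $\widehat{\varphi}_n=\mathcal{P}(\varphi_n,k_n,\pi_n)$ solves $\varphi_n(x)-\pi_n(\varphi_n)=\widehat{\varphi}_n(x)-k_n(\widehat{\varphi}_n)(x)$, we have for each $n$ that $\varphi_n(X_n)-\pi_n(\varphi_n)=\widehat{\varphi}_n(X_n)-k_n(\widehat{\varphi}_n)(X_n)$. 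I would single out the $n=n_1$ term and, for $n_1<n\le n_2$, insert $\pm\,k_n(\widehat{\varphi}_n)(X_{n-1})$:
\[
\widehat{\varphi}_n(X_n)-k_n(\widehat{\varphi}_n)(X_n)=\Delta M_n+\big[k_n(\widehat{\varphi}_n)(X_{n-1})-k_{n-1}(\widehat{\varphi}_{n-1})(X_{n-1})\big]+\big[k_{n-1}(\widehat{\varphi}_{n-1})(X_{n-1})-k_n(\widehat{\varphi}_n)(X_n)\big],
\]
where $\Delta M_n=\widehat{\varphi}_n(X_n)-k_n(\widehat{\varphi}_n)(X_{n-1})$ satisfies $\mathbb{E}[\Delta M_n\mid\mathcal{F}_{n-1}]=0$, so $\sum_{n_1<n\le n_2}\Delta M_n=M_{n_1:n_2}$ is a martingale (starting at $n_1+1$ is what allows the construction to use only $X_{n_1},\dots,X_{n_2}$, hence $n_0\le n_1$ suffices). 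The last bracket telescopes, and combining with the $n=n_1$ term — whose $-k_{n_1}(\widehat{\varphi}_{n_1})(X_{n_1})$ part cancels the boundary contribution of the telescope — one obtains
\[
R_{n_1:n_2}=\widehat{\varphi}_{n_1}(X_{n_1})-k_{n_2}(\widehat{\varphi}_{n_2})(X_{n_2})+\sum_{n=n_1+1}^{n_2}\big(k_n(\widehat{\varphi}_n)-k_{n-1}(\widehat{\varphi}_{n-1})\big)(X_{n-1}).
\]

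For the martingale bound I would apply the Burkholder--Davis--Gundy inequality, $\mathbb{E}[|M_{n_1:n_2}|^p\mid X_{n_0}]\le M\,\mathbb{E}[(\sum_n(\Delta M_n)^2)^{p/2}\mid X_{n_0}]$, and reduce the quadratic variation: for $p\ge2$ by the power-mean inequality $(\sum_{n\le N}a_n^2)^{p/2}\le N^{p/2-1}\sum_n a_n^p$ with $N\le d$, and for $1<p<2$ by subadditivity $(\sum_n a_n^2)^{p/2}\le\sum_n a_n^p$. It then remains to control $\sum_n\mathbb{E}[|\Delta M_n|^p\mid X_{n_0}]$. Using $|\Delta M_n|\le|\widehat{\varphi}_n(X_n)|+k_n(|\widehat{\varphi}_n|)(X_{n-1})$ with Lemma \ref{lem:growth}(i) (and $\sup_s|\varphi_s|_{V^r}<\infty$) and the one-step drift condition (A\ref{hyp:A}(ii)) (Jensen, $r\le1$), one gets $|\Delta M_n|\le M(V^r(X_n)+V^r(X_{n-1}))$; taking $\mathbb{E}[\cdot\mid X_{n_0}]$ and invoking Lemma \ref{lem:growth}(iii) with exponent $rp\le1$ gives $\mathbb{E}[|\Delta M_n|^p\mid X_{n_0}]\le M V^{rp}(X_{n_0})$, so the at most $d$ summands produce the claimed $M\,d^{\,p/2\vee1}\,V^{rp}(X_{n_0})$.

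For the remainder I would treat its three pieces separately. The boundary terms $\widehat{\varphi}_{n_1}(X_{n_1})$ and $k_{n_2}(\widehat{\varphi}_{n_2})(X_{n_2})$ are each at most a constant times $V^r$ evaluated at $X_{n_1}$, resp.\ $X_{n_2}$, by Lemma \ref{lem:growth}(i) and the drift condition. For the telescoped sum $\sum_{n=n_1+1}^{n_2}(k_n\widehat{\varphi}_n-k_{n-1}\widehat{\varphi}_{n-1})(X_{n-1})$ I would bound each increment in $V^r$-norm using Lemma \ref{lem:growth}(ii) together with the Lipschitz condition (ii) imposed on $\{\varphi_s\}$ and assumption (A\ref{hyp:B}); since $|\phi_n-\phi_{n-1}|=(1-\phi_0)/d$ this yields $|k_n(\widehat{\varphi}_n)(x)-k_{n-1}(\widehat{\varphi}_{n-1})(x)|\le (M/d)V^r(x)$, so the sum is dominated by $M\cdot\tfrac1d\sum_{n=n_1+1}^{n_2}V^r(X_{n-1})$, i.e.\ $M$ times an average of $V^r$-values along the chain. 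Taking the $p$-th conditional moment: for $p\ge1$ use convexity of $x\mapsto x^p$ and Lemma \ref{lem:growth}(iii) with exponent $rp\le1$; for $0<p<1$ use concavity (the reverse Jensen inequality) and Lemma \ref{lem:growth}(iii) with exponent $r$ — both give $\le M V^{rp}(X_{n_0})$. The $C_p$-inequality then combines the three contributions into the stated bound on $R_{n_1:n_2}$.

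The telescoping/Poisson bookkeeping and the martingale $\mathbb{L}_p$-estimates are routine; the main obstacle is the uniform-in-$(n,d)$ control of the ``kernel-drift'' increments $k_n\widehat{\varphi}_n-k_{n-1}\widehat{\varphi}_{n-1}$: one must show these are $O(1/d)$ in $V^r$-norm, which rests on the Lipschitz dependence on the annealing parameter of both the test functions and the kernels and on the perturbation estimate for Poisson solutions in Lemma \ref{lem:growth}(ii). Obtaining there the sharp power $V^r$ (rather than $V$), and then accommodating the sub-linear exponents $0<p<1$ in the remainder via the reverse Jensen inequality, are where the care is concentrated.
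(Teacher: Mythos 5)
Your proposal is correct and follows essentially the same route as the paper: a Poisson-equation decomposition into the martingale $M_{n_1:n_2}$ plus a remainder controlled through the Lipschitz-in-$s$ behaviour of $\varphi_s$, (A2) and Lemma \ref{lem:growth}(ii)--(iii), with the martingale moments handled by a Burkholder--Davis--Gundy-type bound; your remainder is algebraically identical to the paper's $D_{n_1:n_2}-E_{n_1:n_2}+T_{n_1:n_2}$ (your single increment $k_n(\widehat{\varphi}_n)-k_{n-1}(\widehat{\varphi}_{n-1})$ just bundles those pieces), and your BDG-plus-H\"older/subadditivity reduction reproduces the $d^{\frac{p}{2}\vee 1}$ factor that the paper obtains from the cited modified BDG inequality. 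The explicit treatment of $0<p<1$ for the remainder via concavity of $x\mapsto x^{p}$ is a valid (and slightly more detailed) rendering of what the paper leaves implicit.
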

\begin{proof}
Using the Poisson equation
$\varphi_n(\cdot)-\pi_n(\varphi_n)=\widehat{\varphi}_n(\cdot)-k_n(\widehat{\varphi}_n)(\cdot)$,
simple addition and subtraction of the appropriate terms gives that:
\begin{equation}
\label{eq:crude}
\sum_{n=n_1}^{n_2}
\big\{\,\varphi_{n}(X_{n}) -\pi_{n}(\varphi_{n})\,\big\} = M_{n_1:n_2} + D_{n_1:n_2} -
E_{n_1:n_2}  + T_{n_1:n_2} \ ;
\end{equation}
\begin{align*}
D_{n_1:n_2} & =\sum_{n=n_1+1}^{n_2}[\widehat{\varphi}_{n}(X_{n-1})-\widehat{\varphi}_{n-1}(X_{n-1})]\ ,\\
E_{n_1:n_2} & =\sum_{n=n_1+1}^{n_2}[\varphi_{n}(X_{n-1})-\varphi_{n-1}(X_{n-1})]\ , \\
T_{n_1:n_2} & = \widehat{\varphi}_{n_1}(X_{n_1}) - \widehat{\varphi}_{n_2}(X_{n_2})
-\pi_{n_1}(\varphi_{n_1}) + {\varphi}_{n_2}(X_{n_2})\ .
\end{align*}
Now, using Lemma \ref{lem:growth}(i),(iii) and the uniform bound in assumption~(i) we get directly
that:
%
\begin{equation}
\label{eq:1}
\Exp\,[\,|T_{n_1:n_2}|^{p}\mid X_{n_0}\,] \le M \,\,V^{rp}(X_{n_0})\ .
\end{equation}
Also, Lemma \ref{lem:growth}(i) together with assumption (i) imply that:
\begin{align*}
|\,(\varphi_{n}-\varphi_{n-1})(X_{n-1})\,|  \le
 |\varphi_{n}-\varphi_{n-1}|_{V^{r}}\,V^{r}(X_{n-1}) \le  M \,\tfrac{1}{d}\,V^{r}(X_{n-1})\ ,
\end{align*}
thus, calling again upon Lemma \ref{lem:growth}(iii), one obtains that:
\begin{equation}
\label{eq:2}
\Exp\,[\,|E_{n_1:n_2}|^{p}\mid X_{n_0}\,] \le M\,V^{rp}(X_{n_0})\ .
\end{equation}
Consider now $D_{n_1:n_2}$. Using first Lemma \ref{lem:growth}(ii), then conditions (i)-(ii) and  (A\ref{hyp:B}) one yields:
\begin{equation*}
|\widehat{\varphi}_{n}(X_{n-1})-\widehat{\varphi}_{n-1}(X_{n-1})| \le M\,\tfrac{1}{d}\,
V(X_{n-1})^{r}\ .
\end{equation*}
Thus, using also Lemma \ref{lem:growth}(iii) we obtain directly that:
\begin{equation}
\label{eq:3}
\Exp\,[\,|D_{n_1:n_2}|^{p}\mid X_0\,] \le M\,V(X_{n_0})^{rp} \ .
\end{equation}
The bounds (\ref{eq:1}), (\ref{eq:2}) and (\ref{eq:3}) prove the stated result for the
growth of $\Exp\,[\,|R_{n_1:n_2}|^{p}\,]$.

Now consider the martingale term $M_{n_1:n_2}$. One can use a modification of the Burkholder-Davis-Gundy inequality  (e.g.~\cite[pp.~499-500]{shiryaev})
which states that for any $p>1$:
\begin{equation}
\label{eq:bb}
\mathbb{E}\,[\,|M_{n_1:n_2}|^p\,|\,X_{n_0}\,]
\leq M(p)\,d^{\frac{p}{2} \vee 1 - 1} \sum_{n=n_1+1}^{n_2} \mathbb{E}\,[\,|\,\widehat{\varphi}_{n}(X_n)-k_{n}(\widehat{\varphi}_{n})(X_{n-1})\,|^p\,|\,X_{n_0}\,]\ ,
\end{equation}
see \cite{atch:09} for the proof. Using Lemma \ref{lem:growth}(i)
we obtain that:
\begin{equation*}
|\,\widehat{\varphi}_{n}(X_n)-k_{n}(\widehat{\varphi}_{n})(X_{n-1})\,| \le
M\, |\varphi_{n}|_{V^r}\,(\, V^{r}(X_n) + k_{n}(V^{r})(X_{n-1}) \,)\ .
\end{equation*}
Using this bound, Jensen inequality giving $(k_{n}(V^{r})(X_{n-1}))^p\le k_{n}(V^{rp})(X_{n-1})$, the fact that $r\,p\le 1$ and Lemma \ref{lem:growth}(iii), we continue from (\ref{eq:bb}) to obtain
the stated bound for $M_{n_1:n_2}$.
%
\end{proof}
\begin{prop}
\label{prop:prop}
Let $\varphi\in\mathscr{L}_{V^{r}}$ with $r\in[0,1]$. Consider two sequences of times $\{s(d)\}_d$, $\{t(d)\}_d$
in $[\phi_0,1]$ such that $s(d)<t(d)$ and $s(d)\rightarrow s$, $t(d)\rightarrow t$ with $s<t$.
If we also have that  $\sup_{d}\,\Exp\,[\,V^r(X_{l_d(s(d))})\,] < \infty$, then:
\begin{equation*}
\Exp_{X_{l_d(s(d))}}\,[\,\varphi(X_{l_d(t(d))})\,]\rightarrow \pi_{t}(\varphi)\ , \quad\textrm{in }\,\mathbb{L}_1\ .
\end{equation*}
 \end{prop}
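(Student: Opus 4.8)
The plan is to write $\Exp_{X_{l_d(s(d))}}[\varphi(X_{l_d(t(d))})] = P_{n_0:N}(\varphi)(X_{n_0})$, where $n_0 := l_d(s(d))$, $N := l_d(t(d))$ and $P_{n_0:N} := k_{n_0+1}k_{n_0+2}\cdots k_N$ is the composition of the ($d$-dependent) inhomogeneous kernels. I would establish the pointwise bound
\begin{equation*}
\big|\,P_{n_0:N}(\varphi)(x)-\pi_{\phi_N}(\varphi)\,\big|\ \le\ M\,|\varphi|_{V^r}\,\big(\rho^{\,N-n_0}+d^{-r}\big)\,V^r(x)
\end{equation*}
for constants $M<\infty$, $\rho\in(0,1)$ independent of $d$, and then take expectations: since $\sup_d\Exp[V^r(X_{n_0})]<\infty$ and $N-n_0=l_d(t(d))-l_d(s(d))\ge c\,d\to\infty$ (because $t(d)-s(d)\to t-s>0$), the right-hand side vanishes in $\mathbb{L}_1$. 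It remains to note that $\pi_{\phi_N}(\varphi)\to\pi_t(\varphi)$: one has $\phi_N=\phi_0+l_d(t(d))(1-\phi_0)/d\to t$, and $u\mapsto\pi_u(\varphi)=\int\varphi\,e^{ug}\big/\int e^{ug}$ is continuous on $[\phi_0,1]$ by dominated convergence, using that $g$ is upper bounded (so $e^{u'g}\le e^{(1-\phi_0)(\sup g)^+}e^{\phi_0 g}$, an integrable majorant) together with the uniform bound $\sup_{u}\pi_u(V^r)<\infty$ already used in Section~\ref{sec:resampling_stability}.

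For the pointwise bound I would run a hybrid-chain telescoping. Set $\Pi^{(j)}:=k_{n_0+1}\cdots k_j\cdot(k_N)^{\,N-j}$ for $n_0\le j\le N$, so $\Pi^{(N)}=P_{n_0:N}$ and $\Pi^{(n_0)}=(k_N)^{\,N-n_0}$, and decompose
\begin{equation*}
P_{n_0:N}(\varphi)(x)-\pi_{\phi_N}(\varphi)=\big[(k_N)^{N-n_0}(\varphi)(x)-\pi_{\phi_N}(\varphi)\big]+\sum_{j=n_0+1}^{N}P_{n_0:j-1}\big[(k_j-k_N)\big((k_N)^{N-j}\varphi\big)\big](x).
\end{equation*}
The first bracket is $\le M|\varphi|_{V^r}\rho^{\,N-n_0}V^r(x)$ by the uniform-in-$s$ geometric ergodicity $\|[k_s^\ell-\pi_s](x)\|_{V^r}\le M\rho^\ell V^r(x)$ established in the proof of Lemma~\ref{lem:growth}(i). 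In the $j$-th summand, $(k_N)^{N-j}\varphi=\pi_{\phi_N}(\varphi)+h_j$ with $|h_j|\le M|\varphi|_{V^r}\rho^{N-j}V^r$ by the same ergodicity, and since $k_j-k_N$ annihilates constants, $(k_j-k_N)\big((k_N)^{N-j}\varphi\big)=(k_j-k_N)(h_j)$; one then bounds $(k_j-k_N)(h_j)$ pointwise, applies $P_{n_0:j-1}$ using the drift estimate $P_{n_0:j-1}(V^r)(x)=\Exp_x[V^r(X_{j-1})]\le MV^r(x)$ of Lemma~\ref{lem:growth}(iii), and sums over $j$ with $\sum_{m\ge0}m^r\rho^m<\infty$, obtaining the $M|\varphi|_{V^r}d^{-r}V^r(x)$ term.

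The one genuinely delicate point, and the step I expect to be the main obstacle, is estimating $(k_j-k_N)(h_j)$ in the \emph{right} norm. Assumption (A\ref{hyp:B}) controls $|||k_j-k_N|||_V=O(|\phi_j-\phi_N|)=O((N-j)/d)$ in the $V$-norm, but the crude consequence $\|(k_j-k_N)(y,\cdot)\|_{V^r}\le|||k_j-k_N|||_V\,V(y)$ carries $V(y)$ in place of $V^r(y)$, and integrating $V(X_{n_0})$ would require a moment stronger than the assumed $V^r$-moment. I would circumvent this with the interpolation inequality $\|\mu\|_{V^r}\le\|\mu\|_V^{\,r}\,\|\mu\|_{V^0}^{\,1-r}$ (H\"older applied to $\int V^r\,d|\mu|$), used with $\|(k_j-k_N)(y,\cdot)\|_V\le M\tfrac{N-j}{d}V(y)$ and $\|(k_j-k_N)(y,\cdot)\|_{V^0}\le 2$: this yields $\|(k_j-k_N)(y,\cdot)\|_{V^r}\le M(\tfrac{N-j}{d})^{r}V^r(y)$, hence $|(k_j-k_N)(h_j)(y)|\le M|\varphi|_{V^r}\rho^{N-j}(\tfrac{N-j}{d})^{r}V^r(y)$, with the correct power of $V$. (When $r=0$, apply the same interpolation with an auxiliary exponent $r'\in(0,1)$ in place of $r$; in any case, in every application of this proposition the chain is launched from $\pi_{\phi_0}$, so $\sup_d\Exp[V^{r'}(X_{n_0})]<\infty$ for all $r'\in[0,1]$.) All remaining ingredients — uniform ergodicity and drift bounds — are supplied by Lemma~\ref{lem:growth}, and the passage from the pointwise bound to $\mathbb{L}_1$ convergence is immediate.
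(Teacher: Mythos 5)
Your proposal is correct in substance, but it takes a genuinely different route from the paper's. The paper compares the inhomogeneous chain to a frozen kernel only over a \emph{short terminal window}: with $m_d=O(d^c)$, $c\in(0,\tfrac12)$, it telescopes the last $m_d$ steps against $k_{u_d}^{m_d}$, so the perturbation cost $O(m_d^2/d)$ is small simply because the window is short, then uses uniform geometric ergodicity of the frozen kernel over $m_d\to\infty$ steps and the continuity of $u\mapsto\pi_u(\varphi)$ (dominated convergence, $g$ upper bounded) to pass from $\pi_{u_d}(\varphi)$ to $\pi_t(\varphi)$. You instead telescope across the \emph{whole} block against $(k_N)^{N-n_0}$ and let the ergodic damping $|(k_N)^{N-j}\varphi-\pi_{\phi_N}(\varphi)|\le M|\varphi|_{V^r}\rho^{N-j}V^r$ absorb the $O((N-j)/d)$ kernel perturbations; this removes the auxiliary window parameter and yields an explicit rate $O(\rho^{N-n_0}+d^{-r})$, at the price of the interpolation step. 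That step is in fact a point where your write-up is \emph{more} careful than the paper: the paper's proof invokes (A\ref{hyp:B}) directly in the $V^r$-norm ($|||k_{l_d(t(d))-j}-k_{u_d}|||_{V^r}\le M(l_d(t(d))-j-u_d)/d$), although (A\ref{hyp:B}) is stated only for the $V$-norm; your H\"older interpolation $\|\mu\|_{V^r}\le\|\mu\|_V^r\|\mu\|_{V^0}^{1-r}$ is exactly the kind of patch needed there too. One caveat: at $r=0$ your argument requires $\sup_d\Exp\,[\,V^{r'}(X_{l_d(s(d))})\,]<\infty$ for some $r'>0$, which is not part of the stated hypothesis (vacuous at $r=0$), and your justification that the chain is always launched from $\pi_{\phi_0}$ is not accurate — in the resampling analysis the proposition is applied from resampled particles (e.g.\ in the proof of Proposition \ref{lem:extra_result}) — but there Proposition \ref{lemma:technical_lemma2} supplies the needed moment via Jensen, so the caveat is harmless in every use made of the result, and the paper's own proof has the analogous wrinkle at $r=0$.
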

\begin{proof}
Recall that $\pi_u (x) \propto \exp\{u\,g(x)\}$ for $u\in[\phi_0,1]$.
We define, for $c\in(0,\tfrac{1}{2})$:
\begin{equation*}
n_d =  l_d(t(d))-l_d(s(d)) \ ; \quad m_d = \lfloor \{l_d(t_d)-l_d(s_d)\}^{c}\rfloor \ ;\quad  u_d = l_d(s(d)) + n_d - m_d \ .
\end{equation*}
Note that from the definition of $l_d(\cdot)$ we have $n_d = \mathcal{O}(d)$, whereas $m_d =\mathcal{O}(d^c)$.
We have that:
\begin{align}
|\,\Exp_{X_{l_d(s(d))}}\,[\,\varphi(X_{l_d(t(d))})\,]&-\pi_t(\varphi)\,| \leq |\,\Exp_{X_{l_d(s(d))}}\,[\,\varphi(X_{l_d(t(d))})-k_{u_d}^{m_d}(\varphi)(X_{u_d})\,]\,| \nonumber \\
&+ |\,\mathbb{E}_{X_{l_d(s(d))}}\,[\,k_{u_d}^{m_d}(\varphi)(X_{u_d})\,]-\pi_{u_d}(\varphi)\,| + |\,\pi_{u_d}(\varphi)-\pi_t(\varphi)\,|\ . \label{eq:ante1}
\end{align}
Now, the last term on the R.H.S.\@ of (\ref{eq:ante1}) goes to zero as $d\rightarrow\infty$:
this is via dominated convergence after noticing that
\begin{equation*}
\pi_{u_d}(\varphi) = \frac{\int \varphi(x)e^{(\phi_0 + \frac{u_d}{d}(1-\phi_0))g(x)}dx}
{\int e^{(\phi_0 + \frac{u_d}{d}(1-\phi_0))g(x)}dx}
\end{equation*}
with the integrand of the term, for instance, in the numerator converging almost everywhere (w.r.t.\@ Lebesque) to
$\varphi(x)e^{t\,g(x)}$ (simply notice that $\lim u_d/d = \lim \{l_d(t(s))/d\} = (t-\phi_0)/(1-\phi_0)$) and being bounded in absolute value (due to the assumption of $g$ being upper bounded) by the integrable function $M\,V^{r}(x)e^{\phi_0 g(x)}$. Also, the second term on the R.H.S.\@ of (\ref{eq:ante1}) goes to zero in
$\mathbb{L}_1$, due the uniform in drift condition in (A\ref{hyp:A}); to see this, note that (working as in the proof of Lemma \ref{lem:growth}(i)) condition A\ref{hyp:A} gives
$\|k_{s}^l-\pi_{s}\|_{V^r} \leq M\,\rho^l\,V(x)^r \ $ for any $s\in(\phi_0,1]$, so we also have that  $|k_{u_d}^{m_d}(\varphi)(X_{u_d})-\pi_{u_d}(\varphi)| \le M\,\rho^{m_d}\,V(X_{u_d})^r$.
Taking expectations and using Lemma \ref{lem:growth}(iii) we obtain that:
\begin{equation*}
 |\,\mathbb{E}_{X_{l_d(s(d))}}\,[\,k_{u_d}^{m_d}(\varphi)(X_{u_d})\,]-\pi_{u_d}(\varphi)\,|  \le
 M\,\rho^{m_d}\,V(X_{l_d(s(d))})^{r}\ .
\end{equation*}
which vanishes in $\mathbb{L}_1$ as $d\rightarrow \infty$ due to the assumption $\sup_{d}\,\Exp\,[\,V^r(X_{l_d(s(d))})\,] < \infty$.

We now focus on the first term on the R.H.S.\@ of (\ref{eq:ante1}).
The following decomposition holds, as intermediate terms in the sum below cancel out, for $u_d\geq 1$:
\begin{align*}
\mathbb{E}_{X_{l_d(s(d))}}\,&[\,\varphi(X_{l_d(t(d))})-k_{u_d}^{m_d}(\varphi)(X_{u_d})\,] = \\
&\mathbb{E}_{X_{l_d(s(d))}}\big[\,\sum_{j=0}^{m_d-1}\{k_{(u_d+1):(l_d(t(d))-j)}\,k_{u_d}^j(\varphi)(X_{u_d})-
k_{(u_d+1):(l_d(t(d))-(j+1))}k_{u_d}^{j+1}(\varphi)(X_{u_d})
\}\,\big]
\end{align*}
where we use the notation $k_{i:j}(\varphi)(x)= \int k_i(x,dx_1)\times\cdots\times k_{j}(\varphi)(x_{j-i+1})$, $i\leq j$. Each of the  summands is equal to
\begin{equation*}
k_{u_d+1:l_d(t(d))-(j+1)}[k_{l_d(t(d))-j}-k_{u_d}](k_{u_d}^j(\varphi))(X_{u_d})
\end{equation*}
which is  bounded in absolute value by
\begin{equation*}
M\,|\varphi|_{V^{r}}\,k_{u_d+1:l_d(t(d))-(j+1)}(V^r)(X_{u_d})\,|||k_{l_d(t(d))-j}-k_{u_d}|||_{V^r}\ .
\end{equation*}
Now, from Lemma \ref{lem:growth}(iii):
$$
k_{u_d+1:l_d(t(d))-(j+1)}(V^r)(X_{u_d}) \leq M V^r(X_{u_d})\ .
$$
Also, from condition  (A\ref{hyp:B}), there exists an $M>0$ such that
$$
|||k_{l_d(t(d))-j}-k_{u_d}|||_{V^r} \leq M\, \tfrac{(1-\phi_0)}{d}\,(l_d(t(d))-j-u_d) \equiv
 M\,\tfrac{(1-\phi_0)}{d}\,(m_d-j)\ .
$$
Thus, using again Lemma \ref{lem:growth}(iii) we are left with
$$
|\,\mathbb{E}_{X_{l_d(s(d))}}\,[\,\varphi(X_{l_d(t(d))})-k_{u_d}^{m_d}(\varphi)(X_{u_d})\,]\,|
\leq M\,V^r(X_{l_d(s(d))})\,\sum_{j=0}^{m_d-1}\frac{m_d-j}{d}\ .
$$
As $\sup_{d}\,\Exp\,[\,V^r(X_{l_d(s(d))})\,] < \infty$, since $m_d =\mathcal{O}(d^c)$ with $c\in(0,\tfrac{1}{2})$ we can easily conclude.
\end{proof}
\section{Proofs for Section \ref{sec:main_result}}
\label{sec:proof}

There are related results to Theorem \ref{th:clt} (see e.g.\@ \cite{mcle:74, with:81}),
however in our case, the proofs will be based on assumptions commonly made in the MCMC and SMC literature, which will be easily verifiable.
The general framework will involve constructing a Martingale difference array (an approach also followed in the above mentioned
papers).

\begin{prop}
\label{prop:hyp}
Assume (A\ref{hyp:A}(i)(ii), A\ref{hyp:B}) and $g\in \mathscr{L}_{V^{r}}$ with $r\in[0,\tfrac{1}{2})$.
The family of functions $\{{\varphi}_{s}\}_{s\in[\phi_0,1]}$ specified as:
\begin{equation*}
{\varphi}_{s}(x) = k_{s}(\widehat{{g}}_{s}^2)(x) - \{ k_{s}(\widehat{{g}}_{s})(x) \}^2\ ,  \quad
\widehat{{g}}_{s} = \mathcal{P}(g, k_s, \pi_s)\ ,
\end{equation*}
satisfies conditions (i) and (ii) of Theorem \ref{th:decompose} for $\bar{r}=2r\in[0,1)$.
\end{prop}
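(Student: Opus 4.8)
The plan is to verify, directly, that the family $\{\varphi_s\}$ meets conditions (i) and (ii) of Theorem \ref{th:decompose} with exponent $\bar r = 2r$ (note $\bar r<1$ since $r<\tfrac12$), using Lemma \ref{lem:growth} as the only real tool. For condition (i): by Lemma \ref{lem:growth}(i), $\sup_{s\in[\phi_0,1]}|\widehat g_s|_{V^r}\le M|g|_{V^r}<\infty$, so $\widehat g_s^2\in\mathscr{L}_{V^{2r}}$ with a bound uniform in $s$. Since $2r\le 1$, Jensen's inequality applied to the drift condition in (A\ref{hyp:A}(ii)) gives $k_s(V^{2r})(x)\le M\,V^{2r}(x)$ uniformly in $s$ (this is just Lemma \ref{lem:growth}(iii) with $n-n_0=1$), hence $k_s(\widehat g_s^2)(x)\le M\,V^{2r}(x)$, and, applying Jensen twice, $\big(k_s(\widehat g_s)(x)\big)^2\le \big(k_s(|\widehat g_s|)(x)\big)^2\le M\,\big(k_s(V^{r})(x)\big)^2\le M\,k_s(V^{2r})(x)\le M\,V^{2r}(x)$. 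Subtracting gives $\sup_s|\varphi_s|_{V^{2r}}<\infty$, which is condition (i).

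For condition (ii) I would expand
\begin{equation*}
\varphi_t-\varphi_s = k_t\big(\widehat g_t^2-\widehat g_s^2\big) \;+\; (k_t-k_s)\big(\widehat g_s^2\big) \;-\; \big\{(k_t\widehat g_t)^2-(k_s\widehat g_s)^2\big\}
\end{equation*}
and bound each of the three pieces by $M|t-s|\,V^{2r}(x)$. The driving estimate is the $V^r$-Lipschitz dependence of $\widehat g_s$ on $s$: since $g$ does not depend on $s$, Lemma \ref{lem:growth}(ii) combined with (A\ref{hyp:B}) gives $|\widehat g_t(x)-\widehat g_s(x)|\le M\,|g|_{V^r}\,|||k_s-k_t|||_{V^r}\,V^r(x)\le M|t-s|\,V^r(x)$, while $|\widehat g_t(x)+\widehat g_s(x)|\le M\,V^r(x)$ by Lemma \ref{lem:growth}(i). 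Writing $a^2-b^2=(a+b)(a-b)$ then yields $|\widehat g_t^2-\widehat g_s^2|(x)\le M|t-s|\,V^{2r}(x)$, and applying $k_t$ (using $k_t(V^{2r})\le M\,V^{2r}$) controls the first summand. The third summand is handled the same way, after writing $k_t\widehat g_t-k_s\widehat g_s=k_t(\widehat g_t-\widehat g_s)+(k_t-k_s)\widehat g_s$ — whose first term is $O(|t-s|\,V^r)$ by the above and whose second is of the kernel-perturbation type below — together with $(k_t\widehat g_t)^2-(k_s\widehat g_s)^2=(k_t\widehat g_t+k_s\widehat g_s)(k_t\widehat g_t-k_s\widehat g_s)$ and the bound $|k_t\widehat g_t+k_s\widehat g_s|\le M\,V^r$.

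The genuinely delicate term is the kernel-perturbation piece $(k_t-k_s)(\widehat g_s^2)$ (and the analogous $(k_t-k_s)\widehat g_s$ occurring in the third summand): one needs the $V^{2r}$-weighted (not merely $V$-weighted) estimate $|(k_t-k_s)(\widehat g_s^2)(x)|\le M|t-s|\,V^{2r}(x)$, which I would extract from (A\ref{hyp:B}) using that $\sup_s|\widehat g_s^2|_{V^{2r}}<\infty$; this is the step where the perturbation bound must be played against the geometric drift condition of (A\ref{hyp:A}) to get both the correct power of $V$ and a bound linear in $|t-s|$, and it is the main obstacle of the proof. Once all three summands are bounded by $M|t-s|\,V^{2r}(x)$, dividing by $V^{2r}(x)$ and taking the supremum over $x$ gives $|\varphi_t-\varphi_s|_{V^{2r}}\le M|t-s|$, which is condition (ii), completing the proof.
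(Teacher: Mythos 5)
Your proof is correct and takes essentially the same route as the paper: condition (i) via Lemma \ref{lem:growth}(i) together with the drift/Jensen bound $k_s(V^{2r})\le M\,V^{2r}$, and condition (ii) by splitting $\varphi_t-\varphi_s$ into a Poisson-solution perturbation controlled by Lemma \ref{lem:growth}(ii) and kernel-perturbation pieces controlled by (A\ref{hyp:B}), with difference-of-squares algebra (the paper bounds the term $k_s(\widehat{g}_t^2)-k_s(\widehat{g}_s^2)$ by Cauchy--Schwarz under $k_s$ where you bound the product pointwise and then apply $k_t$ --- the same content). The one step you single out as the main obstacle, the $V^{2r}$-weighted estimate $|(k_t-k_s)(\widehat{g}_s^2)(x)|\le M\,|t-s|\,V^{2r}(x)$, is exactly where the paper simply invokes (A\ref{hyp:B}) in the $V^{\bar{r}}$-weighted form, consistently with how that assumption is used elsewhere (e.g.\ in Lemma \ref{lem:growth}(ii) and Theorem \ref{th:decompose}), so no additional interpolation argument is given there either.
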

\begin{proof}
Lemma \ref{lem:growth}(i) gives that
%
$|\widehat{g}_s(x)| \le M\,|g|_{V^{r}}\,V^{r}(x)$.
%
Thus, due to the presence of quadratic functions in the definition of $\varphi_s(\cdot)$ we get directly that
%
$|{\varphi}_{s}(x)|\le M\,V^{\bar{r}}(x)$
%
so condition (i) in Theorem \ref{th:decompose} is satisfied.
We move on to condition (ii) of the theorem. Let us first deal with:
\begin{equation*}
\{k_{t}(\widehat{g}_{t})(x)\}^2 - \{k_{s}(\widehat{g}_{s})(x)\}^2
\end{equation*}
which is equal to
\begin{align*}
\{k_t(\widehat{g}_t)(x)-k_s(\widehat{g}_t)(x)\}\{k_t(\widehat{g}_t)(x)+k_s(\widehat{g}_t)(x)\}
+ \{k_s(\widehat{g}_t-\widehat{g}_s)(x)\}\{k_s(\widehat{g}_t+\widehat{g}_s)(x)\} \ .
\end{align*}
The terms with the additions are bounded in absolute value by $M\,V^{r}(x)$, whereas:
\begin{equation*}
|\,k_t(\widehat{g}_t)(x)-k_s(\widehat{g}_t)(x)\,| \le M\,|t-s|\,V(x)^{r}\ ,\quad
|\,k_s(\widehat{g}_t-\widehat{g}_s)(x)\,| \le  M\,|t-s|\,V(x)^{r} \ ,
\end{equation*}
the first inequality following from assumption (A\ref{hyp:B}) and the second from Lemma \ref{lem:growth}(ii). Thus, we have proved:
\begin{equation*}
|\,\{k_{t}(\widehat{g}_{t})(x)\}^2 - \{k_{s}(\widehat{g}_{s})(x)\}^2\, | \le M\, |t-s|\, V(x)^{\bar{r}}
\end{equation*}
for $\bar{r}=2r\in(0,1)$.
We move on to the second term at the expression for $\varphi_s$ and work as follows:
\begin{equation*}
 k_{t}(\widehat{g}_{t}^2)(x) -
 k_s(\widehat{g}_s^2)(x) =
  k_{t}(\widehat{g}_{t}^2)(x)
  -
 k_s(\widehat{g}_{t}^2)(x)
 +
 k_s(\widehat{g}_{t}^2)(x)
 -
k_s(\widehat{g}_s^2)(x)\ .
\end{equation*}
The first difference is controlled, from assumption (A\ref{hyp:B}), by
$
M\,|t-s|\,V(x)^{\bar{r}}\ ,
$
whereas for the second difference we use Cauchy-Schwarz to obtain:
\begin{align*}
 |k_s(\widehat{g}_{t}^2)(x)-k_s(\widehat{g}_s^2)(x)|& \le
\{ k_s(\widehat{g}_{t}-\widehat{g}_{s})^{2}(x) \}^{1/2}\{ k_s(\widehat{g}_{t}+\widehat{g}_{s})^{2}(x) \}^{1/2}\\
& \le M\,|t-s|\,V(x)^{\bar{r}}
\end{align*}
where, for the second inequality, we have used Lemma \ref{lem:growth}(ii).
The proof is now complete.
\end{proof}

\begin{proof}[Proof of Theorem \ref{th:clt}]

We adopt the decomposition as in Theorem \ref{th:decompose}.
Set $\widehat{{g}}_{s}$
to be a solution to the Poisson equation (with $\pi_s$, $k_s$)
and
$\widehat{{g}}_{n-1,d} = \widehat{{g}}_{\{s=\phi_{n-1}\}}$.
The decomposition is then:
\begin{equation*}
\sum_{n=1}^{l_d(t)} \{ {g}(X_{n-1}(d)) -\pi_{n-1,d}({g})\} = M_{0:l_d(t)-1} + R_{0:l_d(t)-1}
\end{equation*}
where
\begin{equation*}
M_{0:l_d(t)-1} = \sum_{n=1}^{l_d(t)-1}\{\widehat{{g}}_{n,d}(X_{n}(d))-k_{n,d}(\widehat{{g}}_{n,d})(X_{n-1}(d))\}\ .
\end{equation*}
It is clear, via Theorem \ref{th:decompose}, that $R_{0:l_d(t)-1}/\sqrt{d}$ goes to zero in $\mathbb{L}_1$ and hence we need consider the Martingale array term only.

Writing
\begin{equation*}
\xi_{n,d} = \widehat{{g}}_{n,d}(X_{n}(d))-k_{n,d}(\widehat{{g}}_{n,d})(X_{n-1}(d))
\end{equation*}
one observes that $\{\xi_{n,d},\mathscr{F}_{n,d}\}_{n=1}^{d-1}$, with $\mathscr{F}_{n,d}$ denoting the filtration
generated by $\{X_n(d)\}$, is a square-integrable Martingale difference array with zero mean.
In order to prove the fCLT, one can use Theorem 5.1 of \cite{berger} which gives the following sufficient
conditions for proving Theorem~\ref{th:clt}:
\begin{itemize}
%
\item[a)] For every $\epsilon>0$, $I_{\epsilon,d} := \frac{1}{d}\sum_{n=1}^d
\mathbb{E}\,[\,\xi_{n,d}^2\,\mathbb{I}_{|\xi_{n,d}|\geq \epsilon \sqrt{d}}\mid\mathscr{F}_{n-1,d}\,] \rightarrow 0$ in probability.
\item[b)] For any $t\in[\phi_0,1]$, $ I_d(t):=\frac{1}{d}\sum_{n=1}^{l_d(t)}\mathbb{E}\,[\,\xi_{n,d}^2\mid\mathscr{F}_{n-1,d}\,]$ converges in probability to the quantity
$\sigma^{2}_{\phi_0:t}/(1-\phi_0)^2$.
\end{itemize}
%
We proceed by proving these two statements.

We prove a) first.
Recall that $r\in[0,\tfrac{1}{2})$, so we can choose  $\delta>0$ so that $r(2+\delta)\le 1$. In the first line below, one can use simple calculations and
in the second line Lemma \ref{lem:growth}(i) and the drift condition with $r(2+\delta)\le 1$,
to obtain:
\begin{align*}
|\xi_{n,d}|^{2+\delta} & \le M(\delta)\big(\,|\widehat{{g}}_{n,d}(X_{n}(d))|^{2+\delta}+
|k_{n,d}(\widehat{{g}}_{n,d})(X_{n-1}(d))|^{2+\delta}\,\big) \\
   & \le  M(\delta)\big(\, V(X_{n}(d)) + V(X_{n-1}(d))   \,\big)  \ ,
\end{align*}
Thus,
using Lemma \ref{lem:growth}(iii) we get:
$
\sup_{n,d}\,\mathbb{E}\,[\,|\xi_{n,d}|^{2+\delta}\,]<\infty \ .
$
A straightforward application of H\"older's inequality, then followed by Markov's inequality, now gives that:
\begin{equation*}
\mathbb{E}\,[\,I_{\epsilon,d}\,]\le \frac{1}{d}\sum_{n=1}^d
\big(\,\mathbb{E}\,[\,|\xi_{n,d}|^{2+\delta}\,]\,\big)^{\frac{2}{2+\delta}}\,\,
\big(\,
\mathbb{P}\,[\,|\xi_{n,d}|\geq \epsilon \sqrt{d}\,]
\,\big)^{\frac{\delta}{2+\delta}} \le M\, d^{-\frac{1}{2}\frac{\delta}{2+\delta}} \ .
\end{equation*}
Thus, we have proved a).

For b), we can rewrite:
\begin{equation}
\label{eq:re}
I_d(t) = \frac{1}{d}\,\sum_{n=1}^{l_d(t)} \bigg[k_{n,d}(\widehat{{g}}_{n,d}^2)(X_{n-1}(d)) - \big\{ k_{n,d}(\widehat{{g}}_{n,d})(X_{n-1}(d))\big\}^2 \bigg] \ .
\end{equation}
We will be calling upon Theorem \ref{th:decompose} to prove convergence of the above quantity
to an asymptotic variance. Note that, via Proposition \ref{prop:hyp}, the mappings
\begin{equation*}
\varphi_{s}:=k_{s}(\widehat{{g}}_{s}^2) - \big\{ k_{s}(\widehat{{g}}_{s})\big\}^2
\end{equation*}
satisfy conditions (i)-(ii) of Theorem \ref{th:decompose}.
We define $\varphi_{n,d}=\varphi_{\{s=\phi_{n}(d)\}}$ and rewrite $I_d(t)$ as:
\begin{equation*}
I_d(t) = \frac{1}{d}\,\sum_{n=0}^{l_d(t)-1}\varphi_{n+1,d}(X_{n}(d)) \ .
\end{equation*}
We also define:
\begin{equation*}
J_d(t) = \frac{1}{d}\,\sum_{n=0}^{l_d(t)-1}\varphi_{n,d}(X_{n}(d)) \ .
\end{equation*}
Due to condition (ii) of Theorem \ref{th:decompose}, we have that
$I_d(t) - J_d(t) \rightarrow 0$ in $\mathbb{L}_1$.
Applying Theorem \ref{th:decompose} one can deduce that:
\begin{equation*}
 \lim_{d\rightarrow \infty} \{\,J_d(t) -  \frac{1}{d}\sum_{n=0}^{l_d(t)-1}\pi_{n,d}(\varphi_{n,d})\,\}  = 0\ ,\quad
\textrm{in }\, \mathbb{L}_1 \ .
\end{equation*}
Now, $s\mapsto \pi_s(\varphi_s)$ is continuous as a mapping on $[\phi_0,1]$, so from standard calculus we get that $\frac{1-\phi_0}{d}\sum_{n=0}^{l_d(t)-1}\pi_{n,d}(\varphi_{n,d})\rightarrow \int_{\phi_0}^{t}\pi_s(\varphi_s)ds$.
Combining the results, we have proven that:
\begin{equation*}
I_d(t) \rightarrow  (1-\phi_0)^{-1}\int_{\phi_0}^{t}\pi_s(\varphi_s)ds\equiv \sigma_{\phi_0:t}^2/(1-\phi_0)^2\ ,\quad \textrm{in }\,\mathbb{L}_1 \ .
\end{equation*}
Note that by Corollary 3.1 of Theorem 3.2 of \cite{hall:80}
we also have an CLT for $S_{1}$.

\end{proof}

\section{Proofs for Section \ref{sec:resampling_stability}}

\subsection{Results for Proposition \ref{prop:limiting_times}}

We will first require a proposition summarising convergence results, with emphasis
on uniform convergence w.r.t.\@ the time index.

\begin{prop}
\label{pr:unif}
Assume (A\ref{hyp:A}-\ref{hyp:B}).
Let $s(d)$ be a sequence on $[\phi_0,1]$ such that $s(d)\rightarrow s$. Then:
\begin{itemize}
\item[i)] $\sup_{t\in[s(d),1]}\Exp\,[\,|S_{s(d):t,j}\,|\,]/\sqrt{d}\rightarrow 0$.
\item[ii)] $\sup_{t\in[s(d),1]}|\,\Exp\,[\,S^2_{s(d):t,j}\,]-\sigma^2_{s:t}\,|\rightarrow 0$.
\item[iii)] $\sup_{t\in[s(d),1]}|\,\Exp\,[\,S_{s(d):t,j}\,]\,|\rightarrow 0$.
\item[iv)] $\sup_{d\ge 1,s\in[s(d),t]}{\Exp\,[\,S^{2+\epsilon}_{s(d):t}\,]}<\infty$, for some $\epsilon>0$.
\end{itemize}
\end{prop}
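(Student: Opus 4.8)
The plan is to express $S_{s(d):t,j}$ through the martingale-plus-remainder decomposition of Theorem \ref{th:decompose}, applied to a single scalar chain started at $X_0\sim\pi_0$, and to exploit that every constant in Theorem \ref{th:decompose}, Lemma \ref{lem:growth} and Proposition \ref{prop:hyp} depends only on $r$, on the moment exponent, and on the drift/minorization data of (A\ref{hyp:A})--(A\ref{hyp:B}) --- hence \emph{not} on the time endpoints, which is exactly what yields uniformity in $t$. Writing $n_1=l_d(s(d))$, $n_2=l_d(t)-1$ and suppressing the index $j$, one has $S_{s(d):t,j}=\tfrac{1-\phi_0}{\sqrt d}\sum_{m=n_1}^{n_2}\{g(X_m)-\pi_m(g)\}$; since $g\in\mathscr{L}_{V^r}$ with $r<\tfrac12$ and the constant family $\varphi_s\equiv g$ trivially satisfies conditions (i)--(ii) of Theorem \ref{th:decompose}, I would invoke that theorem to write the sum as $M_{n_1:n_2}+R_{n_1:n_2}$, take expectations over $X_0$, and use $\pi_0(V^{rp})\le\pi_0(V)<\infty$ to get absolute bounds $\Exp[M_{n_1:n_2}^2]\le Md$ and $\Exp|R_{n_1:n_2}|^p\le M$ (for $rp\le1$), uniform in $t$ and in the sequence $s(d)$; I would also record $\Exp[M_{n_1:n_2}]=0$, since its summands have zero conditional mean.

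Items (i), (iii), (iv) then follow in one line each. For (i): $\Exp|S_{s(d):t,j}|\le\tfrac{1-\phi_0}{\sqrt d}\bigl(\Exp[M_{n_1:n_2}^2]^{1/2}+\Exp|R_{n_1:n_2}|\bigr)\le\tfrac M{\sqrt d}(\sqrt d+1)$, so $\sup_t\Exp|S_{s(d):t,j}|/\sqrt d\to0$. For (iii): the martingale part drops out, $\Exp[S_{s(d):t,j}]=\tfrac{1-\phi_0}{\sqrt d}\Exp[R_{n_1:n_2}]$, and $|\Exp[R_{n_1:n_2}]|\le M$ uniformly, so $\sup_t|\Exp[S_{s(d):t,j}]|\to0$. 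For (iv): choose $\epsilon>0$ with $r(2+\epsilon)\le1$ (possible since $r<\tfrac12$), then the $(2+\epsilon)$-moment bounds of Theorem \ref{th:decompose} together with the $C_p$-inequality give $\Exp|S_{s(d):t,j}|^{2+\epsilon}\le\bigl(\tfrac M{\sqrt d}\bigr)^{2+\epsilon}\bigl(d^{(2+\epsilon)/2}+1\bigr)\le M$, uniformly in $d$ and the endpoints --- this is the uniform-in-endpoints analogue of Lemma \ref{lem:growth_main}.

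The only item needing real work is (ii). The plan is: square and use Cauchy--Schwarz to see $\Exp[S_{s(d):t,j}^2]=\tfrac{(1-\phi_0)^2}{d}\Exp[M_{n_1:n_2}^2]+O(d^{-1/2})$ uniformly in $t$; then, by orthogonality of martingale increments and the conditional-variance identity, write $\Exp[M_{n_1:n_2}^2]=\sum_{n=n_1+1}^{n_2}\Exp[\varphi_n(X_{n-1})]$, where $\varphi_s=k_s(\widehat{g}_s^2)-\{k_s(\widehat{g}_s)\}^2$ is exactly the family of Proposition \ref{prop:hyp}, satisfying conditions (i)--(ii) of Theorem \ref{th:decompose} with exponent $2r<1$. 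Re-indexing $n-1\mapsto p$, absorbing the $O(1)$ error of the $\varphi_{p+1}\to\varphi_p$ shift via condition (ii), and then applying Theorem \ref{th:decompose} to $\{\varphi_s\}$ --- whose martingale part again has zero mean and whose remainder has uniformly bounded first moment --- gives $\Exp[M_{n_1:n_2}^2]=\sum_{p=n_1}^{n_2-1}\pi_p(\varphi_p)+O(1)$. Since $\pi_u k_u=\pi_u$, one has $\pi_u(\varphi_u)=\pi_u\bigl(\widehat{g}_u^2-k_u(\widehat{g}_u)^2\bigr)$, and $u\mapsto\pi_u(\varphi_u)$ is bounded and continuous on $[\phi_0,1]$, so the partial Riemann sums $\tfrac{1-\phi_0}{d}\sum_{p=l_d(s(d))}^{l_d(t)-2}\pi_{\phi_p}(\varphi_{\phi_p})$ converge to $\int_s^t\pi_u(\varphi_u)\,du$ uniformly in $t$ (using $\phi_{l_d(s(d))}\to s$, $\phi_{l_d(t)}\to t$, and boundedness of the integrand). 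Combining, $\Exp[S_{s(d):t,j}^2]\to(1-\phi_0)\int_s^t\pi_u\bigl(\widehat{g}_u^2-k_u(\widehat{g}_u)^2\bigr)\,du=\sigma^2_{s:t}$, uniformly in $t$, as in \eqref{eq:varst}.

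The main obstacle is managing the uniformity in $t$ throughout, and in particular the nested use of the decomposition in (ii): first reducing $\Exp[S_{s(d):t,j}^2]$ to $d^{-1}\Exp[M_{n_1:n_2}^2]$, then reducing $\Exp[M_{n_1:n_2}^2]$ to a Riemann sum of $u\mapsto\pi_u(\varphi_u)$. Both reductions are legitimate precisely because the constants in Theorem \ref{th:decompose} and Lemma \ref{lem:growth} are insensitive to the endpoints $n_1,n_2$, so the error terms are $O(1)$ or $O(\sqrt d)$ \emph{uniformly} in $t$ and in $s(d)$; the only genuinely analytic ingredient is the elementary uniform convergence of the partial Riemann sums of a bounded continuous integrand.
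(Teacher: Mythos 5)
Your proposal is correct and follows essentially the same route as the paper: the decomposition of Theorem \ref{th:decompose} applied with $\varphi_s\equiv g$ and endpoint-uniform constants gives (i), (iii), (iv) directly (zero-mean martingale plus bounded remainder, and the $(2+\epsilon)$-moment bound with $r(2+\epsilon)\le 1$), while (ii) is handled exactly as in the paper via the conditional-variance identity, the family $\varphi_s=k_s(\widehat{g}_s^2)-\{k_s(\widehat{g}_s)\}^2$ of Proposition \ref{prop:hyp}, the $\varphi_{n+1}\to\varphi_n$ shift, a second application of Theorem \ref{th:decompose}, and the uniform convergence of the Riemann sums of the continuous map $u\mapsto\pi_u(\varphi_u)$ to $\sigma^2_{s:t}$.
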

\begin{proof}
For simplicity, we will omit reference to the co-ordinate index $j$.
Applying the decomposition of Theorem \ref{th:decompose} for
$\varphi_s\equiv g$ and $n_0=0$ gives that:
\begin{equation*}
S_{s(d):t} = \tfrac{(1-\phi_0)}{\sqrt{d}}\,(M_{l_d(s):(l_d(t)-1)} + R_{l_d(s):(l_d(t)-1)})
\end{equation*}
with (choosing $p=2+\epsilon$ for $\epsilon>0$ so that $r\,p\le 1$):
\begin{equation*}
\Exp[\,|M_{l_d(s):(l_d(t)-1)}|^{2+\epsilon}\,] \le M\,d^{1+\frac{\epsilon}{2}}\,\Exp\,[\,V(X_0)\,]\ ,
\end{equation*}
and (choosing $p=2+\epsilon$ for $\epsilon>0$ so that $r\,p\le 1$):
\begin{equation*}
\Exp[\,|R_{l_d(s):(l_d(t)-1)}|^{2+\epsilon}\,] \le M\,\Exp\,[\,V(X_0)\,] \ .
\end{equation*}
One now needs to notice that these bounds are \emph{uniform} in $s,t,d$, thus statements (i) and (iv) of
the proposition follow directly from the above estimates; statement (iii) also follows directly after
taking under consideration that $\Exp\,[\,M_{l_d(s):(l_d(t)-1)}\,] =0$. It remains to prove (ii).
The residual term $R_{l_d(s):(l_d(t)-1)}/\sqrt{d}$ vanishes in the limit in $\mathbb{L}_{2+\epsilon}$-norm,
thus it will not affect the final result, that is:
\begin{equation*}
\sup_{t\in[s(d),1]} |\,\Exp\,[\,S^{2}_{s(d):t}\,] -
\tfrac{(1-\phi_0)^2}{d}\,\Exp\,[\,M^2_{d,l_d(s(d)):(l_d(t)-1)}\,]\, |
\rightarrow 0\ .
\end{equation*}
Now, straightforward analytical calculations yield:
\begin{align*}
\tfrac{1}{d}\,\Exp\,[\,M^2_{d,l_d(s(d)):(l_d(t)-1)}\,] &=
\frac{1}{d}\sum_{n=l_d(s(d))}^{l_d(t)-1}\Exp\,[\,\{\widehat{g}_n(X_n)-k_n(\widehat{g}_n)(X_{n-1})\}^2\,]\\
&=\Exp\,\big[\, \frac{1}{d}\sum_{n=l_d(s(d))-1}^{l_d(t)-2}\varphi_{n+1}(X_{n})
\big]\ ,
\end{align*}
where we have set:
\begin{equation*}
 \varphi_{s} = k_s(\widehat{g}_s^2)-\{k_s(\widehat{g}_s)\}^2 \ ;\quad \varphi_{n}=\varphi_{\{s=\phi_{n}\}} \ .
\end{equation*}
Since $|\varphi_{n+1}-\varphi_n|_{V^{2r}}\le M\frac{1}{d}$ from Proposition \ref{prop:hyp}, we also have:
\begin{equation*}
\sup_{t\in[s(d),1]}\bigg|\,
\Exp\,\big[\, \frac{1}{d}\sum_{n=l_d(s(d))-1}^{l_d(t)-2}\varphi_{n+1}(X_{n})\big] -
\Exp\,\big[\, \frac{1}{d}\sum_{n=l_d(s(d))-1}^{l_d(t)-2}\varphi_{n}(X_{n})\big]\,\bigg|\rightarrow 0\ .
\end{equation*}
Now, Theorem \ref{th:decompose} and Proposition \ref{prop:hyp} imply that:
\begin{equation*}
\sup_{t\in[s(d),1]}\Exp\,\bigg|\,
\frac{1}{d}\sum_{n=l_d(s(d))-1}^{l_d(t)-2} \{ \varphi_{n}(X_{n}) - \pi_{n}(\varphi_n)\}\,\bigg|\rightarrow 0\ .
\end{equation*}
Finally, due to the continuity of $s\mapsto \pi_s(\varphi_s)$, it is a standard result from Riemann integration
(see e.g.~Theorem 6.8 of \cite{rudin}) that:
\begin{equation*}
\sup_{t\in[s(d),1]}\bigg|\,\tfrac{1-\phi_0}{d}\sum_{n=l_d(s(d))-1}^{l_d(t)-2}\pi_n(\varphi_n)
- \int_{s}^{t}\pi_u(\varphi_u)du\,\bigg| \rightarrow 0\
\end{equation*}
and we conclude.
\end{proof}

\begin{proof}[Proof of Proposition \ref{prop:limiting_times}]
For some sequence $s(d)$ in $[\phi_0,1]$ such that $s(d)\rightarrow s$,
 we will consider the function in $t\in[s(d),1]$:
\begin{equation*}
f_d(s(d),t):= \frac{{\Exp}^2\,\big[\,
\exp\big\{\frac{1}{\sqrt{d}}\sum_{j=1}^d S_{s(d):t,j}\big\}\,
\big]}
{{\Exp}\,\big[\,\exp\big\{\tfrac{2}{\sqrt{d}}
\sum_{j=1}^d S_{s(d):t,j} \big\}\,\big] } \equiv
\Bigg(\frac{ {\Exp}^2\,\big[\,
\exp\big\{\frac{1}{\sqrt{d}}\,S_{s(d):t,1}\big\}\,
\big]}
{{\Exp}\,\big[\,\exp\big\{\tfrac{2}{\sqrt{d}}\,
 S_{s(d):t,1} \big\}\,\big]}\Bigg)^d
\end{equation*}
the second result following due to the independence over $j$.
In the rest of the proof we will omit reference to the co-ordinate index $1$.
Due to the ratio in the definition of $f_d(s(d),t)$, we can clearly re-write:
\begin{equation*}
f_d(s(d),t) = \Bigg(\frac{ {\Exp}^2\,\big[\
\exp\big\{\frac{1}{\sqrt{d}}\,\overline{S}_{s(d):t}\big\}\,
\big]}
{{\Exp}\,\big[\,\exp\big\{\tfrac{2}{\sqrt{d}}\,
 \overline{S}_{s(d):t} \big\}\,\big]}\Bigg)^d
\end{equation*}
for $\overline{S}_{s(d):t} = S_{s(d):t}-\Exp\,[\,S_{s(d):t}\,]$.
We will use the notation
`$h_d(t)\rightarrow_{t} h(t)$' to denote convergence, as $d\rightarrow\infty$,
uniformly for all $t$ in $[s(d),1]$, that is
$\sup_{t\in [s(d),t]}|h_{d}(t)-h_t|\rightarrow 0$.
We will aim at proving, using the results in Proposition \ref{pr:unif}, that:
\begin{equation}
\label{eq:rrr}
f_d(s(d),t) \rightarrow_{t}  e^{-\sigma_{s:t}^2}\ ,
\end{equation}
or, equivalently, that $\sup_{t\in[s(d),1]}|f_d(s(d),t)-e^{-\sigma_{s:t}^2}|\rightarrow 0$,
under the convention that $\sigma_{s:t}^2\equiv 0$ for $t\le s$.
Once we have obtained this, the required result will follow directly by induction. To see that,
note that for proving that $t_{1}(d)\rightarrow t_1$ we will use the established result for $s(d)\equiv \phi_0$:
uniform convergence of $f_d(\phi_0,t)$ to $e^{-\sigma_{\phi_0:t}^2}$ together with the fact that $e^{-\sigma_{\phi_0:t}^2}$ is decreasing in $t$ will give directly that the hitting time of the threshold $a$ for $f_d(\phi_0,t)$ will converge to that
of $e^{-\sigma_{\phi_0:t}^2}$. Now, assuming we have proved that $t_n(d)\rightarrow t_{n}$, we will then use the
established uniform convergence result for $s(d)=t_{n}(d)$ to obtain directly that
$t_{n+1}(d)\rightarrow t_{n+1}$.

We will now establish (\ref{eq:rrr}).
Note that we have, by construction:
$
\Exp\,[\,\overline{S}_{s(d):t}\,]=0\ .
$
%
We use directly Taylor expansions to obtain for any fixed $t\in[s(d),1]$:
\begin{align}
e^{\frac{2}{\sqrt{d}}\overline{S}_{s(d):t}} & =
1 + \tfrac{2}{\sqrt{d}}\,\overline{S}_{s(d):t}  + \tfrac{2}{d}\,\overline{S}_{s(d):t}^2
e^{2\zeta_{d,t}} \ ;\label{eq:t1}\\
e^{\frac{1}{\sqrt{d}}\overline{S}_{s(d):t}} & =  1 +
\tfrac{1}{\sqrt{d}}\,\overline{S}_{s(d):t}  + \tfrac{1}{2d}\,\overline{S}_{s(d):t}^2\,e^{\zeta_{d,t}'}\ ,
\label{eq:t2}
\end{align}
where
$
\zeta_{d,t},\zeta_{d,t}' \in \big[\,\tfrac{1}{\sqrt{d}}\,\overline{S}_{s(d):t}\wedge 0\,
,\,\,\tfrac{1}{\sqrt{d}}\overline{S}_{s(d):t}\vee 0\,\big]\ .
$
Note here that since $g$ is upper bounded and $\sup_{n,d}\Exp\,[\,|g(X_{n,1}(d))|\,]<\infty$,
we have that $\tfrac{1}{\sqrt{d}}\overline{S}_{s(d):t}$ is upper bounded. Thus, we obtain directly that:
\begin{equation*}
\xi_{d,t}\le M\,,\,\,\zeta'_{d,t} \le M\ ; \quad |\zeta_{d,t}| +|\zeta'_{d,t}|\le M\,|\tfrac{1}{\sqrt{d}}\,\overline{S}_{s(d):t}| \ .
\end{equation*}
 Taking expectations in (\ref{eq:t1}):
\begin{equation*}
\Exp\,[\,e^{\frac{2}{\sqrt{d}}\overline{S}_{s(d):t}}\,] =
1 + \tfrac{2}{d}\,{\Exp}\,[\,\overline{S}_{s(d):t}^2\,
e^{2\zeta_{d,t}}\,]\ .
\end{equation*}
%
%
Now consider  the term:
\begin{equation*}
a_d(t):={\Exp}\,[\,\overline{S}_{s(d):t}^2\,
e^{2\zeta_{d,t}}\,]  = {\Exp}\,[\,\overline{S}_{s(d):t}^2\,] +
{\Exp}\,[\,\overline{S}_{s(d):t}^2\,(e^{2\zeta_{d,t}}-1)\,]\ .
\end{equation*}
Using Holder's inequality and the fact that $\Exp\,[\,|e^{2\zeta_{d,t}}-1|^{q}\,]\le M(q)
\,\Exp\,[\,|\zeta_{d,t}|\,]$ for any $q\ge 1,$
via the Lipschitz continuity of
 $x\mapsto |e^{2x}-1|^q$ on $(-\infty,M]$, we obtain that for $\epsilon>0$ as in Proposition \ref{pr:unif}(iii):
\begin{align*}
|\,{\Exp}\,[\,\overline{S}_{s(d):t}^2\,(e^{2\zeta_{d,t}}-1)\,]\,| &\le
\Exp^{\frac{2}{2+\epsilon}}\,[\,\overline{S}_{s(d):t}^{2+\epsilon}\,]\,\,
\Exp^{\frac{\epsilon}{2+\epsilon}}\,[\,|e^{2\zeta_{d,t}}-1|^{\frac{2+\epsilon}{\epsilon}}\,] \\
&\le M\,\Exp^{\frac{\epsilon}{2+\epsilon}}\,[\,|\zeta_{d,t}|\,]\rightarrow_t 0
\end{align*}
the last limit following from Proposition \ref{pr:unif}(i). Thus, using also Proposition \ref{pr:unif}(ii)-(iii),
we have proven that
$
a_{d}(t) \rightarrow_t \sigma^{2}_{s:t}\ .
$
Note now that:
\begin{equation*}
|\big( 1 +  \tfrac{2}{d}\,a_{d}(t)\big)^d -\big( 1 +  \tfrac{2\sigma_{s:t}^2}{d}\big)^d |
\le M\,|a_{d}(t)-\sigma_{s:t}^2|\ ;\quad  \big( 1 +  \tfrac{2\sigma_{s:t}^2}{d}\big)^d
\rightarrow_t e^{2\sigma^2_{s:t}} \ ,
\end{equation*}
the first result following from the derivative of $x\mapsto \big(1 +  \tfrac{2x}{d}\big)^d$ being bounded
for $x\in[0,M]$. Thus we have proven that:
$
\big(\,\Exp\,[\,e^{\tfrac{2}{\sqrt{d}}\,
 \overline{S}_{s(d):t}}\,]\,\big)^d\rightarrow_t e^{2\sigma_{s:t}^2}\ .
$
Using similar manipulations and the Taylor expansion (\ref{eq:t2}) we obtain that:
\begin{equation*}
\big(\,\Exp^{2}\,[\,
e^{\frac{1}{\sqrt{d}}\,\overline{S}_{s(d):t}}\,]\,\big)^d \rightarrow_t e^{\sigma_{s:t}^2}\ .
\end{equation*}
Taking the ratio, the uniform convergence result in (\ref{eq:rrr}) is proved.
\end{proof}

\subsection{Results for Theorems \ref{theorem:limit_adaptive} and \ref{theorem:error_adaptive}}
\label{appendix:adaptive}

To prove Theorems \ref{theorem:limit_adaptive} and \ref{theorem:error_adaptive}, we  will first require some technical lemmas. Here the equally weighted
$d$-dimensional resampled (at the deterministic time instances $t_{k}(d)$)
particles are written with a prime notation; so
$X_{l_d(t_{k}(d)),j}^{\prime,i}$
will denote the $j$-th co-ordinate of the $i$-th particle, immediately \emph{after}
the resampling procedure at $t_k(d)$.

\begin{prop}
\label{lemma:technical_lemma2}
Assume (A\ref{hyp:A}(i)(ii)) and let $k\in\{1,\dots,m^*\}$. Then,
there exists  an $M(k)<\infty$  such that for any $N\geq 1$,
$d\geq 1$, $i\in\{1,\dots,N\}$, $j\in\{1,\dots,d\}$:
\begin{equation*}
\mathbb{E}\,[\,V(X_{l_d(t_k(d)),j}^{\prime,i})\,] \leq M(k) N^k.
\end{equation*}
\end{prop}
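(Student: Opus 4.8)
The plan is an induction on $k$: each resampling step is absorbed by the crude bound ``weighted average $\le$ sum of the values'', which costs a factor $N$, so that after $k$ resamplings one accumulates $N^{k}$; this is harmless since $k\le m^{*}$ is fixed and $m^{*}$ is independent of $d$.

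\emph{One-block drift bound.} Fix $k\ge 1$ and condition on the $\sigma$-algebra $\mathcal{G}_{k-1}$ generated by all particles and co-ordinates up to and including the deterministic resampling time $t_{k-1}(d)$ (with $t_0(d)=\phi_0$). Given $\mathcal{G}_{k-1}$, the co-ordinate $j$ of particle $l$ moves, over the block from $l_d(t_{k-1}(d))$ to $l_d(t_k(d))$, as a scalar non-homogeneous Markov chain driven by the kernels $k_{l_d(t_{k-1}(d))+1},k_{l_d(t_{k-1}(d))+2},\dots$ and started from the resampled value $X_{l_d(t_{k-1}(d)),j}^{\prime,l}$; this is exactly where the product structure (\ref{eq:target})--(\ref{eq:kernel}) and the \emph{determinism} of $t_{k-1}(d)$ enter. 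Hence, writing $X_{l_d(t_k(d)),j}^{l}$ for this co-ordinate \emph{just before} resampling at $t_k(d)$, Lemma \ref{lem:growth}(iii) with $r=1$ (admissible, as its proof uses only the drift condition (A\ref{hyp:A}(ii))) provides a finite constant $M_0$, independent of $d,k,l,j$, such that, by the tower property,
$$
\mathbb{E}\,\big[\,V(X_{l_d(t_k(d)),j}^{l})\,\big]\ \leq\ M_0\,\mathbb{E}\,\big[\,V(X_{l_d(t_{k-1}(d)),j}^{\prime,l})\,\big]\ .
$$

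\emph{Resampling step and induction.} By the multinomial resampling mechanism, particle $i$ after resampling at $t_k(d)$ inherits the trajectory of particle $l$ with probability $w^l/\sum_{m=1}^N w^m$, where $w^l=w_{l_d(t_k(d))}(X_{l_d(t_{k-1}(d)):(l_d(t_k(d))-1)}^{l})$. Since each normalised weight is at most $1$,
$$
\mathbb{E}\,\big[\,V(X_{l_d(t_k(d)),j}^{\prime,i})\,\big]=\mathbb{E}\,\Big[\,\sum_{l=1}^N\frac{w^l}{\sum_{m=1}^N w^m}\,V(X_{l_d(t_k(d)),j}^{l})\,\Big]\ \leq\ \sum_{l=1}^N \mathbb{E}\,\big[\,V(X_{l_d(t_k(d)),j}^{l})\,\big]\ \leq\ M_0\sum_{l=1}^N \mathbb{E}\,\big[\,V(X_{l_d(t_{k-1}(d)),j}^{\prime,l})\,\big]\ .
$$
For $k=1$ we have $X_{l_d(t_0(d)),j}^{\prime,l}=X_{0,j}^{l}\stackrel{\textrm{i.i.d.}}{\sim}\pi_{\phi_0}$ with $\pi_{\phi_0}(V)<\infty$ by (A\ref{hyp:A}(ii)), so the display gives the claim with $M(1)=M_0\,\pi_{\phi_0}(V)$. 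Assuming the claim for $k-1$, the display and the inductive hypothesis give $\mathbb{E}\,[\,V(X_{l_d(t_k(d)),j}^{\prime,i})\,]\leq M_0\,N\cdot M(k-1)N^{k-1}=M_0^{k}\,\pi_{\phi_0}(V)\,N^{k}$, so the claim holds at level $k$ with $M(k)=M_0^{k}\,\pi_{\phi_0}(V)$; since $k\le m^{*}$ with $m^{*}<\infty$ independent of $d$, this is a bound of the asserted form.

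The only genuinely delicate point is the Markov property used in the one-block bound, i.e.\ that after a resampling at a \emph{deterministic} time each co-ordinate chain is a bona fide non-homogeneous Markov chain so that Lemma \ref{lem:growth}(iii) applies verbatim; the rest is routine bookkeeping of the resampling operation, with the crude normalised-weight bound being exactly what forces the $N^{k}$ growth.
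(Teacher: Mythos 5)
Your proof is correct and follows essentially the same route as the paper's: the one-block drift bound from Lemma \ref{lem:growth}(iii) (the paper's inequality \eqref{eq:main_bound_tech_lemma}), the crude bound that each normalised weight is at most $1$ (costing a factor $N$ per resampling), and induction over the finitely many deterministic resampling times, starting from $\pi_{\phi_0}(V)<\infty$.
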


\begin{proof}
We will use an inductive proof on the resampling times (assumed to be deterministic).
It is first remarked (using Lemma \ref{lem:growth})(iii)) that for every $k\in\{1,\dots,m^*\}$:
\begin{equation}
\mathbb{E}\,[\,V(X_{l_d(t_k(d)),j}^{i})\,|\,\mathscr{F}_{t_{k-1}(d)}^{\prime,N}\,] \leq M\, V(x_{l_d(t_{k-1}(d)),j}^{\prime, i})
\label{eq:main_bound_tech_lemma}
\end{equation}
where $\mathscr{F}_{t_{k-1}(d)}^{\prime,N}$ is the filtration generated by the particle system up-to and including the $(k-1)^{th}$ resampling time and $M<\infty$ does not depend upon $t_{k}(d)$, $t_{k-1}(d)$ or indeed $d$.

At the first resampling time, we have (averaging over the resampling index)
that
$$
\mathbb{E}\,[\,V(X_{l_d(t_1(d)),j}^{\prime,i})\,|\,\mathscr{F}_{t_{1}(d)}^{N}\,] = \sum_{i=1}^N \overline{w}_{l_d(t_{1}(d))}(x^i_{l_d(t_{0}(d)):l_d(t_{1}(d))-1})\,V(x_{l_d(t_{1}(d)),j}^i)
$$
where $\mathscr{F}_{t_{1}(d)}^{N}$ is the filtration generated by the particle system up-to the $1$st resampling time (but excluding resampling)
and $\overline{w}_{l_d(t_{1}(d))}(x^i_{l_d(t_{0}(d)):l_d(t_{1}(d))-1})$ is the normalized importance weight.
Now, clearly (due to normalised weights be bounded by 1):
$$
\mathbb{E}\,[\,V(X_{l_d(t_1(d)),j}^{\prime,i})\,|\,\mathscr{F}_{t_{1}(d)}^{N}\,] \leq
\sum_{i=1}^N  V(x_{l_d(t_{1}(d)),j}^i)
$$
and, via \eqref{eq:main_bound_tech_lemma},
$\mathbb{E}\,[\,V(X_{l_d(t_1(d),j}^{\prime,i})\,] \leq NM$
which gives the result for the first resampling time.

Using induction, if we assume that the result holds at the $(k-1)^{th}$  time we resample ($k\geq 2$), it follows that (for $\mathscr{F}_{t_{k}(d)}^{N}$ being the filtration generated by the particle system up-to the $k$-th resampling time, but excluding resampling):
\begin{align*}
\mathbb{E}\,[\,V(X_{l_d(t_k(d)),j}^{\prime,i})\,|\,\mathscr{F}_{t_{k}(d)}^{N}\,] &= \sum_{i=1}^N
\overline{w}_{l_d(t_{k}(d))}(x^i_{l_d(t_{k-1}(d)):l_d(t_{k}(d))-1})\,V(x_{l_d(t_{k}(d)),j}^i)\\
&\leq \sum_{i=1}^N  V(x_{l_d(t_{k}(d)),j}^i) \ .
\end{align*}
Thus, via \eqref{eq:main_bound_tech_lemma} and the exchangeability of the particle and dimension index,
we obtain that
$$\mathbb{E}\,[\,V(X_{l_d(t_k(d),j}^{\prime,i})\,]
\leq NM\,\mathbb{E}\,[\,V(X_{l_d(t_{k-1}(d),j}^{\prime,i})\,]\ .$$
The proof now follows directly.
\end{proof}

\begin{prop}
\label{lem:extra_result}
Assume (A\ref{hyp:A}(i)(ii), A\ref{hyp:B}). Let $\varphi\in\mathscr{L}_{V^r}$, $r\in[0,\tfrac{1}{2})$. Then for any fixed $N$, any $k\in \{1,\dots,m^*\}$ and any $i\in\{1,\dots,N\}$ we have
\begin{equation*}
\frac{1}{d}\,\sum_{j=1}^d \varphi(X_{l_{d}(t_k(d)),j}^{\prime,i})\rightarrow  \pi_{t_k}(\varphi) \ , \quad \textrm{in }\,\mathbb{L}_1\ .
\end{equation*}
\end{prop}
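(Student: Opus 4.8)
The plan is to remove the resampling at time $t_k(d)$, then condition on the particle configuration immediately after the $(k-1)$-th resampling, which restores the product structure over the $d$ coordinates, and finally combine a conditional law of large numbers over the coordinates with the ergodic convergence of each coordinate supplied by Proposition~\ref{prop:prop}. For the reduction, write $X_{l_d(t_k(d)),j}^{l}$ for the $j$-th coordinate of the $l$-th particle \emph{just before} the $k$-th resampling. Resampling replaces particle $i$ by particle $A^i$ for some random index $A^i\in\{1,\dots,N\}$, so $\sum_{l=1}^{N}\mathbb{I}\{A^i=l\}=1$ and
\[
\Big|\tfrac1d\sum_{j=1}^{d}\varphi(X_{l_d(t_k(d)),j}^{\prime,i})-\pi_{t_k}(\varphi)\Big|\;\le\;\sum_{l=1}^{N}\Big|\tfrac1d\sum_{j=1}^{d}\varphi(X_{l_d(t_k(d)),j}^{l})-\pi_{t_k}(\varphi)\Big|\ .
\]
Since $N$ is fixed it suffices to prove, for each $l$, that $\tfrac1d\sum_{j=1}^{d}\varphi(X_{l_d(t_k(d)),j}^{l})\to\pi_{t_k}(\varphi)$ in $\mathbb L_1$. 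Let $\mathcal G=\mathscr F^{\prime,N}_{t_{k-1}(d)}$ be the $\sigma$-algebra generated by the whole system up to and including the $(k-1)$-th resampling (with $t_0(d)=\phi_0$, so for $k=1$ this is the initial configuration, $X^{\prime,l}_{0,j}\sim\pi_{\phi_0}$). Conditionally on $\mathcal G$ the $d$ coordinates of particle $l$ evolve \emph{independently}, each as a non-homogeneous scalar chain driven by $k_{l_d(t_{k-1}(d))+1},k_{l_d(t_{k-1}(d))+2},\dots$ and started at $X_{l_d(t_{k-1}(d)),j}^{\prime,l}$. Put $h_j=\mathbb E[\varphi(X_{l_d(t_k(d)),j}^{l})\mid\mathcal G]$ and split $\tfrac1d\sum_{j}\varphi(X_{l_d(t_k(d)),j}^{l})-\pi_{t_k}(\varphi)=\mathrm I+\mathrm{II}$ with $\mathrm I=\tfrac1d\sum_{j}(\varphi(X_{l_d(t_k(d)),j}^{l})-h_j)$ and $\mathrm{II}=\tfrac1d\sum_{j}h_j-\pi_{t_k}(\varphi)$.

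For $\mathrm I$ the summands are conditionally independent and centred, and by Lemma~\ref{lem:growth}(iii) (using $2r\le1$ and $V^{2r}\le V$), $\mathbb E[\varphi(X_{l_d(t_k(d)),j}^{l})^{2}\mid\mathcal G]\le|\varphi|_{V^r}^{2}\,\mathbb E[V^{2r}(X_{l_d(t_k(d)),j}^{l})\mid\mathcal G]\le M\,V(X_{l_d(t_{k-1}(d)),j}^{\prime,l})$, whence $\mathbb E\,\mathrm I^{2}\le\frac{M}{d^{2}}\sum_{j=1}^{d}\mathbb E\,V(X_{l_d(t_{k-1}(d)),j}^{\prime,l})$. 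For $\mathrm{II}$ the Markov property gives $h_j=\mathbb E_{X_{l_d(t_{k-1}(d)),j}^{\prime,l}}[\varphi(X_{l_d(t_k(d)),j}^{l})]$, and by exchangeability of the coordinates $\mathbb E|h_j-\pi_{t_k}(\varphi)|$ does not depend on $j$, so $\mathbb E|\mathrm{II}|\le\tfrac1d\sum_{j}\mathbb E|h_j-\pi_{t_k}(\varphi)|=\mathbb E|h_1-\pi_{t_k}(\varphi)|$. Now $t_{k-1}(d)\to t_{k-1}$ and $t_k(d)\to t_k$ with $t_{k-1}<t_k$ by Proposition~\ref{prop:limiting_times}, and $\sup_d\mathbb E\,V(X_{l_d(t_{k-1}(d)),1}^{\prime,l})\le M(k-1)N^{k-1}<\infty$ by Proposition~\ref{lemma:technical_lemma2} (for $k=1$ this equals $\pi_{\phi_0}(V)<\infty$ by (A\ref{hyp:A}(ii))); since $V^r\le V$, Proposition~\ref{prop:prop} applies and gives $\mathbb E|h_1-\pi_{t_k}(\varphi)|\to0$, hence $\mathbb E|\mathrm{II}|\to0$. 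Meanwhile $\mathbb E|\mathrm I|\le(\mathbb E\,\mathrm I^{2})^{1/2}=O(d^{-1/2})\to0$. This proves the claim for each $l$ and, via the reduction above, the proposition.

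The only point requiring genuine care is the resampling at $t_k(d)$ itself: the post-resampling coordinates of a single particle are \emph{not} conditionally independent given the past, so the law-of-large-numbers step cannot be applied to them directly — this is why one first passes to the pre-resampling particles and only afterwards conditions on $\mathscr F^{\prime,N}_{t_{k-1}(d)}$, where independence over coordinates is restored. Everything else is routine bookkeeping, the sole quantitative input being the uniform-in-$d$ bound $\sup_d\mathbb E\,V(X_{l_d(t_{k-1}(d)),1}^{\prime,l})<\infty$ of Proposition~\ref{lemma:technical_lemma2}; the factor $N^{k-1}$ there is harmless since $N$ is held fixed while $d\to\infty$.
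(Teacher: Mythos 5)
Your proof is correct and follows essentially the same route as the paper: undo the final resampling at $t_k(d)$ (your pathwise indicator bound is just the paper's normalised-weight/exchangeability step), condition on $\mathscr{F}^{\prime,N}_{t_{k-1}(d)}$ to restore conditional independence over coordinates, control the fluctuation term by a conditional second-moment bound together with Proposition \ref{lemma:technical_lemma2}, and handle the bias term via Proposition \ref{prop:prop}. The only (harmless) difference is that you treat $k=1$ and $k>1$ uniformly through the convention $t_0(d)=\phi_0$, whereas the paper separates the two cases.
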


\begin{proof}
We distinct between two cases:  $k=1$ and $k>1$.
When $k=1$, due to the boundedness of the normalised weights and the exchangeability of the particle indices we have that:
\begin{equation}
\label{eq:three}
\mathbb{E}\,\big|\,\frac{1}{d}\sum_{j=1}^d \varphi(X_{l_{d}(t_1(d)),j}^{\prime,i})-\pi_{t_1}(\varphi)\,\big|
\le N\,\mathbb{E}\,\big|\,\frac{1}{d}\sum_{j=1}^d \varphi(X_{l_{d}(t_1(d)),j}^{i})-\pi_{t_1}(\varphi)\,\big|
\end{equation}
Adding and subtracting the term $\Exp\,[\,\varphi(X_{l_{d}(t_1(d)),j}^{i})\,]$ we obtain that the expectation
on the R.H.S.\@ of the above equation is bounded by:
\begin{equation}
\label{eq:two}
\Exp\,\big|\,\frac{1}{d}\sum_{j=1}^d \varphi(X_{l_{d}(t_1(d)),j}^{i})-\mathbb{E}\,[\,\varphi(X_{l_{d}(t_1(d)),j}^{i})\,]\,\big| +
|\,\mathbb{E}\,[\,\varphi(X_{l_{d}(t_1(d)),j}^{i})]-\pi_{t_1}(\varphi)\,| \ .
\end{equation}
For the first term, due to the independency across dimension, considering second moments we get
the upper bound:
\begin{equation*}
\frac{1}{\sqrt{d}}\,\Exp^{1/2}\,\big[\,\big(\,
\varphi(X_{l_{d}(t_1(d)),j}^{i})-\Exp\,[\,\varphi(X_{l_{d}(t_1(d)),j}^{i})\,]\,\big)^2\,\big]\ .
\end{equation*}
As $\varphi\in\mathscr{L}_{V^r}$ with $r\le 1/2$ the argument of the expectation is upper-bounded by $M V(X_{l_{d}(t_1(d)),j}^{i})$ whose expectation is controlled via Lemma \ref{lem:growth}(iii).
Thus the above quantity is $\mathcal{O}(d^{-1/2})$.
For  the second term in (\ref{eq:two}) we can use directly Proposition \ref{prop:prop}
(for time sequences required there selected as $s(d)\equiv \phi_0$ and $t(d)\equiv t_{1}(d)$)
to show also that this term will vanish in the limit $d\rightarrow \infty$.

The general case with $k>1$ is similar, but requires some additional arguments as resampling
eliminates the i.i.d.\@ property. Again, integrating out the resampling index as in (\ref{eq:three})
we are left with the quantity:
\begin{equation*}
\mathbb{E}\,\big|\,
\frac{1}{d}\sum_{j=1}^d \varphi(X_{l_{d}(t_k(d)),j}^{i})-\pi_{t_k}(\varphi)\,\big| \ .
\end{equation*}
Adding and subtracting
$\frac{1}{d}\sum_{j=1}^d\mathbb{E}_{X_{l_{d}(t_{k-1}(d)),j}^{\prime,i}}[\,\varphi(X_{l_{d}(t_k(d)),j}^{i})\,]$
within the expectation, the above quantity is upper bounded by:
\begin{align}
\mathbb{E}\,\big|\,\frac{1}{d}\sum_{j=1}^d &\varphi(X_{l_{d}(t_k(d)),j}^{i})-\frac{1}{d}\sum_{j=1}^d
\mathbb{E}_{X_{l_{d}(t_{k-1}(d)),j}^{\prime,i}}\,[\,\varphi(X_{l_{d}(t_k(d)),j}^{i})\,]\,\big| \quad + \nonumber\\
&\mathbb{E}\,\big|\,\frac{1}{d}\sum_{j=1}^d\mathbb{E}_{X_{l_{d}(t_{k-1}(d)),j}^{\prime,i}}\,[\,\varphi(X_{l_{d}(t_k(d)),j}^{i})\,]-\pi_{t_k}(\varphi)\,\big| \ . \label{eq:five}
\end{align}
For the first of these two terms, due to \emph{conditional} independency across dimension
and exchangeability in the dimensionality index $j$, looking at the second moment we obtain the upper bound:
\begin{equation*}
\frac{1}{\sqrt{d}}
\mathbb{E}^{1/2}\big[\,\big(\varphi(X_{l_{d}(t_k(d)),j}^{i})-
\mathbb{E}_{X_{l_{d}(t_{k-1}(d)),j}^{\prime,i}}\,[\,\varphi(X_{l_{d}(t_k(d)),j}^{i})\,]\,\big)^2\,\big]\ .
\end{equation*}
Since $|\varphi(x)|\,\le M\, V^{r}(x) $ with $r\le\tfrac{1}{2}$, the variable in the expectation above is upper bounded by
$M(V(X_{l_{d}(t_{k}(d)),j}^{\prime,i})+V(X_{l_{d}(t_{k-1}(d)),j}^{\prime,i}))$ which due to Proposition \ref{lemma:technical_lemma2} is bounded in expectation by some $M(N,k)$.
Thus, the first term in (\ref{eq:five}) is $\mathcal{O}(d^{-1/2})$.
The second term in (\ref{eq:five}) now, due to exchangeability over $j$, is upper bounded by
%
$\Exp\,
\big|\,\mathbb{E}_{X_{l_{d}(t_{k-1}(d)),j}^{\prime,i}}\,
[\,\varphi(X_{l_{d}(t_k(d)),j}^{i})\,]-\pi_{t_k}(\varphi)\,\big|$,
%
which again due to Proposition \ref{prop:prop} vanishes in the limit $d\rightarrow \infty$.
\end{proof}

For the Markov chain $X^{i}_{n,j}$ considered on the instances $n_1\le n\le n_2$ we will henceforth use  the notation $\Exp_{\pi_s}\,[\,g(X^{i}_{n,j})\,]$ to specify that we impose the initial distribution $X^{i}_{n_1,j}\sim \pi_s$.

\begin{prop}
\label{lem:extra_new_result}
Assume (A\ref{hyp:A}-\ref{hyp:B}) and that $g\in\mathscr{L}_{V^r}$ with $r\in [0,\tfrac{1}{2})$.
For  $k\in \{1,\dots,m^*\}$, $i\in\{1,\dots,N\}$ and a
sequence $s_k(d)$ with $s_k(d)> t_{k-1}(d)$ and $s_k(d)\rightarrow s_k > t_{k-1}$ we define:
\begin{equation*}
E_{i,j} = \sum_{n}\big\{ \,\mathbb{E}_{X_{l_d(t_{k-1}(d)),j}^{\prime,i}}\,[\,g(X_{n,j}^i)\,]
- \mathbb{E}_{\pi_{t_{k-1}}}\,\big[\,g(X_{n,j}^i)\,\big]\,\big\}\ ,\quad 1\le j\le d \  ,
\end{equation*}
for subscript $n$ in the range
$l_d(t_{k-1}(d))\le n \le l_d(s_{k}(d))-1$.
Then, we have that:
\begin{equation*}
\frac{1}{d}\,\sum_{j=1}^{d}\,E_{i,j} \rightarrow 0\ , \quad \textrm{in }\,\mathbb{L}_1\ .
\end{equation*}
\end{prop}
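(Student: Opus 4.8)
The plan is to reduce the statement to a law of large numbers over the coordinate index for a single \emph{fixed} test function, using a Poisson solution as an intermediary. Since the propagating kernels factor over coordinates, $E_{i,j}$ is a \emph{deterministic} function of the single resampled coordinate $X_{l_d(t_{k-1}(d)),j}^{\prime,i}$: writing $m_0=l_d(t_{k-1}(d))$, $M_0=l_d(s_k(d))$ and using $\mathbb{E}_x[g(X_{n,j})]=k_{m_0+1:n}(g)(x)$ (empty composition $=$ identity), one has $E_{i,j}=\Phi_d\big(X_{m_0,j}^{\prime,i}\big)$ where
\[
\Phi_d(x):=\sum_{n=m_0}^{M_0-1}\big\{\,k_{m_0+1:n}(g)(x)-\pi_{t_{k-1}}k_{m_0+1:n}(g)\,\big\}.
\]
Two facts are recorded at once: $\pi_{t_{k-1}}(\Phi_d)=0$ by construction; and, applying the decomposition of Theorem~\ref{th:decompose} to the constant family $\varphi_s\equiv g$ with $n_0=n_1=m_0$, $n_2=M_0-1$ — the martingale part has zero conditional mean given $X_{m_0}$, and the residual obeys the Theorem~\ref{th:decompose} bound with $p=1$ — one gets $|\Phi_d(x)|\le M\,V^r(x)+M\,\pi_{t_{k-1}}(V^r)\le M'\,V^r(x)$, i.e.\ $\sup_d|\Phi_d|_{V^r}<\infty$ (using $\pi_s(V)\le b/(1-\lambda)$ uniformly in $s$, which follows from the drift condition and $\pi_sk_s=\pi_s$).

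The heart of the argument is to show $|\Phi_d-\widehat g_{t_{k-1}}|_{V^r}\to0$ as $d\to\infty$, with $\widehat g_{t_{k-1}}=\mathcal{P}(g,k_{t_{k-1}},\pi_{t_{k-1}})$. Here I would work from the explicit residual in the proof of Theorem~\ref{th:decompose}: for the constant family the $E$-term vanishes, so $R_{m_0:M_0-1}=D_{m_0:M_0-1}+T_{m_0:M_0-1}$, and hence (the $\pi_n(g)$ terms cancelling, the martingale terms having zero mean under $\delta_x$ and under $\pi_{t_{k-1}}$)
\[
\Phi_d(x)=\big(\mathbb{E}_x[D_{m_0:M_0-1}]-\mathbb{E}_{\pi_{t_{k-1}}}[D_{m_0:M_0-1}]\big)+\big(\mathbb{E}_x[T_{m_0:M_0-1}]-\mathbb{E}_{\pi_{t_{k-1}}}[T_{m_0:M_0-1}]\big).
\]
The key tool is the standard $V^r$-geometric ergodicity of products of kernels satisfying the uniform minorization/drift (A\ref{hyp:A}) (e.g.\ \cite{doucchains}): there is $\rho\in(0,1)$ with $|k_{m_0+1:n}(\psi)(x)-k_{m_0+1:n}(\psi)(y)|\le M\rho^{\,n-m_0}|\psi|_{V^r}\,(V^r(x)+V^r(y))$, whence $|k_{m_0+1:n}(\psi)(x)-\pi_{t_{k-1}}k_{m_0+1:n}(\psi)|\le M\rho^{\,n-m_0}|\psi|_{V^r}\,(V^r(x)+\pi_{t_{k-1}}(V^r))$. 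The $D$-contribution equals $\sum_{n=m_0+1}^{M_0-1}\{k_{m_0+1:n-1}(\widehat g_{\phi_n}-\widehat g_{\phi_{n-1}})(x)-\pi_{t_{k-1}}k_{m_0+1:n-1}(\widehat g_{\phi_n}-\widehat g_{\phi_{n-1}})\}$; since $|\widehat g_{\phi_n}-\widehat g_{\phi_{n-1}}|_{V^r}\le M/d$ (Lemma~\ref{lem:growth}(ii) with (A\ref{hyp:B})) and the $n$-th term carries a factor $\rho^{\,n-1-m_0}$, summing a geometric series gives this contribution $V^r$-norm of order $1/d$. In the $T$-contribution, the two terminal terms $k_{m_0+1:M_0-1}(g)(x)-\pi_{t_{k-1}}k_{m_0+1:M_0-1}(g)$ and the analogue with $\widehat g_{\phi_{M_0-1}}$ have $V^r$-norm $O(\rho^{\,M_0-1-m_0})$, and $M_0-1-m_0=l_d(s_k(d))-l_d(t_{k-1}(d))\to\infty$ because $s_k>t_{k-1}$ (Proposition~\ref{prop:limiting_times}); the remaining term is $\widehat g_{\phi_{m_0}}(x)-\pi_{t_{k-1}}(\widehat g_{\phi_{m_0}})$, which, since $\phi_{m_0}\to t_{k-1}$ and thus $|\widehat g_{\phi_{m_0}}-\widehat g_{t_{k-1}}|_{V^r}\to0$ (Lemma~\ref{lem:growth}(ii)), converges in $V^r$-norm to $\widehat g_{t_{k-1}}(x)-\pi_{t_{k-1}}(\widehat g_{t_{k-1}})=\widehat g_{t_{k-1}}(x)$. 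Collecting the pieces gives the claimed $V^r$-norm convergence.

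Finally I would conclude by writing
\[
\tfrac1d\sum_{j=1}^d E_{i,j}=\tfrac1d\sum_{j=1}^d\widehat g_{t_{k-1}}\big(X_{m_0,j}^{\prime,i}\big)+\tfrac1d\sum_{j=1}^d\big(\Phi_d-\widehat g_{t_{k-1}}\big)\big(X_{m_0,j}^{\prime,i}\big).
\]
For the first term: if $k\ge2$, Proposition~\ref{lem:extra_result} applied with $k-1$ in place of $k$ and the \emph{fixed} function $\widehat g_{t_{k-1}}\in\mathscr{L}_{V^r}$ (Lemma~\ref{lem:growth}(i), $r<\tfrac12$) gives convergence in $\mathbb{L}_1$ to $\pi_{t_{k-1}}(\widehat g_{t_{k-1}})=0$; if $k=1$, then $X_{0,j}^{\prime,i}=X_{0,j}^i$ are i.i.d.\ $\pi_0$ over $j$ with $\pi_0(\widehat g_{\phi_0})=0$ and bounded second moment, so the average is $O(d^{-1/2})$ in $\mathbb{L}_2$. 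For the second term, $\mathbb{E}\big|\tfrac1d\sum_j(\Phi_d-\widehat g_{t_{k-1}})(X_{m_0,j}^{\prime,i})\big|\le|\Phi_d-\widehat g_{t_{k-1}}|_{V^r}\,\mathbb{E}\big[V^r(X_{m_0,1}^{\prime,i})\big]$, which tends to $0$ since the first factor does (the core step) and the second is bounded uniformly in $d$ by Proposition~\ref{lemma:technical_lemma2} (together with $V^r\le V$). This gives $\tfrac1d\sum_j E_{i,j}\to0$ in $\mathbb{L}_1$. The main obstacle is the core step — the $V^r$-norm convergence $\Phi_d\to\widehat g_{t_{k-1}}$; passing through the decomposition of Theorem~\ref{th:decompose} rather than directly through the series for $\Phi_d$ is what keeps the bookkeeping manageable, in particular taming the otherwise dangerous $O(d)$-term-long $D$-sum, which survives only because of the geometric forgetting.
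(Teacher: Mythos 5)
Your argument is, in substance, the paper's own: the same Poisson-equation decomposition from Theorem \ref{th:decompose} (the martingale part killed by taking expectations, the $D$-increments controlled through Lemma \ref{lem:growth}(ii) together with (A\ref{hyp:B}), the terminal terms through the geometric forgetting of the time-inhomogeneous kernel products as in \cite{doucchains}), and the same endgame combining Lemma \ref{lem:growth}(ii), Proposition \ref{lem:extra_result} applied at the previous resampling time, and the moment bound of Proposition \ref{lemma:technical_lemma2}. The only organizational difference is that you package the deterministic estimates into a weighted sup-norm statement, $|\Phi_d-\widehat{g}_{t_{k-1}}|_{V^r}\rightarrow 0$, and only then average over coordinates, whereas the paper bounds the three averaged random sums directly; your intermediate claim is marginally stronger but rests on exactly the same bounds, so nothing is really gained or lost.

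One caveat, at $k=1$: since the paper sets $k_{\phi_0}\equiv\pi_{\phi_0}$ and (A\ref{hyp:B}) is only assumed for $s,t\in(\phi_0,1]$, the increment $\widehat{g}_{\phi_1}-\widehat{g}_{\phi_0}$ is \emph{not} $\mathcal{O}(1/d)$ in $V^r$-norm, so the first term of your $D$-sum is $\mathcal{O}(1)$ and $\Phi_d$ need not converge to $\widehat{g}_{t_0}=g-\pi_{\phi_0}(g)$; its limit is rather the Poisson solution associated with the limiting kernel $k_{\phi_0^{+}}$. This does not endanger the proposition: for $k=1$ the starting points are i.i.d.\ $\pi_{\phi_0}$, $\pi_{\phi_0}(\Phi_d)=0$ and $\sup_d|\Phi_d|_{V^r}<\infty$, so $\tfrac{1}{d}\sum_{j}E_{i,j}$ is a centered i.i.d.\ average and is $\mathcal{O}(d^{-1/2})$ in $\mathbb{L}_2$ directly, with no need for the core step (the paper's own proof glosses over the same point when it bounds every summand of its $A_d$ by $M/d$ times a forgetting term). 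For $k\geq 2$ your argument is complete as written.
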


\begin{proof}
We will make use of the Poisson equation and employ the decomposition (\ref{eq:crude})
used in the proof of Theorem \ref{th:decompose}. In particular, a straight-forward calculation gives that:
\begin{align}
R_{i,j}=  \sum_{n=n_1+1}^{n_2}& \big\{\,
(\,\mathbb{E}_{X_{n_1,j}}-\mathbb{E}_{\pi_{t_{k-1}}}\,)[\,\widehat{g}_n(X_{n-1,j}^i)-\widehat{g}_{n-1}(X_{n-1,j}^i)\,]
\,\big\} \nonumber \\
&
+ (\,\mathbb{E}_{X_{n_1,j}}-\mathbb{E}_{\pi_{t_{k-1}}}\,)[\,g(X_{n_2,j})-\widehat{g}_{n_2}(X_{n_2,j})\,]
+\widehat{g}_{n_1}(X_{n_1,j}) - \pi_{t_{k-1}}(\widehat{g}_{n_1}) \ ,\label{eq:rr}
\end{align}
where $\widehat{g}_{n}=\mathcal{P}(g,k_{n},\pi_{n})$, and we have set:
\begin{equation*}
n_1 =  l_d(t_{k-1}(d))\ ;\quad n_2 = l_d(s_{k}(d))-1 \ ;\, X_{n_1,j}\equiv X_{l_d(t_{k-1}(d)),j}^{\prime,i} \ .
\end{equation*}
It is remarked that the martingale term in the original expansion (\ref{eq:crude}) has expectation $0$, so is not involved in our
manipulations.
We will first deal with the sum in the first line of (\ref{eq:rr}), that is (when taking into account
the averaging over $j$) with:
\begin{equation*}
A_d : =\frac{1}{d}\sum_{j=1}^d \sum_{n=n_1+1}^{n_2}[\,\delta_{_{X_{n_1,j}}}-\pi_{t_{k-1}}\,]
\big(\,(k_{n_1+1:n})[\widehat{g}_{n}-\widehat{g}_{n-1}]\,\big) \ .
\end{equation*}
Now each summand in the above double sum is upper bounded by
\begin{equation*}
\frac{M}{d}\,\|\,[\delta_{X_{n_1,j}}-\pi_{t_{k-1}}]
(k_{n_1+1:n})\,\|_{V^r}\ .
\end{equation*}
To bound this $V^r$-norm one can apply Theorem 8 of \cite{doucchains}; here, under (A\ref{hyp:A}-\ref{hyp:B})
we have that either:
\begin{equation}
\label{eq:bbound}
\|\,[\delta_{X_{n_1,j}}-\pi_{t_{k-1}}]
(k_{n_1+1:n})\,\|_{V^r} \le
M\rho^{n-n_1} V(X_{n_1,j})^r + M'\zeta^{n-n_1}
\end{equation}
for some $\rho,\zeta\in(0,1)$, $0<M,M'<\infty$, when $B_{j-1,n}$ (of that paper) is 1. Or, if $B_{j-1,n}>1$, one has the bound
\begin{equation*}
\|\,[\delta_{X_{n_1,j}}-\pi_{t_{k-1}}]
(k_{n_1+1:n})\,\|_{V^r} \le
M\rho^{\lfloor j^*(n-n_1)\rfloor} V(X_{n_1,j})^r + M'\zeta^{\lfloor j^*(n-n_1)\rfloor}
\end{equation*}
with $j^*$ as the final equation of \cite[pp.~1650]{doucchains}.
(Note that this follows from a uniform in time drift condition which follows from Proposition 4 of \cite{doucchains} (via (A\ref{hyp:A}))). By summing up first over $n$ and then over $j$ (and dividing
with $d$), using also Proposition \ref{lem:extra_result} along the way to control $\sum_{j}V(X_{n_1,j})^r/d$, we have that:
\begin{equation*}
A_{d} \rightarrow 0\ , \quad \textrm{in }\,\mathbb{L}_1\ .
\end{equation*}

A similar use of the bound in (\ref{eq:bbound}) and Proposition \ref{lem:extra_result} can give directly that the second term in (\ref{eq:rr}) will
vanish in the limit when summing up over $j$ and dividing with $d$.
Finally, for the last term in (\ref{eq:rr}): Proposition \ref{lem:extra_result} is not directly applicable here as one has to address the fact that the function $\widehat{g}_{n_1}$ depends on $d$.
Using Lemma \ref{lem:growth} (ii), one can replace
$\widehat{g}_{n_1}\equiv \widehat{g}_{l_d(t_{k-1}(d))}$ by $\widehat{g}_{t_{k-1}}$ and then apply
Proposition \ref{lem:extra_result} and the fact that $t_{k-1}(d)\rightarrow t_{k-1}$
to show that the remainder
term goes to zero in $\mathbb{L}_1$ (when averaging over $j$).
The proof is now complete.

\end{proof}

\begin{proof}[Proof of Theorem  \ref{theorem:limit_adaptive}]

Recall the definition of the ESS:
\begin{equation*}
\textrm{ESS}_{(t_{k-1}(d),s_k(d))}(N)  = \frac{\big(\sum_{i=1}^{N}e^{(1-\phi_0){a}^{i}(d)}\big)^2}{\sum_{i=1}^{N}e^{2(1-\phi_0)a^{i}(d)}}\ .
\end{equation*}
where we have defined:
\begin{equation*}
a^{i}(d) = \frac{1}{d}\sum_{j=1}^{d} \{ \overline{G}_{i,j} + E_{i,j}\}
\end{equation*}
with:
\begin{align*}
\overline{G}_{i,j} &=  \sum_{n} \big\{
\,g(X_{n,j}^i)-\Exp_{X_{l_d(t_{k-1}(d)),j}^{\prime,i}}
[\,g(X_{n,j}^i)\,]\,\big\} \ ; \\
E_{i,j} &= \sum_{n}\big\{ \,\mathbb{E}_{X_{l_d(t_{k-1}(d)),j}^{\prime,i}}[\,g(X_{n,j}^i)\,]
- \mathbb{E}_{\pi_{t_{k-1}}}\,[\,g(X_{n,j}^i)\,]\,\big\}\ ,
\end{align*}
for subscript $n$ in the range
$l_d(t_{k-1}(d))\le n \le l_d(s_{k}(d))-1$.
From Proposition \ref{lem:extra_new_result} we get directly that $\sum_{j=1}^{d}E_{i,j}/d\rightarrow 0$ (in $\mathbb{L}_1$). Thus, we are left with  $\overline{G}_{i,j}$ which corresponds to a martingale under the filtration
we define below. In the below proof, we consider the weak convergence for a single particle. However, it possible to prove a multivariate CLT for all the particles using the Cramer-Wold device. This calculation is very similar to that given below and is hence omitted.

%
%

Consider some chosen particle $i$, with $1\le i \le N$.
For any $d\ge 1$ we define the filtration $\mathcal{G}_{0,d}\subseteq \mathcal{G}_{1,d}\subseteq\cdots \subseteq
\mathcal{G}_{d,d}$ as follows:
\begin{align}
\mathcal{G}_{0,d} &= \sigma(X_{l_d(t_{k-1}(d)),j}^{\prime,l},1\le j\le d,\,1\le l\le N) \ ;\nonumber \\
\mathcal{G}_{j,d} &=  \mathcal{G}_{j-1,d} \bigvee  \sigma(X_{n,j}^{i},\,l_d(t_{k-1}(d))\le n\le l_d(s_{k}(d))-1) \ ,
\quad j \ge 1\ .
\label{eq:filtr}
\end{align}
That is,  $\sigma$-algebra $\mathcal{G}_{0,d}$ contains the information about \emph{all} particles, along
\emph{all} $d$ co-ordinates until (and including) the resampling step; then the rest of the filtration
is build up by adding information for the subsequent trajectory of the various co-ordinates. Critically,
conditionally on $\mathcal{G}_{0,d}$ these trajectories are independent. One can now easily check that
\begin{equation*}
\beta^{i}_{j}(d) = \frac{1}{d} \sum_{k=1}^{j} \overline{G}_{i,k}\ ,\quad 1\le j \le d\ ,
\end{equation*}
is a martingale w.r.t.\@ the filtration in (\ref{eq:filtr}).
Now, to apply the CLT for triangular martingale arrays, we will show that for every $i\in\{1,\dots,N\}$:
\begin{itemize}
\item[a)]{That in $\mathbb{L}_1$:
\begin{equation*}
\lim_{d\rightarrow\infty}
\frac{1}{d^2}\sum_{j=1}^d \mathbb{E}\,[\,\overline{G}_{i,j}^2\,|\,\mathcal{G}_{j-1,d}\,] =
\sigma^2_{t_{k-1}:s_k}
\end{equation*}
}
\item[b)]{For any $\epsilon>0$, that in $\mathbb{L}_1$:
\begin{equation*}
\lim_{d\rightarrow\infty}
\frac{1}{d^2}\sum_{j=1}^d \Exp\,[\,\overline{G}_{i,j}^2\mathbb{I}_{|\overline{G}_{i,j}|\geq \epsilon d}\,|\,\mathcal{G}_{j-1,d}\,] = 0\ .
\end{equation*}
}
\end{itemize}
This will allow us to show that $(1-\phi_0)a^{i}(d)$ will converge weakly to the appropriate normal random variable. Notice, that due to the conditional independency mentioned above and the definition of the filtration in (\ref{eq:filtr}) we in fact have
that:
\begin{align*}
 \mathbb{E}\,[\,\overline{G}_{i,j}^2\,|\,\mathcal{G}_{j-1,d}\,] &\equiv \mathbb{E}_{\Con}\,[\,\overline{G}_{i,j}^2\,] \ ; \\  \Exp\,[\,\overline{G}_{i,j}^2\mathbb{I}_{|\overline{G}_{i,j}|\geq \epsilon d}\,|\,\mathcal{G}_{j-1,d}\,] &\equiv
\Exp_{\Con}[\,\overline{G}_{i,j}^2\mathbb{I}_{|\overline{G}_{i,j}|\geq \epsilon d}\,]\ .
\end{align*}
We make the following definition:
\begin{equation*}
G_{i,j}  = \sum_{n} \big\{ g(X_{n,j}^i) -\pi_{n}(g) \big\} \equiv M_{n_1:n_2,i,j} + R_{n_1:n_2,i,j} \ ,
\end{equation*}
(for convenience we have set $n_1 = l_d(t_{k-1}(d))$ and $n_2=l_d(s_{k}(d))-1$)
with the terms $M_{n_1:n_2,i,j}$ and $R_{n_1:n_2,i,j}$ defined as in Theorem \ref{th:decompose} with the extra
subscripts indicating the number of particle and the co-ordinate.
Notice that $\overline{G}_{i,j}=G_{i,j}-\Exp_{\Con}[\,G_{i,j}\,]$.

We start with a). We first use the fact that:
\begin{equation*}
 \frac{1}{d^2}\sum_{j=1}^d\Exp_{\Con}\,[\,\overline{G}_{i,j}^2\,] -
\frac{1}{d^2}\sum_{j=1}^d \Exp_{\Con}\,[\,G_{i,j}^2\,] \rightarrow 0 \ ,\quad \textrm{in }\,\mathbb{L}_1\ .
\end{equation*}
To see that, simply note that the above difference is equal to:
\begin{equation*}
  \frac{1}{d^2}\sum_{j=1}^d\Exp^2_{\Con}\,[\,G_{i,j}\,] \equiv
\frac{1}{d^2}\sum_{j=1}^d\Exp^2_{\Con}\,[\,R_{i,j}\,] \le \frac{1}{d^2}\sum_{j=1}^{d}V(\Con)^{2r}
\end{equation*}
where we first used the fact that $M_{n_1:n_2,i,j}$ is a martingale (thus, of zero expectation) and then
Theorem \ref{th:decompose} to obtain the bound;
the bounding term vanishes due to Proposition \ref{lemma:technical_lemma2}.
We then have that:
\begin{align}
\frac{1}{d^2}&\sum_{j=1}^d \Exp_{\Con}\,[\,G_{i,j}^2\,]  =
\frac{1}{d^2}\sum_{j=1}^d \Exp_{\Con}\,[\,M_{i,j}^2+R_{i,j}^2 + 2\,M_{i,j}R_{i,j}\,]\nonumber  \\
& =  \frac{1}{d^2}\sum_{j=1}^d \Exp_{\Con}\,[\,M_{i,j}^2\,] + \mathcal{O}(d^{-1/2})\ .
\label{eq:aaa}
\end{align}
To yield the $\mathcal{O}(d^{-1/2})$ one can use the bound
$$\Exp_{\Con}\,[\,R_{i,j}^2\,]\le M\,V(\Con)^{2r}$$ from Theorem \ref{th:decompose},
and then (using Cauchy-Schw\"artz and Theorem \ref{th:decompose}):
\begin{align*}
|\,\Exp_{\Con}\,[\,M_{i,j}R_{i,j}\,]\,|&\le
\Exp^{1/2}_{\Con}\,[\,M_{i,j}^2\,]\cdot\Exp^{1/2}_{\Con}\,[\,R^2_{i,j}\,]\\
&\le M\,\sqrt{d}\,V(\Con)^{2r}\ .
\end{align*}
One then only needs to make use of Proposition \ref{lemma:technical_lemma2} to get (\ref{eq:aaa}).
%
%
%
Now, using the analytical definition of $M_{i,j}$ from Theorem \ref{th:decompose} we have:
\begin{align}
 \frac{1}{d^2}\sum_{j=1}^d \Exp_{\Con}\,[\,M_{i,j}^2\,] &=
\frac{1}{d^2}\sum_{j=1}^d \sum_{n=n_1+1}^{n_2}\big\{\,
\Exp_{\Con}\,[\,\widehat{g}_{n}^2(X^i_{n,j})-k_n^2(\widehat{g}_{n})(X^i_{n-1,j})\,]\,\big\}
\nonumber\\
& =
\frac{1}{d^2}\sum_{j=1}^d \sum_{n=n_1}^{n_2-1}
\Exp_{\Con}\,[\,\varphi_{n+1}(X^i_{n,j})\,] =: A_d \label{eq:eee}
\end{align}
where:
\begin{equation*}
 \varphi_{n} = k_{n}(\widehat{g}_{n}^2)-[k_{n}(\widehat{g}_{n})]^2 \ ;\quad
\widehat{g}_n = \mathcal{P}(g, k_n, \pi_n) \ .
\end{equation*}
Using again the decomposition in Theorem \ref{th:decompose}, but now for $\varphi_n$ as above (which
due to Proposition \ref{prop:hyp} satisfies the requirements of Theorem \ref{th:decompose}),
we get that:
\begin{align*}
\big|\,\Exp_{\Con}\,\big[ \sum_{n=n_1}^{n_2-1}
\varphi_{n+1}(X^i_{n,j}) &- \pi_{n}(\varphi_{n+1})\,\big]\,\big| =
\big|\,\Exp_{\Con}\,[\,R^{'}_{n_1:(n_2-1),i,j}\,]\,\big| \\
&\le M\,V^{2r}(\Con)\ .
\end{align*}
Thus, continuing from (\ref{eq:eee}), and using the above bound and Proposition \ref{lemma:technical_lemma2}, we have:
\begin{equation}
\big|\,A_d - \frac{1}{d} \sum_{n=n_1}^{n_2-1} \pi_{n}(\varphi_{n+1})\,\big| = \mathcal{O}(d^{-1})\ .
\end{equation}
The proof for a) is completed using to the deterministic limit:
\begin{equation*}
\frac{1-\phi_0}{d}\sum_{n=n_1}^{n_2-1} \pi_{n}(\varphi_{n+1}) \rightarrow
\int_{t_{k-1}}^{s_k} \pi_{u}(\widehat{g}_{u}^2-k_{u}(\widehat{g}_{u})^2)du \ .
\end{equation*}

For b), we choose some $\delta$ so that $r(2+\delta)\le 1$, and obtain the following bound:
\begin{align*}
\Exp\,_{\Con}[\,\overline{G}_{i,j}^{2+\delta}\,] &\le
M\,\Exp\,_{\Con}[\,G_{i,j}^{2+\delta}\,] \\ &\le
M\,\Exp\,_{\Con}[\,M_{i,j}^{2+\delta}+R_{i,j}^{2+\delta}\,] \\
&\le M V(\Con)^{r(2+\delta)}\,d^{1+\frac{\delta}{2}} \ ,
\end{align*}
where for the last inequality we used the growth bounds in Theorem \ref{th:decompose}.
Also using, first, Holder inequality, then, Markov inequality and, finally, the above bound we find that:
\begin{align*}
\Exp_{\Con}[\,\overline{G}_{i,j}^2\mathbb{I}_{|\overline{G}_{i,j}|\geq \epsilon d}\,] &\le
\big(\,\Exp_{\Con}[\,\overline{G}^{2+\delta}_{i,j}\,]\,\big)^{\frac{2}{2+\delta}}
\cdot \big(\,\mathbb{P}_{\Con}\,[\,|\overline{G}_{i,j}|^{2+\delta}\geq (\epsilon d)^{2+\delta}\,]\,\big)^{\frac{\delta}{2+\delta}}\\
&\le M\, V(\Con)^{2r}\,d\,\cdot \,
\frac{V(\Con)^{r\delta}d^{\delta/2}}{(\epsilon\,d)^{\delta}} \ .
\end{align*}
Thus, we also have:
\begin{align*}
\frac{1}{d^2}\sum_{j=1}^d \Exp_{\Con}\,[\,\overline{G}_{i,j}^2\mathbb{I}_{|\overline{G}_{i,j}|\geq \epsilon d}\,]
\le M\, d^{-\delta/2}\,\frac{1}{d}\,\sum_{j=1}^{d}V(\Con)^{r(2+\delta)}\ .
\end{align*}
Due to Proposition \ref{lemma:technical_lemma2}, this bound proves part b).
%
%
%


\end{proof}

\begin{proof}[Proof of Theorem  \ref{theorem:error_adaptive}]
The proof is similar to that of Theorem \ref{theorem:limit_adaptive} (as the final resampling time is strictly less than 1) and Theorem \ref{theo:mc_error}; it is omitted for brevity.
\end{proof}

\subsection{Stochastic Times}\label{appendix:stoch_times}

\begin{proof}[Proof of Theorem \ref{theo:stoch_times}]
Our proof will keep $d$ fixed until the point at which we can apply
Theorem~\ref{theorem:limit_adaptive}. Conditionally on the chosen $\{a_k\}$ we have:
\begin{align*}
\mathbb{P}\,&[\,
\Omega\setminus\Omega_{d}^N\,] \leq  
\sum_{k=1}^{m^*(\delta)}\sum_{s \in G_{\delta}\cap [\,t_{k-1}^\delta(d),t_{k}^\delta(d)\,]} \mathbb{P}\,\big[\,\big|\tfrac{1}{N}\,\textrm{ESS}_{(t_{k-1}^{\delta}(d),s)}(N)-\textrm{ESS}_{(t_{k-1}^{\delta}(d),s)}\big| \geq \upsilon \big|\textrm{ESS}_{(t_{k-1}^{\delta}(d),s)}-
a_{k}\big|\,\big]\ .
\end{align*}
Define
$$
\epsilon(d) := \inf_{n}\inf_{s}|\,\textrm{ESS}_{(t_{k-1}^{\delta}(d),s)}-
a_{k}\,| \ ;
$$
 we remark $\lim_{d\rightarrow\infty}\epsilon(d)=\epsilon>0$ (with probability one).
Hence we have:
\begin{align*}
\mathbb{P}\,[\,\Omega\setminus\Omega_{d}^{N}\,] \leq 
\sum_{k=1}^{m^*(\delta)}\sum_{s\in G_{\delta} [\,t_{k-1}^\delta(d),t_{k}^\delta(d)\,]} \mathbb{P}\,\big[\,\big|\tfrac{1}{N}\,\textrm{ESS}_{(t_{k-1}^{\delta}(d),s)}(N)-\textrm{ESS}_{(t_{k-1}^{\delta}(d),s)}\big| \geq \upsilon \epsilon(d)\big|\,\big]\ .
\end{align*}
Application of the Markov inequality yields that:
$$
\mathbb{P}\,[\,
\Omega\setminus\Omega_{d}^N\,] \leq
\frac{m^*(\delta)\delta}{\upsilon \epsilon(d)}
\max_{k,s} \mathbb{E}\,\big[\,\big|\tfrac{1}{N}\,\textrm{ESS}_{(t_{k-1}^{\delta}(d),s)}(N)
-\textrm{ESS}_{(t_{k-1}^{\delta}(d),s)}\big|\,\big]\ .
$$
Since $k,s$ lie in a finite set and $\epsilon>0$, we need only deal with the expectation as $d$ grows.  Note, in the expectation, the case $s=t_k^\delta(d)$ is not of interest; ESS is constant and hence lower-bounded all other cases.

Application of Theorem \ref{theorem:limit_adaptive} now yields:
\begin{equation*}
\lim_{d\rightarrow\infty} \mathbb{E}\,\big[\,\big|\tfrac{1}{N}\textrm{ESS}_{(t_{k-1}^{\delta}(d)),s)}(N)-
\textrm{ESS}_{(t_{k-1}^{\delta}(d),s)}\, \big]
=
\mathbb{E}\,\big[\,\big|\tfrac{1}{N}\,\textrm{ESS}_{(t_{k-1}^{\delta},s)}(N)-
\textrm{ESS}_{(t_{k-1}^{\delta},s)}\big|\,\big]
\end{equation*}
where
\begin{equation*}
\textrm{ESS}_{(t_{k-1}^{\delta},s)}(N) =   \frac{(\sum_{j=1}^N \exp\{X_j^{k}\})^2}{
\sum_{j=1}^N \exp\{2X_j^{k}\}}\ ;\quad
\textrm{ESS}_{(t_{k-1}^{\delta},s)} =  \exp\big\{-\sigma^2_{t_{k-1}^{\delta}:s}\big\}\ ,
\end{equation*}
with $X_j^{k}\stackrel{\textrm{i.i.d.}}{\sim}\mathcal{N}(0,\sigma^2_{t_{k-1}^{\delta}:s})$.
We set:
$$
\alpha_j^k=\exp\{X_j^{k}\}\ ;\quad
\beta_j^k=\exp\{2X_j^{k}\}\ ;\quad
\alpha^k=\exp\{\tfrac{1}{2}\,\sigma^2_{t_{k-1}^{\delta}:s}\}\ ;\quad
\beta^k=\exp\{2\sigma^2_{t_{k-1}^{\delta}:s}\}\ .
$$
Then, we are to bound:
$$
\mathbb{E}\,\bigg[\,\bigg|\frac{(\frac{1}{N}\sum_{j=1}^N\alpha_j^k)^2}{\frac{1}{N}\sum_{j=1}^N\beta_j^k}
-\frac{(\alpha^k)^2}{\beta_k}\bigg|\,\bigg]\ .
$$
We have the decomposition
$$
\frac{(\frac{1}{N}\sum_{j=1}^N\alpha_j^k)^2}{\frac{1}{N}\sum_{j=1}^N\beta_j^k}
-\frac{(\alpha^k)^2}{\beta_k} =
\bigg(\frac{(\frac{1}{N}\sum_{j=1}^N\alpha_j^k)^2}{\beta^k\frac{1}{N}\sum_{j=1}^N\beta_j^k}\bigg)
\bigg[\beta^k-\frac{1}{N}\sum_{j=1}^N\beta_j^k\bigg]
+ \frac{1}{\beta^k}\bigg[(\frac{1}{N}\sum_{j=1}^N\alpha_j^k)^2-(\alpha^k)^2\bigg].
$$
For the first term of the R.H.S. in the above equation, as ESS divided by $N$ is upper-bounded by 1, we can use Jensen and the Marcinkiewicz-Zygmund inequality. For the second term, via the  relation $x^2-y^2=(x+y)(x-y)$ and Cauchy-Schw\"artz, one can use the same inequality to conclude that for some finite $M(k,\delta,s)$:
$$
\mathbb{E}\,\big[\,\big|\tfrac{1}{N}\,\textrm{ESS}_{(t_{k-1}^{\delta},s)}(N)-
\textrm{ESS}_{(t_{k-1}^{\delta},s)}\big|\,\big] \leq \frac{M(k,\delta,s)}{\sqrt{N}}\ .
$$
Thus, we have proven that:
$
\lim_{d\rightarrow\infty}\mathbb{P}\,[\,\Omega\setminus\Omega_{d}^N\,]
\leq \frac{M(m^*(\delta))}{\sqrt{N}}
$
as required.
\end{proof}

\section{Verifying the Assumptions}\label{app:verify}

\begin{proof}[Proof of Proposition \ref{prop:verify}]

We start with (A\ref{hyp:A})(i)-(ii); to establish uniform (in $s$) drift and minorization conditions for the kernel $k_s$. The proof is standard and included for completeness.

It is first noted that, for any $\delta_q>0$, if $|x-y|<\delta_q$:
\begin{equation}
q_s(x,y) \geq \frac{\phi_0^{1/2}}{\sqrt{2\pi}}\exp\bigg\{-\frac{s}{2}\delta_q^2\bigg\} \geq \frac{\phi_0^{1/2}}{\sqrt{2\pi}}\exp\bigg\{-\frac{1}{2}\delta_q^2\bigg\}\ .
\label{eq:q_lower_bound}
\end{equation}
This property will be used below.
To establish the minorization, one can follow the proof of Theorem 2.2 of \cite{roberts3} to show that for any $x$,  with $y\in B(x,\delta_q/2)$ (the open ball, centered $x$ and of radius $\delta_{q/2}$), $A\in\mathscr{B}(\mathbb{R})$,
$A\subseteq B(x,\delta_q/2)$
$$
k_s(y,A)  \geq  \eta(x,\delta_q/2) \int_A (q_s(z,y) \wedge q_s(y,z)) dz
 \geq  \eta(x,\delta_q/2) \epsilon_q \int_A dz
$$
where
$
\eta(x,\delta_q/2) = \inf_{x\in B(x,\delta_q/2)} \pi_1(x)/\|\pi_{\phi_0}\|_{\infty}
$
and $\delta_q$ is as \eqref{eq:q_lower_bound}, $\epsilon_q$ as the RHS of the inequality in \eqref{eq:q_lower_bound}.
Hence, we have the uniform minorization condition.

To prove the drift, we do not require it hold for $s=\phi_0$ as, in the algorithm,
we sample exactly from $\pi_{\phi_0}$. None-the-less, by our assumptions there exist a drift condition for $k_{\phi_0}$ (a symmetric normal random walk Metropolis-kernel of invariant $\pi_{\phi_0}$); write the parameters $\lambda$, $b$. Now, for any
$s\in(\phi_0,1]$, via Lemma 5 of \cite{andrieu2}
and using that for any
$x,y$,
$
\frac{q_s(x,y)}{q_{\phi_0}(x,y)} \leq \frac{1}{\sqrt{\phi_0}}
$
one has
$$
k_s(V)(x) \leq \frac{1}{\sqrt{\phi_0}} (k_{\phi_0}(V)(x)-V(x)) + V(x)
$$
where
\begin{equation}
V(x)=\|e^{\phi_0g}\|_{\infty}^{1/2}/e^{\frac{\phi_0}{2}g(x)}
\label{eq:lyapunov_function}.
\end{equation}
Now one can easily find a $\bar{c}\in[(1-\phi_0^{-1/2})\wedge(-\lambda/\sqrt{\phi_0}),1-\lambda\phi_0^{-1/2}]$ such that
$
k_s(V)(x) \leq \widetilde{\lambda} V(x) + \widetilde{b}\,\mathbb{I}_C(x)
$
with $\widetilde{\lambda}\in(0,1)$, $\widetilde{b}<\infty$. Hence, the uniform drift condition is verified.
(A\ref{hyp:A}) (iii) can be verified in a similar manner to e.g.~\cite{doucchains}  and is omitted.

Now to (A\ref{hyp:B}), which is a little more complex. Recall, we want to establish that there exist an $M<\infty$ such that
for any $s,t\in(\phi_0,1]$,
$
|||k_s-k_t|||_V  \leq M |s-t|.
$
For simplicity, we will consider only the increment of proposal (via change of variables), so $q_s$ is a zero mean normal density, with variance $1/s$. For any fixed $x\in \mathbb{R}$ $q_s$ is a bounded-continuous function of $s\in[\phi_0,1]$ and further, the first derivative w.r.t.~$s$ is upper-bounded
by
$
\frac{1}{2\sqrt{2\pi\phi_0}}e^{-\phi_0 x^2/2}
$
hence it follows that for any $x\in\mathbb{R}$, $s,t\in[\phi_0,1]$:
\begin{equation}
|q_s(x)-q_t(x)| \leq \bigg(\frac{1}{2\sqrt{2\pi\phi_0}}e^{-\phi_0 x^2/2}\bigg) |s-t| \ .
\label{eq:q_is_lipschitz}
\end{equation}
Now central to our proof is the consideration of the acceptance probability, which is
$
\alpha_s(x,z) = 1\wedge \exp\{s(g(x+z)-g(x))\} \ .
$
Let
\begin{equation}
A(x) = \{z:g(x+z)-g(x)>0\}\label{eq:a(x)_def}
\end{equation}
then if $z\in A(x)$, $\alpha_s(x,z)=1$.
We begin by considering the acceptance part of the kernel. The difficult issue is when $z\in A(x)^c$ which is dealt with now:
\begin{equation}
\int_{A(x)^c} \varphi(x+z) \exp\{-sG(x,z)\}q_s(z)dz-
\int_{A(x)^c} \varphi(x+z) \exp\{-tG(x,z)\}q_t(z)dz
\label{eq:cont_kernel}
\end{equation}
where $\varphi\in\mathscr{L}_V$.
Now for any fixed $x$, $z\in A(x)^c$ one has that
\begin{equation}
|\exp\{-sG(x,z)\}-\exp\{-tG(x,z)\}|
\leq \bigg(G(x,z)e^{-\frac{\phi_0}{2}G(x,z)}\bigg)|s-t|\label{eq:acc_prob_cont}
\end{equation}
for every $s,t\in[\phi_0,1]$.
Then, returning to \eqref{eq:cont_kernel}, it can be decomposed into the sum of
\begin{equation}
\int_{A(x)^c} \varphi(x+z)[\exp\{-sG(x,z)\}-\exp\{-tG(x,z)\}]q_s(z)dz
\label{eq:cont_kernel1}
\end{equation}
and
\begin{equation}
\int_{A(x)^c} \varphi(x+z)\exp\{-tG(x,z)\}[q_s(z)-q_t(z)]dz\ .
\label{eq:cont_kernel2}
\end{equation}
First consider \eqref{eq:cont_kernel1}. 
Applying  \eqref{eq:acc_prob_cont}, it follows that \eqref{eq:cont_kernel1} is upper-bounded by
$$
C_{\phi_0}|\varphi|_{V}|s-t|\int_{A(x)^c}
e^{-\frac{\phi_0}{2}g(x+z)} G(x,z)e^{-\frac{\phi_0}{2}G(x,z)}q_s(z)dz
$$
where $C_{\phi_0}$ is associated to the Lyapunov function \eqref{eq:lyapunov_function}.
Now as
$
e^{-\frac{\phi_0}{2}g(x+z)}e^{-\frac{\phi_0}{2}G(x,z)} = e^{-\frac{\phi_0}{2}g(x)}
$
which is controlled by $V(x)$ and by assumption \eqref{eq:assumption_a2}, $\int_{A(x)^c} G(x,z) q_s(z)dz$ is dealt with; hence
\eqref{eq:cont_kernel1} divided by $V(x)$ is upper-bounded by
$
C_{\phi_0}|\varphi|_{V}|s-t|C^*.
$
Our next task is \eqref{eq:cont_kernel2}. Applying \eqref{eq:q_is_lipschitz}, it is upper-bounded by
\begin{align*}
C_{\phi_0}|\varphi|_{V}&\int_{A(x)^c}e^{-\frac{\phi_0}{2}g(x+z)}
e^{-tG(x,z)}|s-t|
\frac{1}{2\sqrt{2\pi\phi_0}}e^{-\phi_0 z^2/2}
dz =\\
&C_{\phi_0}|\varphi|_{V}e^{-tg(x)}\int_{A(x)^c}e^{(t-\frac{\phi_0}{2})g(x+z)}
|s-t|
\frac{1}{2\sqrt{2\pi\phi_0}}e^{-\phi_0 z^2/2}
dz\ ;
\end{align*}
on dividing by the Lyapunov function, we are to deal with the expression
$
\exp\{-(t-\frac{\phi_0}{2})G(x,z)\}.
$
Now, $t>\phi_0/2$ and for any $x$, $z\in A(x)^c$, one has that $G(x,z)>0$ hence this latter expression is upper-bounded by 1. This leaves the term
$
\int_{A(x)^c}\frac{1}{2\sqrt{2\pi\phi_0}}e^{-\phi_0 z^2/2}dz
$
which is finite. Hence, putting together the above arguments, we have shown that there exists an $M<\infty$ such that for any $s,t\in[\phi_0,1]$, $x\in\mathbb{R}$ one has
$$
\big|\int_{A(x)^c} \varphi(x+z) e^{-sG(x,z)}q_s(z)dz-
\int_{A(x)^c} \varphi(x+z) e^{-tG(x,z)}q_t(z)dz\big|\,/\, V(x)
\leq M |s-t|\ ,
$$
where we have applied \eqref{eq:q_is_lipschitz}.

Turning to the acceptance part of the kernel on $A(x)$, we have
$$
\int_{A(x)}\varphi(x+z)[q_s(z)-q_t(z)]dz \leq C_{\phi_0}|\varphi|_V \int_{A(x)}
V(x+z) |s-t|\frac{1}{2\sqrt{2\pi\phi_0}}e^{-\phi_0 z^2/2}dz\ .
$$
As $V(x+z)\leq V(x)$ on $A(x)$, it follows that the term of interest is upper-bounded by
$
M |s-t| V(x)
$
for some $M<\infty$.
Hence the acceptance part of the kernel, divided by $V$, is upper bounded by $M |s-t|$.
In the rejection part of the kernel, we have to control:
$$
\varphi(x)\bigg[
\int_{A(x)^c}[\alpha_t(x,z)-\alpha_t(x,z)]q_t(z)dz
+
\int_{A(x)^c}[q_t(z)-q_s(z)]\alpha_s(x,z)dz
+
\int_{A(x)}[q_t(z)-q_s(z)]dz
\bigg]\ .
$$
Now, as $\varphi$ is controlled by $V$, we need to consider the continuity of the terms in the bracket. The latter two terms, via \eqref{eq:q_is_lipschitz}, are upper-bounded by $M|s-t|$. The first term is upper-bounded by
$
|s-t|\int_{A(x)^c} G(x,z) e^{-\phi_0/2 G(x,z)}q_t(z)dz
$
using \eqref{eq:acc_prob_cont}. As $e^{-\phi_0/2 G(x,z)}\leq 1$, we can use
\eqref{eq:assumption_a2} to complete the argument.
\end{proof}


\begin{thebibliography}{99}

\bibitem{andrieu1}
{\sc Andrieu,} C. \& {\sc Moulines} \'E.~(2006). On the ergodicity
properties of some adaptive MCMC algorithms. {\it Ann. Appl.
Prob.}, \textbf{16},
1462--1505.

\bibitem{andrieu_adap_rev}
{\sc Andrieu,} C. \& {\sc Thoms} J.~(2008).
A tutorial on adaptive MCMC. {\it Statist. Comp.}, \textbf{18}, 343--373.

\bibitem{andrieu2}
{\sc Andrieu,} C., {\sc Breyer}, L. \& {\sc Doucet}, A.~(2001). Convergence of simulated annealing using Foster-Lyapunov criteria. {\it J. Appl.
Prob.}, \textbf{4},
975--994.

\bibitem{andr:10}
{\sc Andrieu,} C., {\sc Jasra}, A., {\sc Doucet}, A. \& {\sc Del Moral} P.~(2011). On non-linear Markov chain Monte Carlo. {\it Bernoulli},
{\bf 17},  987--1014.


\bibitem{atch:09}
{\sc Atchad\'e}, Y.~(2009). A strong law of large numbers for martingale arrays. Technical Report, University of Michigan.

\bibitem{baer}
{\sc Baehr}, C, \& {\sc Pannekoucke}, O.~(2010).
Some issues and results on the EnKF and particle filters for meteorological models. In \emph{Chaotic Systems: Theory and Applications}, C. H. Skiadas \& I. Dimotikalis Eds, 27--24, World Scientific: London.

\bibitem{bedard}
{\sc B\'edard}, M.~(2007).
Weak convergence of Metropolis algorithms for non-i.i.d.~target distributions. {\it Ann. Appl. Prob.}, \textbf{17}, 1222--1244.

\bibitem{berger}
{\sc Berger}, E.~(1986).
Asymptotic behaviour of a class of stochastic approximation procedures. {\it Probab. Theory. Rel. Fields}, {\bf 71}, 517--552.

\bibitem{bengtsson}
{\sc Bengtsson}, T., {\sc Bickel}, P., \& {\sc Li}, B.~(2008).
Curse-of-dimensionality revisited: Collapse of the particle filter in very large scale systems.
In \emph{Essays in Honor of David A. Freeman},
D. Nolan \& T. Speed, Eds, 316--334, IMS.



\bibitem{beskos1}
{\sc Beskos}, A., {\sc Crisan}, D., {\sc Jasra}, A. \& {\sc Whiteley}, N.~(2011).
Error bounds and normalizing constants for sequential Monte carlo. Technical Report, Imperial College London.

\bibitem{beskos}
{\sc Beskos}, A., {\sc Pillai}, N., {\sc Roberts}, G. O., {\sc Sanz-Serna}, J. M.~\& {\sc Stuart} A.~(2012). Optimal tuning of the hybrid Monte Carlo algorithm. \emph{Bernoulli}, (to appear).

\bibitem{besk}
{\sc Beskos}, A., {\sc Roberts}, G. \& {\sc Stuart}, A.~(2009).
Optimal scalings for local Metropolis--Hastings chains on nonproduct targets in high dimensions.
{\it Ann. Appl. Probab.}, \textbf{19}, 863--898.

\bibitem{besko}
{\sc Beskos}, A. \& {\sc Stuart}, A.~(2009).
MCMC methods for sampling function space.
In \emph{Proceedings of the International Congress of Industrial and Applied Mathematics}, Zurich, 2007.

\bibitem{bickel}
{\sc Bickel}, P., {\sc Li}, B. \& {\sc Bengtsson}, T.~(2008).
Sharp failure rates for the bootstrap particle filter in high dimensions.
In \emph{Pushing the Limits of Contemporary Statistics},
B. Clarke \& S. Ghosal, Eds, 318--329, IMS.

\bibitem{bill:99}
{\sc Billingsley}, P.~(1999).\emph{Convergence of Probability Measures}. 2nd edition. Wiley:New York.

\bibitem{brey:04}
{\sc Breyer}, L., {\sc Piccioni}, M. \& {\sc Scarlatti}, S.~(2004).
Optimal scaling of MALA for non-linear regression. {\it Ann. Appl. Probab.}, \textbf{14}, 1479--1505.

\bibitem{brey:00}
{\sc Breyer}, L. \& {\sc Roberts}, G.~(2000).
From Metropolis to diffusions: Gibbs states and optimal scaling. {\it Stochastic Process. Appl.}, \textbf{90}, 181--206.

\bibitem{cappe}
{\sc Capp\'e}, O, {\sc Gullin}, A., {\sc Marin}, J. M. \& {\sc Robert}, C. P.~(2004). Population Monte Carlo. \emph{J. Comp. Graph. Stat.},  {\bf 13}, 907--925.

\bibitem{cerou1}
{\sc C\'erou}, F., {\sc Del Moral}, P. \& {\sc Guyader}, A.~(2011).
A non-asymptotic variance theorem for un-normalized Feynman-Kac particle models. \emph{Ann. Inst. Henri Poincare}, {\bf 47}.

\bibitem{chopin1}
{\sc Chopin}, N. (2002). A
sequential particle filter for static models. \emph{Biometrika}, \textbf{89}, 539--552.


\bibitem{chopin}
{\sc Chopin}, N. (2004). Central limit theorem and its
application to Bayesian inference. \emph{Ann
Statist.}, \textbf{32}, 2385--2411.


\bibitem{chopin2}
{\sc Chopin}, N.~\& {\sc Jacob}, P.~(2011).
Free energy Sequential Monte Carlo, application to mixture modelling.
\emph{Bayesian Statistics 9}, 91-118.

\bibitem{chopin_smc}
{\sc Chopin}, N., {\sc Jacob}, P. \& {\sc Papaspiliopoulos}, O.~(2011).
SM$\textrm{C}^2$: A sequential Monte Carlo algorithm with particle Markov chain Monte Carlo updates. Technical Report, ENSAE.


\bibitem{crisroz}
{\sc Crisan}, D.~\& {\sc Rozovsky}, B.~(2011).
\emph{The Oxford Handbook of Nonlinear Filtering}
OUP:Oxford.

\bibitem{dean}
{\sc Dean}, T. \& {\sc Dupuis}, P.~(2009).
Splitting for rare event simulation: A large deviations approach to design and analysis. {\it Stoch. Proc. Appl.}, {\bf 119}, 562--587.

\bibitem{delmoral}
{\sc Del Moral}, P.~(2004). \textit{Feynman-Kac Formulae: Genealogical and
Interacting Particle Systems with Applications}. Springer: New York.

\bibitem{delm:06}
{\sc Del Moral,} P., {\sc Doucet}, A. \& {\sc Jasra}, A.~(2006).
Sequential Monte Carlo samplers.
\emph{J.~R.~Statist.} \emph{Soc. B}, {\bf 68}, 411--436.

\bibitem{delmoralabc}
{\sc Del Moral,} P., {\sc Doucet}, A. \& {\sc Jasra}, A.~(2012).
An adaptive sequential Monte Carlo method for approximate Bayesian computation.
\emph{Statist. Comp.} (to appear).


\bibitem{delmoral_resampling}
{\sc Del Moral}, P., {\sc Doucet}, A. \& {\sc Jasra}, A.~(2012).
On adaptive resampling procedures for sequential Monte Carlo methods. {\it Bernoulli},
{\bf 18}, 252-278.

\bibitem{delmoral_coal}
{\sc Del Moral,} P., {\sc Patras}, F. \& {\sc Rubenthaler}, S.~(2009).
Coalescent tree based functional representations for some
Feynman-Kac particle models.
\emph{Ann. Appl. Probab.}, {\bf 19}, 778--825.


\bibitem{douc}
{\sc Douc}, R. \& {\sc Moulines}, \'E.~(2008).
Limit theorems for weighted samples with applications to sequential Monte Carlo methods. {\it Ann. Statist.}, \textbf{36}, 2344--2376.

\bibitem{doucchains}
{\sc Douc}, R., {\sc Moulines}, \'E. \& {\sc Rosenthal}, J. S.~(2004).
Quantitative bounds on convergence of time-inhomogeneous Markov Chains
\emph{Ann. Appl. Probab}., {\bf 14}, 1643--1665.

\bibitem{doucet}
{\sc Doucet,} A., {\sc De Freitas}, J.~F.~G. \& {\sc Gordon},
N. J.~(2001). \emph{Sequential Monte Carlo Methods in Practice}. Springer:
New York.

\bibitem{gelman}
{\sc Gelman}, A. \& {\sc Meng}, X.~L..~(1998).
Simulating normalizing constants: From importance sampling to bridge sampling to path sampling. {\it Statist. Sci.}, \textbf{13}, 163--185.

\bibitem{hall:80}
{\sc Hall}, P. \& {\sc Heyde}, C.~(1980). \emph{Martingale Limit Theory and its Application}. Academic Press: New York.


\bibitem{heine}
{\sc Heine, K. \& Crisan, D.} (2008) \emph{Uniform approximations of discrete-time filters}, \emph{Adv. Appl. Probab.} {\bf 40}, 979--1001.

\bibitem{jarner}
{\sc Jarner}, S. F. \& {\sc Hansen} E.~(2000).
Geometric ergodicity of Metropolis algorithms. {\it Stoch. Proc. Appl.},
{\bf 85}, 341--361.

\bibitem{jarzynski}
{\sc Jarzynski}, C., 1997. Nonequilibrium equality for free energy differences. \emph{Phys. Rev. Lett.}, {\bf 78}, 2690--2693.

\bibitem{jasra_adaptive}
{\sc Jasra}, A., {\sc Stephens,} D. A., {\sc Doucet}, A. \& {\sc Tsagaris}, T.~(2011).
Inference for L\'evy driven stochastic volatility models via adaptive sequential Monte Carlo. {\it Scand. J. Statist.}, {\bf 38}, 1--22.

\bibitem{jasra_static}
{\sc Jasra}, A., {\sc Stephens,} D. A., \& {\sc Holmes}, C. C.~(2007).
On population-based simulation for static inference. {\it Statist. Comp.}, {\bf 17}, 263--269 .


\bibitem{kong}
{\sc Kong}, A., {\sc Liu}, J. S. \& {\sc Wong}, W. H.~(1994). Sequential imputations and Bayesian missing data problems. \emph{J. Amer. Statist. Assoc.}, \textbf{89}, 278--288.

\bibitem{kunsch}
{\sc K\"unsch}, H.~(2005). Recursive Monte Carlo filters: Algorithms and theoretical analysis. \emph{Ann. Statist.}, \textbf{33}, 1983--2021.


\bibitem{lee_gpu}
{\sc Lee}, A., {\sc Yau}, C., {\sc Giles}, M., {\sc Doucet}, A. \& {\sc Holmes}, C. C.~(2010). On the utility of graphics cards to perform massively parallel implementation of advanced Monte Carlo methods. \emph{J. Comp. Graph. Statist.}, {\bf 19}, 769--789.

\bibitem{liu}
{\sc Liu}, J. S. (2001). \emph{Monte Carlo Strategies in Scientific Computing}.
Springer: New York.

\bibitem{andrew:1}
{\sc Mattingly,} J., {\sc Pillai}, N. \& {\sc Stuart}, A.~(2011).
Diffusion limits of the Random walk Metropolis algorithm in High Dimensions.
\emph{Ann. Appl. Probab.} (to appear).

\bibitem{mcle:74}
{\sc McLeish}, D. L.~(1974). Dependent central limit theorems and invariance principles. \emph{Ann. Probab}, {\bf 2}, 620--628.

\bibitem{meyn}
{\sc Meyn}, S. \& {\sc Tweedie}, R. L.~(2009). \emph{Markov Chains and Stochastic Stability}. 2nd edition, Cambridge: CUP.

\bibitem{neal:01}
{\sc Neal}, R. M. (2001). Annealed importance sampling. \emph{Statist. Comp.}, \textbf{11}, 125--139.

\bibitem{andrew:2}
{\sc Pillai,} N., {\sc Stuart}, A. \& {\sc Thiery}, A.~(2011).
Optimal scaling and diffusion limits for the Langevin algorithm in high dimensions.
\emph{Ann. Appl. Probab.} (to appear).

\bibitem{roberts3}
{\sc Roberts}, G. O. \& {\sc Tweedie} R. L.~(1996).
Geometric convergence and central limit theorems for multidimensional
Hastings and Metropolis algorithms. {\it Biometrika},
{\bf 83}, 95--110.

\bibitem{roberts1}
{\sc Roberts}, G. O., {\sc Gelman}, A.  \& {\sc Gilks} W. R.~(1997).
Weak convergence and
optimal scaling of random walk Metropolis algorithms. {\it Ann. Appl. Probab.},
{\bf 7}, 110--120.

\bibitem{roberts2}
{\sc Roberts}, G.O. \& {\sc Rosenthal}, J.S.~(1998). Optimal scaling of discrete approximations to Langevin diffusions.
\emph{J. R. Statist. Soc. Ser. B}, {\bf 60}, 255-268.

\bibitem{rudin}
{\sc Rudin}, W. (1976). \emph{Principles of Mathematical Analysis}.
International Series in Pure and Applied Mathematics, McGraw-Hill Book Co.: New York.

\bibitem{shaefer}
{\sc Sch\"afer}, C. \& {\sc Chopin}, N.~(2012).
Adaptive Monte Carlo on binary sampling spaces. \emph{Statist. Comp.}, (to appear).

\bibitem{shiryaev}
{\sc Shiryaev}, A. (1996). {\it
Probability}, Springer: New York.

\bibitem{snyder}
{\sc Snyder}, C., {\sc Bengtsson}, T., {\sc Bickel}, P., \& {\sc Anderson}, J.~(2008). Obstacles to high-dimensional particle filtering. \emph{Month. Weather Rev.}, {\bf 136}, 4629--4640.

\bibitem{whiteley}
{\sc Whiteley}, N.~(2012). Sequential Monte Carlo samplers: Error bounds and insensitivity to initial conditions. \emph{Stoch. Anal. Appl.} (to appear).

\bibitem{whiteley2}
{\sc Whiteley}, N.~(2011). Stability properties of some particle filters. Technical Report, University of Bristol.


\bibitem{whiteley1}
{\sc Whiteley}, N., {\sc Kantas}, N, \& {\sc Jasra}, A.~(2012). Linear variance bounds for particle approximations of time homogeneous Feynman-Kac formulae.
\emph{Stoch. Proc. Appl.}, {\bf 122}, 1840--1865.

\bibitem{with:81}
{\sc Withers}, C.~(1981). Central limit theorems for dependent variables I. \emph{Z. Wahrsch. Verw. Gebiete}, {\bf 57}, 509--534.

\end{thebibliography}

\begin{small}

\end{small}

\end{document}